\documentclass[11p,reqno]{amsart}
\usepackage[colorlinks=true, pdfstartview=FitV, linkcolor=blue, citecolor=blue, urlcolor=blue]{hyperref}
\usepackage[usenames,dvipsnames]{xcolor}
\usepackage{latexsym}
\usepackage{amssymb}
\usepackage{amscd}
\usepackage{bm}
\usepackage{amsmath,amsthm}
\usepackage{setspace}
\usepackage{amssymb,amsthm,amsmath,mathrsfs,bm}
\usepackage{graphicx}
\usepackage{tabularx}
\usepackage{subfigure}
\usepackage{appendix}
\usepackage{enumerate}
\usepackage{cite}
\usepackage{tikz}
\usetikzlibrary{cd}

\newcommand{\be}{\begin{eqnarray}}
\newcommand{\ee}{\end{eqnarray}}
\newcommand{\bez}{\begin{eqnarray*}}
\newcommand{\eez}{\end{eqnarray*}}

\numberwithin{equation}{section}

\numberwithin{equation}{section}
\newtheorem{theorem}{Theorem}[section]
\newtheorem{lemma}[theorem]{Lemma}
\newtheorem{coro}[theorem]{Corollary}

\theoremstyle{definition}

\newtheorem{remark}[theorem]{Remark}

\allowdisplaybreaks[4]

\begin{document}

\title[]{Hankel determinants from  quadratic orthogonal pairs for  hyperelliptic functions and their applications}

\date{}

\author{Xiang-Ke Chang}
\address{SKLMS \& ICMSEC, Academy of Mathematics and Systems Science, Chinese Academy of Sciences, P.O. Box 2719, Beijing 100190, PR China; and School of Mathematical Sciences, University of Chinese Academy of Sciences, Beijing 100049, PR China.}
\email{changxk@lsec.cc.ac.cn}

\author{Jiyuan Liu}
\address{ SKLMS \& ICMSEC, Academy of Mathematics and Systems Science, Chinese Academy of Sciences, P.O. Box 2719, Beijing 100190, PR China; and School of Mathematical Sciences, University of Chinese Academy of Sciences, Beijing 100049, PR China.}
\email{liujiyuan@amss.ac.cn}

\begin{abstract}
As argued by Hone in the paper [Commun. Pure Appl. Math., 74(11):2310--2347, 2021], a ``mismatch"  problem remained  unsolved while he was investigating continued fraction expansions and Hankel determinants from hyperelliptic curves. In this paper, by introducing a new notion called quadratic orthogonal pairs for hyperelliptic functions, we resolve the corresponding problem. As further applications, we give a thorough treatment of the initial value problems for two discrete integrable systems, i.e. the bilateral Somos-4 and Somos-5 recurrences.

\end{abstract}

\keywords{Discrete integrable systems;  Hankel determinants; Hyperelliptic curves; Continued fractions; Bilateral Somos sequences}
\subjclass[2020]
{37J70; 
11B83; 
11Y65; 
15B05
}

\maketitle
\tableofcontents

\section{Introduction}
The so-called Somos-4 recurrence is a three-term discrete quadratic equation of fourth order
\begin{equation}
	S_{n+4}S_{n} =\alpha S_{n+3}S_{n+1} + \beta S_{n+2}^2, \label{somoseq4}
\end{equation}
where $\alpha$ and $\beta$ are constant parameters, while the Somos-5 recurrence is defined as a discrete quadratic equation
\begin{equation}
	S^*_{n+5}S^*_{n} =\alpha S^*_{n+4}S^*_{n+1} + \beta S^*_{n+2}S^*_{n+1}.\label{somoseq5}
\end{equation}
They can both be interpreted as discrete integrable systems \cite{hone2005elliptic,hone2021,hone2007sigma}. In fact, they could be regarded as reductions of the discrete Hirota equation (also referred to as the bilinear discrete KP, or the octahedron recurrence)  \cite{hone2017reductions,krichever1998elliptic,speyer2007perfect}, which is one of the most important discrete integrable systems. Along with their high-order analogues, these two recurrences have attracted considerable attention.

As for the Somos-4 recurrence, an early and remarkable observation is that, if the initial values are set to be $(S_0,S_1,S_2,S_3)= (1,1,1,1)$, then the \textit{bilateral} Somos-4 sequence $\{S_n\}_{n \in \mathbb{Z}}$ produced by the recurrence \eqref{somoseq4} with $\alpha=\beta=1$ consists of integers\cite{somos1989problem,malouf1992integer,Gale1991strange}
\begin{equation}
	\ldots,314,59,23,7,3,2,1,1,1,1,2,3,7,23,59,314,\ldots .  \label{somos4sequence}
\end{equation}
One efficient and impressive way to prove this fact is to fit the iteration \eqref{somoseq4} into a big picture called cluster algebras, which was drawn by Fomin and Zelevinsky. In their theory of cluster algebras \cite{fomin2002laurent,fomin2002cluster}, the Somos-4 recurrence is interpreted as a path of a 4 regular tree $\mathbb{T}_{4}$ driven by a quiver with 4 vertices. In fact, they proved that, for a large class of cluster algebras of geometric type,  the cluster variables exhibit the Laurent property, that is, any cluster variable is a Laurent polynomial when written as a rational function in terms of any other cluster. As special cases, certain Somos recurrences admit the Laurent property, from which, the fact that the sequence \eqref{somos4sequence} produced by the Somos-4 recurrence \eqref{somoseq4} are all integers immediately follows as a straightforward corollary.

In addition to the Laurent phenomenon closely related to the theory of cluster algebras, Somos recurrences also exhibit some other distinctive features. As mentioned above, certain Somos recurrences and their higher-order analogues can be thought of discrete integrable systems, serving as integrable symplectic mappings \cite{hone2018some}, as reductions of the discrete Hirota equation \cite{hone2017reductions,speyer2007perfect} or Miwa equation (also known as the bilinear discrete BKP, or the cube recurrence) \cite{fedorov2016sigma,carroll2004cube}. Somos sequences naturally appear in some problems in number theory \cite{ward1948memoir,robinson1992periodicity,everest2005introduction,everest2003recurrence,hone2022heron}. 
They may also arise in solvable models within the realms of statistical mechanics and quantum field theory, such as the hard hexagon model \cite{quispel1988integrable}, dimer models and quiver gauge theory \cite{eager2012colored}.

It is interesting that general solutions to  certain Somos recurrences  can be derived. It has been shown that certain Somos recurrences admit general explicit solutions in terms of sigma functions associated with hyperelliptic curves \cite{hone2005elliptic,hone2007sigma,hone2010analytic,swart2003elliptic,fedorov2016sigma}.
In particular, the general solutions to the Somos-4  and Somos-5 recurrences can be constructed in terms of Weierstrass sigma function \cite{hone2005elliptic,hone2007sigma}, and can also be expressed using closed forms in terms of Hankel determinants \cite{chang2012conjecture,chang2015hankel,hone2023casting,hone2021,xin2009proof,hone2023family} with the matrix entries from elliptic curves. For some other discussions on the related topics, see e.g. \cite{han2025,ovsienko2025continued,van2006recurrence,wang2024sufficient}.

The Somos-4 recurrence enjoys a structure of Hankel determinants. Recall that a sequence of Hankel determinants $\{H_{n}\}_{n \in \mathbb{N}}$ is defined according to
\begin{equation*}
	H_{n} :=
	\left\{
	\begin{aligned}
		&\det(s_{i+j-2})_{i,j= 1,\ldots,n}, \quad & n \ge 1,\\ 
		&1, \quad & n = 0,
	\end{aligned}        
	\right. 
\end{equation*}
for the matrix entries $\{s_n\}_{n \in \mathbb{N}}$, which, in some sense, can be regarded as a transformation from one sequence $\{s_n\}_{n \in \mathbb{N}}$ to another sequence $\{H_n\}_{n \in \mathbb{N}}$. It was originally conjectured by Somos \cite{somos1989problem}, later proved by Xin \cite{xin2009proof} that, when the sequence $\{s_n\}_{n \in \mathbb{N}}$ is chosen to be the coefficients of the Taylor expansion of a certain function $G(z)$
\begin{equation*}
	G (z):= \frac{y(z)}{z}=\frac{\frac{1}{2}+\sqrt{\frac{1}{4}-z+z^3}}{z} - 1 = \sum_{n \ge 1} s_{n-1}z^n
\end{equation*}
obtained from the function $y=y(z)$ satisfying the restriction $y-y^2=z-z^3$,
then the corresponding sequence of Hankel determinants $\{H_{n}\}_{n \in\mathbb{N}}$ exactly gives a half of the Somos-4 sequence 
\begin{equation*}
	1,1,2,3,7,23,59,314,\ldots .
	\end{equation*}	
It was further conjectured by Barry that, if the sequence $\{s_n\}_{n \in \mathbb{N}}$ satisfies a particular convolution recursion \cite{barry2010generalized}, then the corresponding sequence of Hankel determinants $\{H_{n}\}_{n \in \mathbb{N}}$ also gives a half of the Somos sequence with specific $\alpha$ and $\beta$. This conjecture was proved by Hu and one of the authors in \cite{chang2012conjecture}. Later in \cite{chang2015hankel}, together with Xin, they derived the Hankel determinant formulae for the half of the Somos-4 recurrence with arbitrary $\alpha$ and $\beta$. All of these results are devoted to the half of the Somos-4 recurrence.
\par
As for the \textit{bilateral} Somos-4 recurrence \eqref{somoseq4} with $n\in \mathbb{Z}$, at first glance,  Hankel determinants may not be the ideal tool to describe its general solution,  since  it is generally not natural to extend the definition of Hankel determinants from natural numbers to all integers.
To obtain an overall expression of the bilateral Somos-4 sequences, in \cite{hone2021}, Hone made an attempt based on the connections with continued fraction expansions of functions on hyperelliptic curves, which are also related to the geometry of the Jacobians of the curves \cite{adams1980,zannier2019hyper}.
 He expressed the positive sequence and the negative sequence in terms of Hankel determinants independently, and then glue them together to get an overall expression using gauge transformations; see \cite[Remark 4.6]{hone2021}. However, his process seems to break the overall structure, and the relationship between the positive part and the negative part still remain unclear. In fact, as mentioned in  \cite[Remark 4.6]{hone2021}, \textbf{a ``mismatch" problem arises, that is, ``in general one cannot just take $S_n=H_n$ for nonnegative $n$ and $S_n=H_{-n-1}^\dag$ for negative $n$, because there will be a mismatch at the values of $d_0$, $d_1$, and $v_0$ that are left unspecified."}

This paper is mainly motivated by the above mismatch problem  \cite[Remark 4.6]{hone2021}. To tackle it, by digging up the interior symmetries hidden in the continued fraction recursions, we introduce a new concept called \textit{quadratic orthogonal pair} for a pair of generating functions associated with hyperelliptic curves.

More precisely, we start from a concept called ``proper" quadratic field extension over $\mathbb{C}(X)$ and show that the quadratic field extensions induced by hyperelliptic curves are all proper so that the continued fraction theory can apply. Then,  based on the Galois symmetry of a proper quadratic field extension, by introducing the new notion called quadratic orthogonal pairs for hyperelliptic functions, and combining it with the continued fraction theory for the Laurent series field $\mathbb{C}((X^{-1}))$, we obtain the consistent expressions \eqref{tauseq} and \eqref{d}  together with Lemma \ref{coeff} that are defined for all integers, which eventually solves the mismatch problem raised by Hone in \cite[Remark 4.6]{hone2021}. It is noted that this notion also reveals the relation between the two generating functions from which one can get an overal expression of solutions to a   particular nonlinear recurrence. A correspondence between iterations of continued fraction type for generating functions and guage transformations for Somos 4 sequences is also given in Theorem \ref{rela_n-n-1}. 

As further applications of our approach, we derive the general solution to the initial value problem for the bilateral Somos-4 recurrence using Hankel determinants defined on a quadratic orthogonal pair for elliptic curves. The explicit Hankel determinant expressions also allow us to verify the Laurent property of the bilateral  Somos-4 recurrence. In addition, the initial value problem of the bilateral Somos-5 recurrence is also solved using Hankel determinants defined on two quadratic orthogonal pairs. We also show how our approach on quadratic orthogonal pairs is used in generalizing Somos' original conjecture on half of Somos-4 sequence to the bilateral case. In particular, we find a ``symmetric" curve closely related to the fundamental curve $y-y^2=z-z^3$.

Furthermore, the integrability of the continued fraction recursions for hyperelliptic functions is also studied. In particular we show that every hyperelliptic function is naturally associated with a projective transformation with two fixed points and the continued fraction expansion can also be regarded as a projective transformation. To understand the Lax integrability, we find that the compatibility conditions of these two transformations are closely related to the symmetries of the continued fraction expansions.

The paper is organized as follows. In Section \ref{sec:CF}, we present some facts on continued fractions in $\mathbb{C}((X^{-1}))$ with a particular focus on J-fractions and the corresponding Hankel determinant representations. In Section \ref{sec:QOP}, we investigate the quadratic field extensions induced by hyperelliptic curves with a concentration on genus 1.
We demonstrate that every hyperelliptic function is naturally associated with a Lax pair and the compatibility condition is equivalent to a certain restriction on the continued fraction expansions of the hyperelliptic functions. The geometric meaning of the Lax matrices is also clarified. By further introducing pairs of hyperelliptic functions called quadratic orthogonal pairs, we solve the mismatch problem posed by Hone in \cite[Remark 4.6]{hone2021}.  Section \ref{sec:somos} is devoted to further applications of our approach to initial value problems of the bilateral Somos-4 and Somos-5 recurrences, based on which Laurent properties are also discussed. Additional comments on  quadratic orthogonal pairs for hyperelliptic functions with higher genus are collected in Appendix \ref{generalizepairs}.

\section{On J-fractions and Hankel determinants} \label{sec:CF}
In this section, we introduce some facts on continued fractions with a particular focus on J-fractions. The related materials can be found in e.g. \cite{wall1,wall2}.  
\subsection{Continued fraction expansions in \texorpdfstring{$\mathbb{C}((X^{-1}))$}{}} \label{A}
\par
Let $\mathbb{K}$ be a field. For a sequence $\{a_{n}\}_{n\in \mathbb{N}}$ in $\mathbb{K}$, its partial continued fraction $\langle a_0,\ldots,a_n \rangle  $ is defined by
\begin{equation*}
	\langle a_0,\ldots,a_n \rangle := a_0 + \cfrac{1}{a_1 
		+ \cfrac{1}{a_2 
			+ \cfrac{1}{a_3 +\cfrac{1}{ \cdot\cdot\cdot +\frac{1}{a_n}} } } } .
\end{equation*}
In general, the partial continued fraction can be written as a fraction, that is,
\begin{equation*}
	\langle a_0,\ldots,a_n \rangle   = \frac{p_n}{q_n},
\end{equation*}
where $p_n$ and $q_{n}$ are polynomials in $\{a_{n}\}_{n\in \mathbb{N}}$  recursively defined by 
\begin{equation}
	\left\{\begin{matrix}
		p_n= a_n p_{n-1} +p_{n-2} \\
		q_n= a_n q_{n-1} +q_{n-2} 
	\end{matrix}\right.,\quad n \geqslant 0,\label{polyreprensentation}
\end{equation}
with initial values
$$
\left\{\begin{matrix}
	p_{-1}= 1 \\
	q_{-1}= 0
\end{matrix}\right., \qquad 
\left\{\begin{matrix}
	p_{-2}= 0 \\
	q_{-2}= 1
\end{matrix}\right. .
$$

\par
The above procedure can be generalized to function field cases when $\mathbb{K}$ is taken to be a certain function field and $\{a_n\}_{n \in \mathbb{N}}$ to be a series of functions. In particular, if we choose $\{a_n\}_{n \in \mathbb{N}}$ to be a series of linear functions of one variable, then the corresponding continued fractions are said to be J-fractions\cite{wall1}. 
\par
For later use, we now choose
$\mathbb{K} = \mathbb{C}((X^{-1}))$ to be the Laurent series field in one variable over $\mathbb{C}$, and examine the continued fraction expansion of a given function $f = \sum_{n \in \mathbb{Z}} c_n X^n$ in $\mathbb{C}((X^{-1}))$, where $c_n$ are complex numbers satisfying the condition that there exists an integer $N$ such that $c_n = 0$ for all $n \ge N$. Here we make the convention that, throughout the paper, whenever we mention an element in $\mathbb{C}((X^{-1}))$, this condition is always assumed. Under this assumption, the following map is well defined
\begin{equation}
	\begin{aligned}
		\max  :  \mathbb{C}((X^{-1})) &\longrightarrow   \mathbb{Z} \\
		f = \sum_{n \in \mathbb{Z}} c_n X^n & \longmapsto 
		\max\{n \in \mathbb{Z}| f_n \ne 0\} .
	\end{aligned}
	\label{max}
\end{equation}
\par
For elements in $\mathbb{C}((X^{-1}))$, define an operation as follows
\begin{equation*}
	\begin{aligned}
		\left \lfloor \cdot \right \rfloor  :  \mathbb{C}((X^{-1})) &\longrightarrow   \mathbb{C}[X] \\
		f = \sum_{n \in \mathbb{Z}} c_n X^n & \longmapsto    \left \lfloor f \right \rfloor=\sum_{n \ge 0 }c_n X^n .
	\end{aligned}
\end{equation*}
This operation maps an element to its polynomial part so that one has a natural decomposition of elements in $\mathbb{C}((X^{-1}))$
\begin{equation}
	f= \left \lfloor f \right \rfloor + (f-\left \lfloor f \right \rfloor) .\label{decompose}
\end{equation}
Using this decomposition, one can easily construct a continued fraction expression for each element in $\mathbb{C}((X^{-1}))$. More precisely, given  an arbitrary element $f_0 \in \mathbb{C}((X^{-1}))$, decomposing $f_0$ using \eqref{decompose}, we have
\begin{equation*}
	f_0= \left \lfloor f_0 \right \rfloor + (f_0-\left \lfloor f_0 \right \rfloor) =: a_0 + b_0,
\end{equation*}
where we use $a_0$ to denote the polynomial part and $b_0$ to denote the negative series part. If $b_0$ is not zero, then it is invertible in $\mathbb{C}((X^{-1}))$. Thus we have a new element in $\mathbb{C}((X^{-1}))$
\begin{equation*}
	f_1 := b_0^{-1} .
\end{equation*}
Note that the same operation can be applied to $f_1$
\begin{equation*}
	\begin{aligned}
		&f_1= \left \lfloor f_1 \right \rfloor +(f_1-\left \lfloor f_1 \right \rfloor) =: a_1+ b_1,\\
		&f_2 := b_1^{-1},
	\end{aligned}
\end{equation*}
if $b_1\neq0$.
Repeating this operation, we have, for all natural number $n$
\begin{equation*}
	\begin{aligned}
		&f_n= \left \lfloor f_n \right \rfloor +(f_n-\left \lfloor f_n \right \rfloor) =: a_n+ b_n, \\
		&f_{n+1} := b_n^{-1},
	\end{aligned}
\end{equation*}
once $b_n\neq0$.
Thus we have, for any $ 0 \le m \le n$,
\begin{equation*}
	f_m = \langle a_m,a_{m+1},\ldots,a_{n},f_{n+1} \rangle  .
\end{equation*}
If follows from \eqref{polyreprensentation} that $f_m$ can be expressed in terms of a fraction of two polynomials in $\{a_m,\ldots,a_{n},f_{n+1}\}$. In addition, it is clear that, by definition of $\left \lfloor \cdot \right \rfloor$, we have
\begin{equation*}
	\left \lfloor -f \right \rfloor = -\left \lfloor f \right \rfloor .
\end{equation*}
This implies that, if $f$ and $\Tilde{f}$ satisfying $f+ \Tilde{f}=0$ are expanded in continued fractions as
\begin{equation*}
	\begin{aligned}
		f = \langle f_0,f_1,\ldots,f_n,\ldots \rangle, \qquad \Tilde{f} =\langle \Tilde{f}_0,\Tilde{f}_1,\ldots,\Tilde{f}_n,\ldots \rangle,
	\end{aligned}
\end{equation*}
then
\begin{equation*}
	f_n+ \Tilde{f}_n=0,\quad \forall\, n \in \mathbb{N},
\end{equation*}
which is quite different from the continued fraction theory for real numbers.

\subsection{Hankel determinant representations for J-fractions}
\par
As a particular type of continued fractions, J-fractions are known to be useful in dealing with Hankel determinants\cite{krattenthaler2005advanced,wall2}. In general, the quantities in  J-fractions can be represented in terms of Hankel determinants with the coefficients of the corresponding series expansion as elements.

Given an element $f$ in $\mathbb{C}((X^{-1}))$, if the corresponding continued fraction expansion of $f$ is a J-fraction, then $f$ is said to be J-expressible, that is,
$f$ can be expressed as a J-fraction as follows
\begin{equation}
	f = \langle a_0,a_1,a_2,\ldots \rangle, \label{fractions}
\end{equation}
where $a_0 = aX+b$, $a_n= \alpha_n X +\beta_n$, $\alpha_n$ and $\beta_n$ are all complex numbers, $\alpha_n \ne 0$. 
Conversely, if $f$ is J-expressible, then $f$ must have the following form
\begin{equation}
	f = aX+b+ s_0 X^{-1} +s_1 X^{-2}+\cdot\cdot\cdot . \label{series}
\end{equation}
Combining \eqref{series} and \eqref{fractions}, we have
\begin{equation*}
	\begin{aligned}
		aX+b + s_0 X^{-1} +s_1 X^{-2}+\cdot\cdot\cdot  &= \langle a_0,a_1,a_2,\ldots \rangle    \\
		&= a_0 +\langle 0,a_1,a_2,\ldots \rangle   \\
		&= aX + b + \langle 0,a_1,a_2,\ldots \rangle, \\
	\end{aligned}
\end{equation*}
which gives 
\begin{equation}
	\langle 0,\alpha_1X+\beta_1,\alpha_2X+\beta_2,\ldots \rangle   = s_0 X^{-1} +s_1 X^{-2}+\cdot\cdot\cdot .\label{eq}
\end{equation}
This means that there must exist a close relation between the set 
$\{\alpha_n,\beta_n\}_{n \in \mathbb{N}^*}$ and the set $\{s_n\}_{n \in \mathbb{N}}$, where $\mathbb{N}^*$ means the set of all positive integers. Actually, the relations between these two sets are given by the following theorem (see e.g. \cite[Theorem A]{wall1},\cite[Theorem 51.1]{wall2} or \cite[Theorem 29]{krattenthaler2005advanced}). The proof can be achieved in various ways; see e.g. \cite[proof of Theorem A]{wall1} based on the underlying orthogonality or  \cite[Appendix]{hone2021} based on identities for determinants.
\begin{theorem}\label{connection}
	Suppose $\langle 0,\alpha_1X+\beta_1,\alpha_2X+\beta_2,\ldots \rangle   = s_0 X^{-1} +s_1 X^{-2}+\cdot\cdot\cdot $, then, for any $n \in \mathbb{N}^*$, there hold
	\begin{equation*}
		\begin{aligned}
			 - \alpha_{n-1}^{-1}\alpha_{n}^{-1} &= \frac{\Delta_{n}\Delta_{n-2}} {\Delta_{n-1}^2},\\
			 \alpha_n^{-1} \beta_{n}& = \frac{\Delta_{n-1}^*}{\Delta_{n-1}}- \frac{\Delta_{n}^*}{\Delta_{n}},
		\end{aligned}
	\end{equation*}
	where $\alpha_0$, $\Delta_{n}$, $\Delta_{n}^*$ are given by
	\begin{equation*}
		\begin{aligned}
			&\alpha_0 = -1,\\
			& \Delta_{n} = \begin{cases}
				& \text{ $\det(s_{i+j-2})_{1\le i \le n, 1\le j \le n}$, \quad if } n\ge 1, \\
				& \text{ $1$,\quad if } n=0, \\
				& \text{ $1$,\quad if } n=-1,
			\end{cases} \\
			& \Delta_{n}^* = \begin{cases}
				& \text{ $\det(s_{i+j-2})_{1\le i \le n, 1\le j \le n+1,j \ne n}$, \quad if } n\ge 1, \\
				& \text{ $0$,\quad if } n=0 .
			\end{cases} 
		\end{aligned}                           
	\end{equation*} 
\end{theorem}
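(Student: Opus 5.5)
The plan is to go through orthogonal polynomials attached to the moment functional $\mathcal{L}[X^k]=s_k$. I first read off the convergents $p_n/q_n$ of $\langle 0,a_1,a_2,\ldots\rangle$ via \eqref{polyreprensentation}, with $a_n=\alpha_nX+\beta_n$. The three-term recurrence $q_n=(\alpha_nX+\beta_n)q_{n-1}+q_{n-2}$ shows that $q_n$ has degree exactly $n$, with leading coefficient $\alpha_1\cdots\alpha_n$. The classical determinant identity $p_nq_{n-1}-p_{n-1}q_n=\pm1$ then gives
\[
f-\frac{p_n}{q_n}=O\bigl(X^{-2n-1}\bigr),\qquad f:=s_0X^{-1}+s_1X^{-2}+\cdots .
\]
The reason is that the tail $f_{n+1}$ has degree $1$, so the error equals $\pm 1/(q_n(q_nf_{n+1}+q_{n-1}))$.

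Multiplying by $q_n$, the product $q_nf-p_n$ is $O(X^{-n-1})$. Writing $q_n=\sum_k c_kX^k$, the coefficients of $X^{-1},\dots,X^{-n}$ in $q_nf$ are $\sum_k c_k s_{k+j}$ for $j=0,\dots,n-1$. Hence $\mathcal{L}[X^jq_n]=0$ for $0\le j\le n-1$, so $\{q_n\}$ is the orthogonal polynomial sequence for $\mathcal{L}$. By uniqueness, $q_n$ equals its leading coefficient times the Heine-type determinant
\[
\frac{1}{\Delta_n}\det\begin{pmatrix}s_0&\cdots&s_n\\ \vdots&&\vdots\\ s_{n-1}&\cdots&s_{2n-1}\\ 1&\cdots&X^n\end{pmatrix},
\]
which is monic. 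Expanding this determinant along the last row, the coefficient of $X^{n-1}$ in the monic polynomial is $-\Delta_n^*/\Delta_n$, since $\Delta_n^*$ is the minor obtained by deleting column $n$. This step requires $\Delta_n\neq0$. I would obtain that by induction: the existence of a degree-$n$ polynomial $q_n$ orthogonal to lower degrees, together with $\mathcal{L}[X^nq_n]\ne0$, forces nonsingularity. The nonvanishing of $\mathcal{L}[X^nq_n]$ comes from the exact $X^{-n-1}$ coefficient of $q_nf-p_n$, which equals $\pm1/(\text{leading coefficient of } q_{n+1})\neq0$.

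Next I compute the norms. The coefficient just identified gives
\[
\mathcal{L}[X^nq_n]=\frac{\pm1}{\alpha_1\cdots\alpha_{n+1}} .
\]
On the other hand, from the determinant formula, $\mathcal{L}[X^nq_n]=(\alpha_1\cdots\alpha_n)\,\Delta_{n+1}/\Delta_n$. Equating the two expressions gives $\Delta_{n+1}/\Delta_n=\pm(\alpha_1\cdots\alpha_n)^{-2}\alpha_{n+1}^{-1}\cdot(\ldots)$. Taking the ratio of consecutive instances (shifted by one index) yields
\[
\frac{\Delta_n\Delta_{n-2}}{\Delta_{n-1}^2}=-\alpha_{n-1}^{-1}\alpha_n^{-1} .
\]
The sign is fixed by tracking $(-1)^n$ in $p_nq_{n-1}-p_{n-1}q_n=(-1)^{n+1}$. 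The case $n=1$, with the conventions $\alpha_0=-1$ and $\Delta_{-1}=1$, is checked directly: $f=1/(\alpha_1X+\beta_1+\cdots)$ gives $s_0=\alpha_1^{-1}$, which matches $\Delta_1=s_0$.

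For the $\beta$ relation, I compare the coefficient of $X^{n-1}$ on both sides of $q_n=(\alpha_nX+\beta_n)q_{n-1}+q_{n-2}$, after dividing by the leading coefficient $\alpha_1\cdots\alpha_n$. With $\hat q_n$ denoting the monic polynomial, this gives
\[
-\frac{\Delta_n^*}{\Delta_n}=-\frac{\Delta_{n-1}^*}{\Delta_{n-1}}+\frac{\beta_n}{\alpha_n},
\]
which is exactly the second identity. The base case uses $\Delta_0^*=0$.

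The main obstacle is bookkeeping rather than ideas. The delicate points are the signs and normalizations in the norm identity, and the inductive argument that $\Delta_n\ne0$, which is needed to identify $q_n$ with the Heine determinant. I would handle these by proving the statements ``$\Delta_n\ne0$'' and ``$\mathcal{L}[X^nq_n]=(-1)^n/(\alpha_1\cdots\alpha_{n+1})$'' together in a single induction on $n$.
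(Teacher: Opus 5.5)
Your proposal is correct and is essentially the orthogonal-polynomial argument of Wall (proof of Theorem A in \cite{wall1}), which the paper cites in place of giving its own proof; the paper's other suggested route is the determinant-identity argument in the Appendix of \cite{hone2021}. The one step to phrase carefully is the nonvanishing of $\Delta_{n+1}$: it does not follow from the existence of $q_n$ with $\mathcal{L}[X^nq_n]\neq 0$ alone, but it does follow from the inductive hypothesis $\Delta_n\neq0$ via $\Delta_{n+1}=\Delta_n\,\mathcal{L}[X^nq_n]/(\alpha_1\cdots\alpha_n)$, which is exactly what your joint induction supplies.
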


\section{Continued fractions and quadratic orthogonal pairs for hyperelliptic functions} \label{sec:QOP}
In this section, we investigate continued fraction and Laurent series expansions for certain hyperelliptic functions. We also propose the concept of quadratic orthogonal pairs for hyperelliptic functions, for which intimate relationships among their continued fraction expansions  are revealed.
\subsection{Proper quadratic extensions over \texorpdfstring{$\mathbb{C}(X)$}{}}
We first introduce some preliminary results on quadratic extensions. Since one can always complete the square in general, every quadratic extension over $\mathbb{C}(X)$ can be reduced to the following case
\begin{equation*}
	\mathcal{F} = \mathbb{C}(X)[Y]/(Y^2-w),
\end{equation*}
where $w \in \mathbb{C}(X)$ is a rational function in one variable over $\mathbb{C}$.  
\begin{remark}
	When $w$ is a square of an element in $\mathbb{C}(X)$, $\mathcal{F}$ becomes a trivial field extension over $\mathbb{C}(X)$.  
\end{remark}

 On the other hand, due to the fact that every rational function can be expanded into a Laurent series at $\infty$, we have the following embedding of fields
\begin{equation*}
	\begin{aligned}
		\pi_0 :  \mathbb{C}(X) &\longrightarrow  & \mathbb{C}((X^{-1})) \\
		f & \longmapsto & \sum_{n \in \mathbb{Z}} f_n X^n,
	\end{aligned}
\end{equation*}
where $f_n$ are complex numbers given by
\begin{equation*}
	f_n = \frac{1}{2\pi i}\oint_{\infty} X^{-n-1} f dX .
\end{equation*}
The following lemma provides a sufficient and necessary condition to judge whether a given quadratic field extension $\mathcal{F}$ over $\mathbb{C}(X)$ can be embedded into the Laurent series field $\mathbb{C}((X^{-1}))$.

\begin{lemma}
	Suppose $w \in \mathbb{C}(X)$ and $\pi_0(w)=\sum_{n \in \mathbb{Z}} w_n X^n$. Then there exists an embedding $\pi$ from $\mathcal{F}$ to $\mathbb{C}((X^{-1}))$ such that the following diagram commutes
	\[\begin{tikzcd}
		{\mathcal{F}} & {\mathbb{C}((X^{-1}))} \\
		{\mathbb{C}(X)}
		\arrow["\pi", from=1-1, to=1-2]
		\arrow["i", hook, from=2-1, to=1-1]
		\arrow["{\pi_0}"', from=2-1, to=1-2] 
	\end{tikzcd}\]
	if and only if 
	\begin{equation}
		\max(\pi_0(w)) \equiv 0 \quad(mod \quad 2) .\label{proper}
	\end{equation}
Here the definition of $\max$ is given by \eqref{max}.
	\label{proper0}
\end{lemma}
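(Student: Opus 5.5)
The plan is to reduce both directions to the question of whether $\pi_0(w)$ has a square root in $\mathbb{C}((X^{-1}))$. An embedding $\pi:\mathcal{F}\to\mathbb{C}((X^{-1}))$ extending $\pi_0$ is determined by the image $\eta:=\pi(Y)$, and it must satisfy $\eta^2=\pi_0(w)$. Conversely, suppose some $\eta\in\mathbb{C}((X^{-1}))$ satisfies $\eta^2=\pi_0(w)$. Consider the evaluation map $\mathbb{C}(X)[Y]\to\mathbb{C}((X^{-1}))$ given by $\pi_0$ on coefficients and $Y\mapsto\eta$. It kills $Y^2-w$, so it factors through $\mathcal{F}$. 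We treat separately the case where $w$ is a square in $\mathbb{C}(X)$, since then $\mathcal{F}$ is not a field (compare the preceding Remark). When $w$ is not a square, $Y^2-w$ is irreducible and $\mathcal{F}$ is a field, so the induced ring map is automatically injective. It commutes with the diagram by construction. The lemma is therefore equivalent to: $\pi_0(w)$ is a square in $\mathbb{C}((X^{-1}))$ if and only if $\max(\pi_0(w))$ is even. We assume $w\neq 0$ so that $\max$ is defined.

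For necessity, suppose $\eta^2=\pi_0(w)$ with $\eta\ne0$. The leading term of a product is the product of the leading terms, because $\mathbb{C}$ has no zero divisors. Hence $\max(\eta^2)=2\max(\eta)$, which is even. This is the one-line direction.

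For sufficiency, write $m=\max(\pi_0(w))=2k$ and $\pi_0(w)=w_{2k}X^{2k}(1+u)$, where $u\in X^{-1}\mathbb{C}[[X^{-1}]]$. Choose $c\in\mathbb{C}$ with $c^2=w_{2k}$; this is where algebraic closedness of $\mathbb{C}$ is used. Then set $\eta=cX^k(1+u)^{1/2}$, where $(1+u)^{1/2}=\sum_{j\ge0}\binom{1/2}{j}u^j$. This series converges in the $X^{-1}$-adic topology because $u$ has only negative powers, so each coefficient of $\eta$ is a finite sum. The identity $\big((1+u)^{1/2}\big)^2=1+u$ holds formally, since it is an identity of formal power series in $u$ and substitution of $u\in X^{-1}\mathbb{C}[[X^{-1}]]$ is a ring homomorphism. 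Alternatively, one can solve $\eta^2=\pi_0(w)$ coefficient by coefficient: the recursion determines each new coefficient linearly, with the nonzero coefficient $2c$ in front.

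The main (mild) obstacle is the bookkeeping on injectivity and on the degenerate case where $w$ is a square in $\mathbb{C}(X)$. In that case $\mathcal{F}$ is not a field, so the notion of embedding should be read with $\mathcal{F}$ replaced by $\mathbb{C}(X)$ itself. A square $w=g^2$ has $\max(\pi_0(w))=2\max(\pi_0(g))$, which is even, so the parity condition holds there as well. With this noted, the argument is complete.
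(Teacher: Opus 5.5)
Your proof is correct and takes essentially the same route as the paper's one-line proof: for even $\max(\pi_0(w))$ you build the square root from the binomial (Taylor) expansion of $(1+u)^{1/2}$, and for odd degree you rule out a square root by comparing leading terms, $\max(\eta^2)=2\max(\eta)$, which is what the paper's ``by contradiction'' amounts to. You also supply details the paper leaves unstated: the reduction to whether $\pi_0(w)$ is a square, injectivity from $\mathcal{F}$ being a field, and the observation that when $w$ is a square in $\mathbb{C}(X)$ one must read $\mathcal{F}$ as $\mathbb{C}(X)$ itself (as the paper's preceding remark implicitly does), since otherwise $\mathcal{F}$ has zero divisors and admits no embedding into a field.
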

\begin{proof}
	The proof can be achieved by using the Taylor expansion for $\sqrt{1+x}$ in the forward direction and by contradiction in the opposite direction.
\end{proof}

For later use, we call a quadratic field extension over $\mathbb{C}(X)$ a \textit{proper quadratic extension} if and only if it satisfies the condition \eqref{proper} and $w$ is not a square. It turns out that all quadratic extensions induced by hyperelliptic curves \cite{van1,van2,van3,hone2021} are proper.

\subsection{Hyperelliptic functions of genus 1} \label{genus1curve}
\par
Following van der Poorten\cite{van1,van2,van3} and Hone\cite{hone2021}, we consider hyperelliptic curves over $\mathbb{C}$. For our purposes of solving the initial problems of Somos-4 and Somos-5 recurrences, we restrict our attention to the genus 1 case, i.e. the elliptic curve. The constructions in these sections do not actually depend on the explicit formula of the curve and can be easily generalized to an arbitrary proper quadratic field extension over $\mathbb{C}(X)$. 

Consider a curve over $\mathbb{C}$
\begin{equation}
	\mathcal{C}:Y^2 = (X^2+f)^2+4u(X-v) \label{curve}
\end{equation}
where $f$,$u$,$v$ are complex numbers satisfying a certain non-degenerating condition so that the roots of the polynomial on the right hand side are all distinct.
It is evident that the
genus of the curve $\mathcal{C}$ is 1 and it is nothing but an elliptic curve. For convenience, we also
call $\mathcal{C}$ a hyperelliptic curve of genus 1.
As an algebraic variety, the coordinate ring $\mathcal{R}$ of the hyperelliptic curve over $\mathbb{C}$ of genus 1 is given by
\begin{equation*}
	\mathcal{R} = \mathbb{C}[X,Y]/(Y^2 - (X^2+f)^2 - 4u(X-v))
\end{equation*}
and the associated function field is computed by
\begin{equation}
	\mathcal{F} = Frac(\mathcal{R}) \cong  \mathbb{C}(X)[Y]/(Y^2-(X^2+f)^2 - 4u(X-v)) = \{aY+b | a,b \in \mathbb{C}(X) \} .\label{function field}
\end{equation}
Elements in $\mathcal{F}$ are said to be hyperelliptic functions of genus 1.
\subsubsection{Laurent series expansions}
\par
According to \eqref{function field}, the function field $\mathcal{F}$ can be viewed as a two dimensional vector space over $\mathbb{C}(X)$, with $\{1,Y\}$ to be its basis. The only information we know about $Y$ is the algebraic relation given by \eqref{curve}. Actually, there are exactly two such functions satisfying the same algebraic relation \eqref{curve}. These two functions are symmetric under the action of the Galois group $Gal(\mathcal{F}/\mathbb{C}(X))$. Without loss of generality, we simply pick $Y$ to be one of these two functions formally defined by
\begin{equation*}
	Y = \sqrt{(X^2+f)^2+4u(X-v)},
\end{equation*}
while the other such function is given by
\begin{equation*}
	Y^*:= -Y = -\sqrt{(X^2+f)^2+4u(X-v)} . 
	\end{equation*}
\par
Recall that, for any $f \in \mathbb{C}(X)$, we can expand it into a Laurent series when $X \to \infty$, that is, we have the following ring homomorphism
\begin{equation*}
	\begin{aligned}
		\pi_0 :  \mathbb{C}(X) &\longrightarrow  & \mathbb{C}((X^{-1})) \\
		f & \longmapsto & \sum_{n \in \mathbb{Z}} f_n X^n,
	\end{aligned}
\end{equation*}
where $f_n$ are complex numbers given by
\begin{equation*}
	f_n = \frac{1}{2\pi i}\oint_{\infty} X^{-n-1} f dX.
\end{equation*}
Similarly, when $X \to \infty$, since the right hand side of \eqref{curve} satisfies the condition \eqref{proper}, it follows from Lemma \ref{proper0} that $Y$ can also be expanded into a Laurent series
\begin{equation*}
	Y = \sqrt{(X^2+f)^2+4u(X-v)} = \sum_{n \in \mathbb{Z}}y_n X^n  =: (Y)_{\infty},
\end{equation*}
where $y_n$ are complex numbers given by
\begin{equation*}
	y_n = \frac{1}{2\pi i}\oint_{\infty} X^{-n-1} Y dX.
\end{equation*}
Consequently, we can naturally lift $\pi_0$ to $\mathcal{F}$
\[\begin{tikzcd}
	{\mathcal{F}} & {\mathbb{C}((X^{-1}))} \\
	{\mathbb{C}(X)}
	\arrow["\pi", from=1-1, to=1-2]
	\arrow["i", hook, from=2-1, to=1-1]
	\arrow["{\pi_0}"', from=2-1, to=1-2]
\end{tikzcd}\]
where $\pi(f) = \pi(aY+b) := \pi_0(a)(Y)_{\infty} +\pi_0(b)$, for any hyperelliptic function $f = aY+b$.
This means that the ring homomorphism $\pi$ maps a hyperelliptic function to its Laurent series expansion at $\infty$. In the sequel, we also use $(\cdot)_{\infty}$ to denote $\pi$ for convenience.

\subsubsection{Continued fraction expansions}
In the previous subsection, we have shown that the hyperelliptic function field $\mathcal{F}$ can be embedded into the Laurent series field $\mathbb{C}((X^{-1}))$. Recall that,  as indicated in Section \ref{A}, every Laurent series has a continued fraction expansion. Therefore, we conclude that every hyperelliptic function can be expanded into a continued fraction. 

The continued fraction expansions of hyperelliptic functions preserve nice properties. To see this, we start from an arbitrary element $Y_0= r_{0}Y+s_{0}$ in $\mathcal{F}$, where $r_{0}$ and $s_{0}$ are elements in $\mathbb{C}(X)$, $r_{0}$ is nonzero. Under the action of the Galois group $Gal(\mathcal{F}/\mathbb{C}(X))$, we have another hyperelliptic function $Y_{0}^*$ defined by
\begin{equation*}
	Y_{0}^* := r_{0}Y^*+s_{0} = r_{0}(-Y)+s_{0},
\end{equation*}
where $Y_{0}^*$ is said to be the symmetry of $Y_0$.

As is seen in Section \ref{A}, the continued fraction expansion of $Y_0$ is equivalent to the following recursion
\begin{equation}
	Y_n = a_n + \frac{1}{Y_{n+1}} \quad, n \in \mathbb{N},\label{recursion}
\end{equation}
where $a_n := \left \lfloor Y_n \right \rfloor $ for all natural number $n$. Since $r_{0}$ is nonzero and $Y$ is not rational, the above recursion does not terminate. In other words, $Y_{0}$ has an infinite continued fraction expansion. Note that if $Y_n= r_{n}Y+s_{n}$, then $r_{n}$ is also nonzero, otherwise $Y$ would be rational, which is a contradiction.
\par
From a geometric viewpoint, the above recursion can be interpreted as a sequence of projective transformations in the projective space $\mathbb{P}^1(\mathcal{F})$. To be precise, $Y_n$ can be regarded as a point $(Y_n:1)$ in $\mathbb{P}^1(\mathcal{F})$. Thus the above recursion becomes
\begin{equation*}
	\begin{aligned}
		\begin{pmatrix}
			Y_{n+1}\\
			\cdot \cdot\\
			1
		\end{pmatrix} = 
		\begin{pmatrix}
			(Y_{n}-a_n)^{-1}\\
			\cdot \cdot\\
			1
		\end{pmatrix}=
		\begin{pmatrix}
			1\\
			\cdot \cdot\\
			Y_{n}-a_n
		\end{pmatrix} =
		\begin{pmatrix}
			0 & 1 \\
			1 & -a_n
		\end{pmatrix}
		\begin{pmatrix}
			Y_{n}\\
			\cdot \cdot\\
			1
		\end{pmatrix}
	\end{aligned},
\end{equation*}
from which, we obtain
\begin{equation*}
	\begin{pmatrix}
		Y_{n+1}\\
		\cdot \cdot\\
		1
	\end{pmatrix} 
	= \begin{pmatrix}
		0 & 1 \\
		1 & -a_n
	\end{pmatrix}\begin{pmatrix}
		0 & 1 \\
		1 & -a_{n-1}
	\end{pmatrix} \cdots \begin{pmatrix}
		0 & 1 \\
		1 & -a_0
	\end{pmatrix}\begin{pmatrix}
		Y_{0}\\
		\cdot \cdot\\
		1
	\end{pmatrix}.
\end{equation*}
If we introduce
\begin{equation*}
	\begin{aligned}
		M_n := 
		\begin{pmatrix}
			0 & 1 \\
			1 & -a_n
		\end{pmatrix}
	\end{aligned},
\end{equation*} 
for $ n \in \mathbb{N}$,
then recursion \eqref{recursion} is equivalent to the sequence of projective transformations $\{M_n\}_{n \in \mathbb{N}}$. Note that $\{M_n\}_{n \in \mathbb{N}}$ are uniquely determined by $\{a_n\}_{n \in \mathbb{N}}$, thus uniquely determined by $Y_0$.
\par
In general, although $M_n$ transforms $(Y_n:1)$ to $(Y_{n+1}:1)$, it does not certainly transform its symmetries $(Y_n^*:1)$ to $(Y_{n+1}^*:1)$. The following lemma provides a sufficient and necessary condition for that $M_n$ transforms $(Y_n^*:1)$ to $(Y_{n+1}^*:1)$.
\begin{lemma}\label{iff}
	Let $\{Y_n=r_nY+s_n|r_n,s_n\in \mathbb{C}(X),r_n \ne 0 \}_{n\in \mathbb{N}}$ be a sequence in $\mathcal{F}$, and $\{a_n\}_{n \in \mathbb{N}}$ a sequence in $\mathbb{C}(X)$, satisfying
	\begin{equation*}
		Y_n = a_n + \frac{1}{Y_{n+1}}, \quad \forall\, n \in \mathbb{N}.
	\end{equation*}
	Define  
	\begin{equation*}
		\begin{aligned}
			M_n := 
			\begin{pmatrix}
				0 & 1 \\
				1 & -a_n
			\end{pmatrix}.
		\end{aligned}
	\end{equation*}
	Then $M_n$ transforms $(Y_n^*:1)$ to $(Y_{n+1}^*:1)$ if and only if
	\begin{equation}
		r_{n+1} = r_{n}(r_{n+1}^2Y^2-s_{n+1}^2) .\label{iffeq}
	\end{equation} 
\end{lemma}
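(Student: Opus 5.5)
The plan is to translate the projective condition into an identity in $\mathcal{F}$ and then separate the coefficients of $1$ and $Y$, using that $\{1,Y\}$ is a basis of $\mathcal{F}$ over $\mathbb{C}(X)$.

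First I rewrite the transformation. By the computation preceding the lemma, $M_n$ sends $(Z:1)$ to $\bigl(1 : Z-a_n\bigr)=\bigl((Z-a_n)^{-1}:1\bigr)$ whenever $Z\neq a_n$. Since $r_n\neq 0$ and $Y\notin\mathbb{C}(X)$, we have $Y_n^*-a_n=r_n(-Y)+(s_n-a_n)\neq 0$. Hence $M_n$ transforms $(Y_n^*:1)$ to $(Y_{n+1}^*:1)$ if and only if
\begin{equation*}
	(Y_n^*-a_n)\,Y_{n+1}^*=1 .
\end{equation*}

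Next I use the hypothesis. The relation $Y_n=a_n+1/Y_{n+1}$ says $Y_n-a_n=Y_{n+1}^{-1}$. Writing $Y_{n+1}=r_{n+1}Y+s_{n+1}$ and rationalizing with its conjugate gives
\begin{equation*}
	\frac{1}{Y_{n+1}}=\frac{r_{n+1}Y-s_{n+1}}{r_{n+1}^2Y^2-s_{n+1}^2}.
\end{equation*}
The denominator is a nonzero element of $\mathbb{C}(X)$, because it is the norm of $Y_{n+1}\neq 0$ (equivalently, $Y^2$ is not a square in $\mathbb{C}(X)$). Comparing coefficients of $Y$ and $1$ with $Y_n-a_n=r_nY+(s_n-a_n)$ yields
\begin{equation*}
	r_n=\frac{r_{n+1}}{D},\qquad s_n-a_n=\frac{-s_{n+1}}{D},\qquad D:=r_{n+1}^2Y^2-s_{n+1}^2 .
\end{equation*}
The hypothesis alone therefore already forces $r_{n+1}=r_nD$, which is exactly \eqref{iffeq}.

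Finally I compute the condition directly. Applying the Galois conjugation, i.e. replacing $Y$ by $-Y$ with all coefficients in $\mathbb{C}(X)$ unchanged, gives $Y_n^*-a_n=-r_nY+(s_n-a_n)$. The product condition $(Y_n^*-a_n)Y_{n+1}^*=1$ is then equivalent to $Y_n^*-a_n=(r_{n+1}(-Y)+s_{n+1})^{-1}=(-r_{n+1}Y-s_{n+1})/D$. Matching coefficients of $Y$ and $1$ gives $r_n=r_{n+1}/D$ and $s_n-a_n=-s_{n+1}/D$.

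For the direction ``\eqref{iffeq} $\Rightarrow$ transformation'', the simplest route is to note that conjugation is a field automorphism. Applying it to $(Y_n-a_n)Y_{n+1}=1$, with $a_n\in\mathbb{C}(X)$ fixed, yields $(Y_n^*-a_n)Y_{n+1}^*=1$ directly. For the converse, the coefficient comparison above shows that the transformation condition forces $r_n D=r_{n+1}$.

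The main point needing care is that both directions are essentially tautological consequences of the standing relation. I would present the argument as showing that both the projective statement and \eqref{iffeq} are equivalent to the coefficient identities $r_nD=r_{n+1}$ and $(s_n-a_n)D=-s_{n+1}$. The only genuine obstacle is justifying that $D\neq 0$ and that the coefficient comparison is legitimate; both follow from $Y^2$ not being a square in $\mathbb{C}(X)$, i.e. from properness.
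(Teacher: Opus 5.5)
Your proof is correct, and it takes a sharper route than the paper's.

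\textbf{The paper's route.} The paper writes the conjugated relation $r_n(-Y)+s_n=a_n+1/(r_{n+1}(-Y)+s_{n+1})$ beside the given one $r_nY+s_n=a_n+1/(r_{n+1}Y+s_{n+1})$ and extracts \eqref{iffeq} from the pair. Subtracting them eliminates $a_n$ and leaves $2r_nY=2r_{n+1}Y/(r_{n+1}^2Y^2-s_{n+1}^2)$. So, given the recursion, the conjugated relation holds if and only if \eqref{iffeq} does. This elimination never uses $a_n\in\mathbb{C}(X)$, so it proves a formally more general equivalence. If $a_n$ were allowed a $Y$-component, \eqref{iffeq} would be a genuine condition, equivalent to $a_n$ being Galois-invariant.

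\textbf{Your route.} You use $a_n\in\mathbb{C}(X)$ together with the basis $\{1,Y\}$. The $Y$-coefficient of the recursion is literally \eqref{iffeq}. Applying the automorphism $Y\mapsto -Y$ to the recursion gives the conjugated relation. So both sides of the ``if and only if'' hold unconditionally under the lemma's hypotheses.

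\textbf{What this buys.} Your argument shows that, as stated, the lemma is vacuous. Condition \eqref{iffeq} is not a restriction on $Y_0$, contrary to what the remark after the lemma suggests. Consequently, the compatibility $L_{n+1}M_n=M_nL_n$ in Theorem \ref{thmlax} holds for the continued fraction of every $Y_0=r_0Y+s_0$ with $r_0\neq 0$.

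\textbf{A small point of phrasing.} By itself, \eqref{iffeq} is only the first of your two coefficient identities. It is ``equivalent'' to both only in the presence of the standing recursion, which is what your argument actually uses.
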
 
\begin{proof}
	The proof of this lemma is straightforward. From the definition of $M_n$, it is clear that $M_n$ transforms $(Y_n^*:1)$ to $(Y_{n+1}^*:1)$ if and only if 
	\begin{equation*}
		Y_{n}^*= a_n+\frac{1}{Y_{n+1}^*},
	\end{equation*}
	which is equivalent to
	\begin{equation}
		r_n(-Y)+s_n= a_n+\frac{1}{r_{n+1}(-Y)+s_{n+1}} .\label{e1}
	\end{equation}
	Recall that  $Y_n$ satisfy
	\begin{equation*}
		Y_{n}= a_n+\frac{1}{Y_{n+1}},
	\end{equation*}
	which can be equivalently written as
	\begin{equation}
		r_nY+s_n= a_n+\frac{1}{r_{n+1}Y+s_{n+1}}. \label{e2}
	\end{equation} 
	It is not hard to see that the sufficient and necessary condition \eqref{iffeq} follows from \eqref{e1} and \eqref{e2}.
\end{proof}
\begin{remark}
	Lemma \ref{iff} does not require $a_n = \left \lfloor Y_n \right \rfloor$. If we have $a_n = \left \lfloor Y_n \right \rfloor$ as an additional condition, then $Y_0$ is exactly expanded into a continued fraction $\langle a_0,\ldots,a_n,\ldots \rangle  $, and the condition \eqref{iffeq} is actually a restriction on the choice of $Y_0$. In fact, Lemma \ref{iff} provides a sufficient and necessary condition to judge whether a given hyperelliptic function is associated with a Lax pair, which will be shown in Theorem \ref{thmlax}. For convenience, we call a hyperelliptic function Lax representable if and only if it satisfies the condition \eqref{iffeq}. It is also noted that $Y_n$ is Lax representable if and only if $Y_n^*$ is Lax representable.
\end{remark}

Since $Y_0 \ne Y_0^*$, $Y_0$ defines a unique projective transformation $\mathcal{L}_0$ as
\begin{equation*}
	\begin{aligned}
		\mathcal{L}_0  :  \mathbb{P}^1(\mathcal{F}) &\longrightarrow   \mathbb{P}^1(\mathcal{F}) \\
		(Y_0:1)& \longmapsto (Y_0:1)  \\
		(Y_0^*:1)& \longmapsto (Y_0^*:1)\,.
	\end{aligned}
\end{equation*}
The projective transformation $\mathcal{L}_0$ can be represented by a linear transformation $L_0$ with eigenvectors $(Y_0,1)$, $(Y_0^*,1)$ and the respective eigenvalues $Y$, $Y^*$. In other words, if we define
\begin{equation*}
	L_0 = \begin{pmatrix}
		A_0&B_0 \\
		C_0&D_0
	\end{pmatrix},
\end{equation*}
then $L_0$ satisfies 
\begin{equation*}
	\begin{pmatrix}
		A_0&B_0 \\
		C_0&D_0
	\end{pmatrix}
	\begin{pmatrix}
		Y_0&Y_0^* \\
		1&1
	\end{pmatrix} = 
	\begin{pmatrix}
		Y_0&Y_0^* \\
		1&1
	\end{pmatrix}
	\begin{pmatrix}
		Y&0 \\
		0&Y^*
	\end{pmatrix}.
\end{equation*}
This uniquely determines $L_0$.  In fact, we have
\begin{equation*}
	\begin{aligned}
		L_0
		= 
		\begin{pmatrix}
			A_0&B_0 \\
			C_0&D_0
		\end{pmatrix}  
		&=  
		\begin{pmatrix}
			Y_0&Y_0^* \\
			1&1
		\end{pmatrix}
		\begin{pmatrix}
			Y&0 \\
			0&Y^*
		\end{pmatrix}
		\begin{pmatrix}
			Y_0&Y_0^* \\
			1&1
		\end{pmatrix}^{-1} 
		\\
		&=\frac{1}{Y_0-Y_0^*}
		\begin{pmatrix}
			Y_0&Y_0^* \\
			1&1
		\end{pmatrix}
		\begin{pmatrix}
			Y&0 \\
			0&Y^*
		\end{pmatrix}
		\begin{pmatrix}
			1&-Y_0^* \\
			-1&Y_0
		\end{pmatrix} \\
		&=
		\begin{pmatrix}
			\frac{YY_0-Y^*Y_0^*}{Y_0-Y_0^*}&-\frac{Y_0Y_{0}^*(Y-Y^*)}{Y_0-Y_0^*} \\
			\frac{Y-Y^*}{Y_0-Y_0^*}&\frac{Y^*Y_0-YY_0^*}{Y_0-Y_0^*}
		\end{pmatrix} \\
		&= 
		\begin{pmatrix}
			\frac{s_0}{r_0}& \frac{r_0^2Y^2-s_0^2}{r_0}\\
			\frac{1}{r_0}& -\frac{s_0}{r_0}
		\end{pmatrix}.
	\end{aligned}
\end{equation*}
The above procedure can still hold when $Y_0$ is replaced by $Y_n=r_nY+s_{n}$, which gives us a sequence of matrices $\{L_n\}_{n \in \mathbb{N}}$ defined by
\begin{equation*}
	L_n = 
	\begin{pmatrix}
		\frac{s_n}{r_n}& \frac{r_n^2Y^2-s_n^2}{r_n}\\
		\frac{1}{r_n}& -\frac{s_n}{r_n}
	\end{pmatrix}
\end{equation*}
satisfying
\begin{equation*}
	L_n
	\begin{pmatrix}
		Y_n&Y_n^* \\
		1&1
	\end{pmatrix} = 
	\begin{pmatrix}
		Y_n&Y_n^* \\
		1&1
	\end{pmatrix}
	\begin{pmatrix}
		Y&0 \\
		0&Y^*
	\end{pmatrix}.
\end{equation*}
It turns out that the sequence $\{L_n\}_{n\in \mathbb{N}}$ is related to the sequence $\{M_n\}_{n \in \mathbb{N}}$, which is illustrated by the following theorem.
\begin{theorem}\label{thmlax}
	Let $\{Y_n=r_nY+s_n|r_n,s_n\in \mathbb{C}(X),r_n \ne 0 \}_{n\in \mathbb{N}}$ be a sequence in $\mathcal{F}$, and $\{a_n\}_{n \in \mathbb{N}}$ a sequence in $\mathbb{C}(X)$, satisfying
	\begin{equation}
		Y_n = a_n + \frac{1}{Y_{n+1}}, \quad \forall\, n \in \mathbb{N} .\label{rec1}
	\end{equation}
	Define  
	\begin{equation*}
		\begin{aligned}
			M_n := 
			\begin{pmatrix}
				0 & 1 \\
				1 & -a_n
			\end{pmatrix},\quad L_n := 
			\begin{pmatrix}
				\frac{s_n}{r_n}& \frac{r_n^2Y^2-s_n^2}{r_n}\\
				\frac{1}{r_n}& -\frac{s_n}{r_n}
			\end{pmatrix}.
		\end{aligned}
	\end{equation*}
	If the condition \eqref{iffeq} holds, that is 
	\begin{equation}
		r_{n+1} = r_{n}(r_{n+1}^2Y^2-s_{n+1}^2),\quad \forall\, n \in \mathbb{N},\label{con1}
	\end{equation}
	then 
	\begin{equation}
		L_{n+1} M_{n} =M_{n} L_{n},\quad \forall\, n \in \mathbb{N},\label{thm}
	\end{equation} 
	and vice versa.
\end{theorem}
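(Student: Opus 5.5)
The plan is to reduce both directions to Lemma \ref{iff}, using the spectral description of $L_n$. By construction, $L_n$ is the unique matrix with eigenvectors $(Y_n,1)^T$ and $(Y_n^*,1)^T$ and eigenvalues $Y$ and $Y^*=-Y$.

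For the forward direction, assume \eqref{con1}. Relation \eqref{rec1} says that $M_n(Y_n,1)^T$ is proportional to $(Y_{n+1},1)^T$. Indeed, $M_n(Y_n,1)^T=(1,\,Y_n-a_n)^T=(Y_n-a_n)(Y_{n+1},1)^T$, where $Y_n-a_n=Y_{n+1}^{-1}\neq 0$. By Lemma \ref{iff}, condition \eqref{con1} is equivalent to $Y_n^*=a_n+1/Y_{n+1}^*$. The same computation then gives $M_n(Y_n^*,1)^T=(Y_n^*-a_n)(Y_{n+1}^*,1)^T$, again with a nonzero scalar.

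Compute $L_{n+1}M_n$ on each eigenvector of $L_n$:
\[
L_{n+1}M_n(Y_n,1)^T=(Y_n-a_n)\,Y\,(Y_{n+1},1)^T=M_nL_n(Y_n,1)^T,
\]
and in the same way $L_{n+1}M_n(Y_n^*,1)^T=Y^*M_n(Y_n^*,1)^T=M_nL_n(Y_n^*,1)^T$. Since $Y_n\neq Y_n^*$ (because $r_n\neq0$ and $Y\neq0$), the vectors $(Y_n,1)^T$ and $(Y_n^*,1)^T$ form a basis of $\mathcal{F}^2$. Hence \eqref{thm} holds.

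For the converse, assume $L_{n+1}M_n=M_nL_n$. Then $M_n$, which is invertible with $\det M_n=-1$, maps the $Y^*$-eigenspace of $L_n$ into the $Y^*$-eigenspace of $L_{n+1}$. Since $Y\neq Y^*$, the $Y^*$-eigenspace of $L_{n+1}$ is exactly the line spanned by $(Y_{n+1}^*,1)^T$. So $M_n(Y_n^*,1)^T=(1,\,Y_n^*-a_n)^T$ is a multiple of $(Y_{n+1}^*,1)^T$. Because its first coordinate is $1$, the multiple is nonzero, which forces $Y_n^*-a_n=1/Y_{n+1}^*$. In particular $Y_{n+1}^*\neq0$, which holds automatically since $r_{n+1}\neq0$. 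Lemma \ref{iff} then returns \eqref{con1}.

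The only delicate point is this eigenspace identification: we must check that the eigenvalues $\pm Y$ are distinct and that the eigenlines are one-dimensional. Both hold because $Y$ is nonzero in a field of characteristic $0$.

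A direct alternative, useful as a cross-check, is to multiply out $L_{n+1}M_n-M_nL_n$ entrywise. Using $a_n=s_n-s_{n+1}/(r_{n+1}^2Y^2-s_{n+1}^2)$ and $r_n=-r_{n+1}/(r_{n+1}^2Y^2-s_{n+1}^2)$, which come from the rational and $Y$-parts of \eqref{e2}, every entry should reduce to a multiple of $r_{n+1}-r_n(r_{n+1}^2Y^2-s_{n+1}^2)$. I expect the conceptual route above to be cleaner, so I would present that one.
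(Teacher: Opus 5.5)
Your proof is correct and follows the paper's argument essentially step for step. In the forward direction you use Lemma~\ref{iff} to see that $M_n$ sends both eigenvectors $(Y_n,1)^T$ and $(Y_n^*,1)^T$ of $L_n$ to nonzero multiples of the corresponding eigenvectors of $L_{n+1}$, and then compare $L_{n+1}M_n$ and $M_nL_n$ on this basis; in the converse you identify $M_n(Y_n^*,1)^T$ as a $Y^*$-eigenvector of $L_{n+1}$ and return through Lemma~\ref{iff}, being somewhat more explicit than the paper about one-dimensional eigenlines and nonzero scalars.

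Your optional cross-check has both signs flipped. Since $1/Y_{n+1}=(r_{n+1}Y-s_{n+1})/(r_{n+1}^2Y^2-s_{n+1}^2)$, the $Y$-part and the rational part of \eqref{e2} give $r_n=r_{n+1}/(r_{n+1}^2Y^2-s_{n+1}^2)$ and $a_n=s_n+s_{n+1}/(r_{n+1}^2Y^2-s_{n+1}^2)$. The first of these identities is literally \eqref{con1}, so the correctly signed computation shows that \eqref{con1}, and hence $L_{n+1}M_n=M_nL_n$, is automatic once \eqref{rec1} holds with $a_n\in\mathbb{C}(X)$. Equivalently, applying $Y\mapsto -Y$ to \eqref{rec1} gives the conjugate recursion for free.
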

\begin{proof}
	The proof is based on the geometric interpretation of $\{L_n\}_{n\in \mathbb{N}}$ and $\{M_n\}_{n \in \mathbb{N}}$. If the condition \eqref{con1} holds, by Lemma \ref{iff}, then we have
	\begin{equation}
		M_n
		\begin{pmatrix}
			Y_n&Y_n^* \\
			1&1
		\end{pmatrix} = 
		\begin{pmatrix}
			Y_{n+1}&Y_{n+1}^* \\
			1&1
		\end{pmatrix}
		\begin{pmatrix}
			g_n&0 \\
			0&h_n
		\end{pmatrix},\label{proofe1}
	\end{equation}
	for some $g_n,h_n \in \mathcal{F}$. Recall that $\{L_n\}_{n\in \mathbb{N}}$ satisfy
	\begin{equation}
		L_n
		\begin{pmatrix}
			Y_n&Y_n^* \\
			1&1
		\end{pmatrix} = 
		\begin{pmatrix}
			Y_n&Y_n^* \\
			1&1
		\end{pmatrix}
		\begin{pmatrix}
			Y&0 \\
			0&Y^*
		\end{pmatrix} .\label{proofe2}
	\end{equation}
	Since $(Y_0,1)$, $(Y_0^*,1)$ forms a basis of $\mathcal{F}^2$. To show \eqref{thm}, it suffices to prove
	\begin{equation*}
		L_{n+1} M_{n} 
		\begin{pmatrix}
			Y_n&Y_n^* \\
			1&1
		\end{pmatrix}=
		M_{n} L_{n}  
		\begin{pmatrix}
			Y_n&Y_n^* \\
			1&1
		\end{pmatrix},
		\quad \forall\, n \in \mathbb{N},
	\end{equation*}
	which immediately follows  from \eqref{proofe1} and \eqref{proofe2}.
	\par
	Conversely, we need to show that \eqref{con1} holds under the assumption that $L_{n+1} M_{n} =M_{n} L_{n}$ for all natural number $n$. By Lemma \ref{iff}, it suffices to show $M_n$ transforms $(Y_n^*:1)$ to $(Y_{n+1}^*:1)$. Note that
	\begin{equation*}
		\begin{aligned}
			L_{n+1} M_{n} 
			\begin{pmatrix}
				Y_n^*\\
				1
			\end{pmatrix} 
			=
			M_{n} L_{n}
			\begin{pmatrix}
				Y_n^*\\
				1
			\end{pmatrix} 
			=Y^*
			M_{n}
			\begin{pmatrix}
				Y_n^*\\
				1
			\end{pmatrix} 
		\end{aligned}.
	\end{equation*}
	This shows $M_{n}
	\begin{pmatrix}
		Y_n^*\\
		1
	\end{pmatrix}$ is an eigenvector of $L_{n+1}$ with the eigenvalue $Y^*$. On the other hand, $\begin{pmatrix}
		Y_{n+1}^*\\
		1
	\end{pmatrix}$ is also an eigenvector of $L_{n+1}$ with the eigenvalue $Y^*$. Thus we have in $\mathbb{P}^1(\mathcal{F})$
	\begin{equation*}
		\begin{aligned}
			M_n
			\begin{pmatrix}
				Y_{n}^*\\
				\cdot \cdot\\
				1
			\end{pmatrix} 
			=
			\begin{pmatrix}
				Y_{n+1}^*\\
				\cdot \cdot\\
				1
			\end{pmatrix}
		\end{aligned},
	\end{equation*}
	which completes the proof.
\end{proof}
\begin{remark}
	Note that Theorem \ref{thmlax} does not require $a_n = \left \lfloor Y_n \right \rfloor$. This means that the equation \eqref{thm} follows once we have two sequences $\{Y_n=r_nY+s_n|r_n,s_n\in \mathbb{C}(X),r_n \ne 0 \}_{n\in \mathbb{N}}$ and $\{a_n\}_{n \in \mathbb{N}}$ satisfying the recursion \eqref{rec1} and the condition \eqref{con1}. In theory of integrable systems, $\{L_n\}_{n\in \mathbb{N}}$ and $\{M_n\}_{n \in \mathbb{N}}$ are known as Lax pairs, and equations \eqref{thm} are often referred to as compatibility conditions. Theorem \ref{thmlax} suggests that the compatibility conditions of Lax pairs $\{L_n\}_{n\in \mathbb{N}}$ and $\{M_n\}_{n \in \mathbb{N}}$ are equivalent to the condition \eqref{con1}, which also implies that
	\begin{equation*}
		\begin{aligned}
			L_{n+1} M_{n} =M_{n} L_{n}  \Leftrightarrow
			\begin{cases}
				& Y_n = a_n + \frac{1}{Y_{n+1}} \\
				& Y^*_n  = a_n + \frac{1}{Y^*_{n+1}}
			\end{cases}.
		\end{aligned}
	\end{equation*}
It is also noted that this setup generalizes the  discussions for the Lax pairs on hyperelliptic curves in \cite{hone2021}, where choice of the corresponding hyperelliptic functions is our special case and satisfy the condition \eqref{con1}. Furthermore, since we do not use the explicit formula of $Y$, the above results can be easily generalized to an arbitrary proper quadratic field extension over $\mathbb{C}(X)$.
\end{remark}

\subsection{Quadratic orthogonal pairs in \texorpdfstring{$\mathcal{F}$}{}}
\label{conventions}
\par
Suppose that we have two hyperelliptic functions $Y_0 = r_0Y+s_0$ and $\Tilde{Y}_0 = \Tilde{r}_0Y+\Tilde{s}_0$, where $r_0$, $s_0$, $\Tilde{r}_0$ and $\Tilde{s}_0$ are all rational functions in $\mathbb{C}(X)$, $r_0$ and $\Tilde{r}_0$ are nonzero. Following the previous section, one can construct $\{Y_n=r_nY+s_n|r_n,s_n\in \mathbb{C}(X),r_n \ne 0 \}_{n\in \mathbb{N}}$, $\{a_n\}_{n \in \mathbb{N}}$, $\{L_n\}_{n\in \mathbb{N}}$ and $\{M_n\}_{n \in \mathbb{N}}$ together with $\{\Tilde{Y}_n=\Tilde{r}_nY+\Tilde{s}_n|\Tilde{r}_n,\Tilde{s}_n\in \mathbb{C}(X),\Tilde{r}_n \ne 0 \}_{n\in \mathbb{N}}$, $\{\Tilde{a}_n\}_{n \in \mathbb{N}}$, $\{\Tilde{L}_n\}_{n\in \mathbb{N}}$ and $\{\Tilde{M}_n\}_{n \in \mathbb{N}}$ satisfying 
\begin{equation*}
	\begin{aligned}
		&a_n = \left \lfloor Y_n \right \rfloor,&&Y_n = a_n +\frac{1}{Y_{n+1}},\\
		&M_n = 
		\begin{pmatrix}
			0 & 1 \\
			1 & -a_n
		\end{pmatrix},
		&&L_n = 
		\begin{pmatrix}
			\frac{s_n}{r_n}& \frac{r_n^2Y^2-s_n^2}{r_n}\\
			\frac{1}{r_n}& -\frac{s_n}{r_n},
		\end{pmatrix},
	\end{aligned}
\end{equation*}
 and 
\begin{equation*}
	\begin{aligned}
		&\Tilde{a}_{n} = \left \lfloor \Tilde{Y}_{n} \right \rfloor,&& \Tilde{Y}_{n} = \Tilde{a}_{n} +\frac{1}{\Tilde{Y}_{n+1}},\\
		&\Tilde{M}_{n} = 
		\begin{pmatrix}
			0 & 1 \\
			1 & -\Tilde{a}_{n}
		\end{pmatrix},
		&&\Tilde{L}_{n} = 
		\begin{pmatrix}
			\frac{\Tilde{s}_{n}}{\Tilde{r}_{n}}& \frac{\Tilde{r}_{n}^2Y^2-\Tilde{s}_{n}^2}{\Tilde{r}_{n}}\\
			\frac{1}{\Tilde{r}_{n}}& -\frac{\Tilde{s}_{n}}{\Tilde{r}_{n}}
		\end{pmatrix},
	\end{aligned}
\end{equation*}
for all natural number $n$. Suppose that $\Tilde{Y}_{0}$ is related  to $Y_0$ according to 
\begin{equation}
	\Tilde{Y}_{0}  Y_0^* = - 1,\label{symmetric}
\end{equation}
then it is natural to ask what are the relations between $\{\{Y_n\}_{n\in \mathbb{N}},\{r_n\}_{n\in \mathbb{N}},\{s_n\}_{n\in \mathbb{N}},\{a_n\}_{n \in \mathbb{N}},$ $ \{L_n\}_{n\in \mathbb{N}},\{M_n\}_{n \in \mathbb{N}}\}$ and $\{\{\Tilde{Y}_n \}_{n\in \mathbb{N}},\{\Tilde{r}_n \}_{n\in \mathbb{N}},\{\Tilde{s}_n \}_{n\in \mathbb{N}}, \{\Tilde{a}_n\}_{n \in \mathbb{N}}, \{\Tilde{L}_n\}_{n\in \mathbb{N}}, \{\Tilde{M}_n\}_{n \in \mathbb{N}}\}$. Here we partially answer this question for a special case that is examined by Hone in \cite{hone2021}. Note that when $\Tilde{Y}_{0}$ and $Y_0^*$ are interpreted as the slopes of two lines in $\mathbb{P}^1(\mathcal{F})$, the relation \eqref{symmetric} is an analogue of the orthogonal condition for these two lines. For this reason, we call a pair of functions $(Y_0,\Tilde{Y}_0)$ a \textit{quadratic orthogonal pair} if the relation \eqref{symmetric} is satisfied.
\par
Now we introduce the choice of $Y_0$ made by Hone in \cite{hone2021}. Consider the hyperelliptic curve of genus 1 given by 
\begin{equation*}
	\mathcal{C}:Y^2 = (X^2+f)^2+4u(X-v)
\end{equation*}
as in Section \ref{genus1curve}. Then one can uniquely define a map $\Phi$ as follows
\begin{equation*}
	\begin{aligned}
		\Phi :  \mathbb{C}^2 &\longrightarrow   \mathbb{C}^2& \\
		(x,y) &  \longmapsto  \left(-x-y^2-f,-\frac{u}{x+y^2+f}-y\right).&
	\end{aligned}
\end{equation*}
It turns out in \cite{hone2021} that the map $\Phi$ is birational, symplectic and integrable with a conserved quantity defined by
\begin{equation}
	H= x^2+fx+ xy^2-uy .\label{H}
\end{equation}
\par
Since $\Phi$ is birational, its inverse $\Phi^{-1}$ is well defined. Therefore if one starts with a point $(d_0,v_0)$ in $\mathbb{C}^2$, one can generate a sequence of points $\{(d_n,v_n)\}_{n \in \mathbb{Z}}$ by setting
\begin{equation}
	(d_n,v_n) := \Phi^n((d_0,v_0)),\quad n \in \mathbb{Z},\label{map}
\end{equation}
provided that $\Phi^n((d_0,v_0)) $ are well defined for all integers $n$.
Note that the well-definedness of $\Phi^n((d_0,v_0))$ is equivalent to the non-zeroness of $d_n$. Under this assumption, if we choose $u_0$ to be another nonzero complex number, we can define a sequence $\{u_n\}_{n \in \mathbb{Z}}$ via $\{d_n\}_{n \in \mathbb{Z}}$ by setting
\begin{equation}
	u_nu_{n-1}+4d_n = 0, \quad n\in \mathbb{Z} .\label{u}
\end{equation}
We conclude that from an initial point $(d_0,v_0,u_0)$ satisfying $d_0u_0\neq0$, we can generate a sequence of points $\{(d_n,v_n,u_n)\}_{n \in \mathbb{Z}}$ satisfying \eqref{map} and \eqref{u} if $\Phi^n((d_0,v_0)) $ are well defined for all integers $n$. Using this sequence, one can choose $\{Y_n=r_n Y+s_n|r_n,s_n\in \mathbb{C}(X),r_n \ne 0 \}_{n\in \mathbb{Z}}$ and $\{a_n\}_{n \in \mathbb{Z}}$ with 
\begin{equation}
	\begin{aligned}
		&r_n := \frac{1}{u_n (X-v_n)},\qquad s_n :=\frac{X^2+f+2d_n}{u_n (X-v_n)},\qquad a_n :=\frac{2}{u_n}(X+v_n).
	\end{aligned}
	\label{choice}
\end{equation}
\par
When given specific conserved quantity $H$, the above choice will satisfy the condition in Theorem \ref{thmlax} together with $a_n = \left \lfloor Y_n \right \rfloor$, which also shows the existence of a Lax pair. This can be summarized in the following lemma.
\begin{lemma}    \label{Ylemma0}
	Suppose $H = -uv$, where $H$ is defined by \eqref{H} with $(x,y)$ being replaced by $(d_n,v_n)$, then the sequences $\{Y_n=r_nY+s_n|r_n,s_n\in \mathbb{C}(X),r_n \ne 0 \}_{n\in \mathbb{Z}}$ and $\{a_n\}_{n \in \mathbb{Z}}$ defined according to \eqref{choice} satisfy the following properties:
	\begin{enumerate}[(i)]
		\item $Y_n= a_n+\frac{1}{Y_{n+1}},\quad \forall\, n \in \mathbb{Z}$;
		\item $a_n = \left \lfloor Y_n \right \rfloor,\quad \forall\, n \in \mathbb{Z}$;
		\item $Y_{n}=\langle a_n,a_{n+1},\ldots \rangle, \quad \forall\, n \in \mathbb{Z} $;
		\item $Y_n$ is J-expressible and Lax representable for any integer $n$.
	\end{enumerate}
\end{lemma}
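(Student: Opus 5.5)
The plan is to verify (i) by a direct algebraic computation in $\mathcal{F}$, read off (ii) from the leading terms of the Laurent expansion at $\infty$, and then obtain (iii) and (iv) as formal consequences of (i) and (ii).

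\textbf{Step (i).} Write $P_n:=X^2+f+2d_n$, so that $Y_n=(Y+P_n)/(u_n(X-v_n))$. First I would compute
\begin{equation*}
Y_n-a_n=\frac{Y+X^2+f+2d_n-2(X^2-v_n^2)}{u_n(X-v_n)}=\frac{Y-(X^2-f-2d_n-2v_n^2)}{u_n(X-v_n)}.
\end{equation*}
The first component of $\Phi$ gives $d_{n+1}=-d_n-v_n^2-f$, hence $X^2-f-2d_n-2v_n^2=P_{n+1}$. Consequently $Y_n-a_n=(Y-P_{n+1})/(u_n(X-v_n))$, and (i) is equivalent to
\begin{equation*}
Y^2-P_{n+1}^2=u_nu_{n+1}(X-v_n)(X-v_{n+1})=-4d_{n+1}(X-v_n)(X-v_{n+1}),
\end{equation*}
where the last equality uses \eqref{u}. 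The left side equals $-4\big(d_{n+1}(X^2+f)+d_{n+1}^2-uX+uv\big)$, so it remains to match coefficients, with $d:=d_{n+1}$.
\begin{itemize}
\item The $X^2$ terms agree trivially.
\item The $X$ terms require $u=d(v_n+v_{n+1})$. This is the second component of $\Phi$, since $v_{n+1}=-u/(d_n+v_n^2+f)-v_n=u/d_{n+1}-v_n$.
\item The constant terms require $d^2+fd+uv=d\,v_nv_{n+1}$. Substituting $u=d(v_n+v_{n+1})$ into $H(d_{n+1},v_{n+1})=d^2+fd+dv_{n+1}^2-uv_{n+1}$ gives exactly $d^2+fd-dv_nv_{n+1}$.
\end{itemize}
So the constant-term identity is precisely the hypothesis $H=-uv$ evaluated at $(d_{n+1},v_{n+1})$. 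Since $H$ is conserved by $\Phi$, this holds for every $n\in\mathbb{Z}$. I expect this coefficient matching, and in particular seeing that the constant term is exactly the conserved-quantity condition, to be the only real content of the proof.

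\textbf{Step (ii).} Expand at $\infty$: $Y=X^2+f+2uX^{-1}+O(X^{-2})$. Then $Y+P_n=2X^2+c_n+O(X^{-1})$ with $c_n=2f+2d_n$. Polynomial division gives
\begin{equation*}
\frac{2X^2+c_n}{X-v_n}=2(X+v_n)+\frac{2v_n^2+c_n}{X-v_n},
\end{equation*}
and the remainder term is $O(X^{-1})$. Hence $\lfloor Y_n\rfloor=2(X+v_n)/u_n=a_n$. This is consistent with (i), because $Y_{n+1}$ has leading term $2X/u_{n+1}$, so $1/Y_{n+1}=O(X^{-1})$.

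\textbf{Steps (iii) and (iv).} By (i) and (ii), the recursion $Y_n=a_n+1/Y_{n+1}$ with $a_n=\lfloor Y_n\rfloor$ is exactly the continued fraction algorithm of Section \ref{A} started at $Y_n$, and it never terminates. This gives $Y_n=\langle a_n,a_{n+1},\ldots\rangle$, where convergence holds in the $X^{-1}$-adic sense because each $a_k$ has degree $1$.

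Since every $a_k=\tfrac{2}{u_k}X+\tfrac{2v_k}{u_k}$ is linear with nonzero leading coefficient, $Y_n$ is J-expressible.

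For Lax representability, the identity established in (i) is a polynomial identity in $Y$ that uses only $Y^2$. It therefore persists under the Galois conjugation $Y\mapsto Y^*=-Y$, giving $Y_n^*=a_n+1/Y_{n+1}^*$. By Lemma \ref{iff}, condition \eqref{iffeq} then holds for every $n$, which is exactly Lax representability.
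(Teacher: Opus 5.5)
Your proof is correct and follows essentially the same route as the paper, which obtains the lemma as the $g=1$ case of Lemma \ref{Ylemma0G}. There, (i) and the Lax property rest on the conserved identity $P_{n+1}^2+Q_{n+1}Q_n=Y^2$, which is your $Y^2-P_{n+1}^2=u_nu_{n+1}(X-v_n)(X-v_{n+1})$; (ii) rests on $Y_n-a_n=(Y-P_{n+1})/Q_n\in\mathcal{O}(X^{-1})$; and (iii)--(iv) follow formally. Your explicit coefficient matching also spells out a step the paper leaves implicit, namely that in genus $1$ the map $\Phi$, relation \eqref{u} and the condition $H=-uv$ are exactly what produce that conserved identity.
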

\begin{proof}
	The proof follows from a special case of Lemma \ref{Ylemma0G} when $g = 1$ in Appendix \ref{generalizepairs}.
\end{proof}
\begin{remark}
The choices of $d_0$ and $v_0$ are not independent since the condition $H= -uv$  is imposed. In addition, under this restriction, it is easy to show that the number of the choices of $(d_0,v_0)$ that do not satisfy the non-degenerating condition is at most countable and they are nothing but the points contained in the two particular orbits that starts or ends at the singularities of the map $\Phi$.
\end{remark}
\par
On the other hand, based on the sequence $\{(d_n,v_n,u_n)\}_{n \in \mathbb{Z}}$ obtained by \eqref{map} and \eqref{u}, one can construct another pair of sequences $\{\Tilde{Y}_n=\Tilde{r}_nY+\Tilde{s}_n|\Tilde{r}_n,\Tilde{s}_n\in \mathbb{C}(X),\Tilde{r}_n \ne 0 \}_{n\in \mathbb{N}}$ and $\{\Tilde{a}_n\}_{n \in \mathbb{N}}$ by setting
\begin{equation}
	\begin{aligned}
		&\Tilde{r}_{n} := \frac{1}{u_{n-1} (X-v_{n-1})}, \quad \Tilde{s}_{n} :=\frac{X^2+f+2d_n}{u_{n-1} (X-v_{n-1})}, \quad \Tilde{a}_{n} :=\frac{2}{u_{n-1}}(X+v_{n-1}).
	\end{aligned}
	\label{choice2}
\end{equation}
The properties of 
$\{\Tilde{Y}_n=\Tilde{r}_nY+\Tilde{s}_n|\Tilde{r}_n,\Tilde{s}_n\in \mathbb{C}(X),\Tilde{r}_n \ne 0 \}_{n\in \mathbb{N}}$ and $\{\Tilde{a}_n\}_{n \in \mathbb{N}}$ are summarized in the following lemma.
\begin{lemma}	\label{Ylemma1}
	Suppose $H = -uv$, where $H$ is defined by \eqref{H} with $(x,y)$ being replaced by $(d_n,v_n)$, then the sequences $\{\Tilde{Y}_n=\Tilde{r}_nY+\Tilde{s}_n|\Tilde{r}_n,\Tilde{s}_n\in \mathbb{C}(X),\Tilde{r}_n \ne 0 \}_{n\in \mathbb{N}}$ and $\{\Tilde{a}_n\}_{n \in \mathbb{N}}$ defined according to \eqref{choice2} satisfy the following properties:
	\begin{enumerate}[(i)]
		\item $\Tilde{Y}_{n}= \Tilde{a}_{n}+\frac{1}{\Tilde{Y}_{n-1}},\quad \forall\, n \in \mathbb{Z}$;
		\item $\Tilde{a}_{n} = \left \lfloor \Tilde{Y}_{n} \right \rfloor,\quad \forall\, n \in \mathbb{Z}$;
		\item $\Tilde{Y}_{n}= \langle \Tilde{a}_{n},\Tilde{a}_{n-1},\ldots \rangle, \quad \forall\, n \in \mathbb{Z}  $;
		\item $\Tilde{Y}_{n}$ is J-expressible and Lax representable for any integer $n$.
	\end{enumerate}
\end{lemma}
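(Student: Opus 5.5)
The plan is to reduce Lemma \ref{Ylemma1} to Lemma \ref{Ylemma0} by a time-reversal of the orbit of $\Phi$. No new continued-fraction analysis is needed. Given the bilateral sequence $\{(d_n,v_n,u_n)\}_{n\in\mathbb{Z}}$, I introduce the reversed data
\begin{equation*}
	d'_m := d_{-m},\qquad v'_m := v_{-m-1},\qquad u'_m := u_{-m-1},\qquad m\in\mathbb{Z}.
\end{equation*}
Comparing \eqref{choice2} with \eqref{choice}, the functions $\Tilde{r}_{-m},\Tilde{s}_{-m},\Tilde{a}_{-m}$ are exactly the functions $r'_m,s'_m,a'_m$ built from $(d'_m,v'_m,u'_m)$ by \eqref{choice}. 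Hence $\Tilde{Y}_{-m}=Y'_m$ and $\Tilde{a}_{-m}=a'_m$. Under $n=-m$, statements (i)--(iv) of Lemma \ref{Ylemma1} turn into statements (i)--(iv) of Lemma \ref{Ylemma0} for the primed data. For example, (i) becomes $Y'_m=a'_m+1/Y'_{m+1}$, and (iii) becomes $Y'_m=\langle a'_m,a'_{m+1},\ldots\rangle$. So it suffices to check that the primed data satisfy every hypothesis of Lemma \ref{Ylemma0}.

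First I rewrite $\Phi$ as two symmetric relations, using $d_{n+1}=-(d_n+v_n^2+f)$:
\begin{equation*}
	d_{n+1}+d_n=-v_n^2-f,\qquad v_{n+1}+v_n=\frac{u}{d_{n+1}}.
\end{equation*}
To verify $(d'_{m+1},v'_{m+1})=\Phi(d'_m,v'_m)$, I need $d_{-m-1}=-d_{-m}-v_{-m-1}^2-f$ and $v_{-m-2}=u/d_{-m-1}-v_{-m-1}$. These are the two relations above with $n=-m-1$ and $n=-m-2$, respectively. The relation $u'_mu'_{m-1}+4d'_m=0$ reads $u_{-m-1}u_{-m}+4d_{-m}=0$, which is \eqref{u} with index $-m$. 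The nonvanishing conditions carry over directly: $d'_m=d_{-m}\neq 0$ and $u'_m\neq 0$.

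The one step that is not pure bookkeeping is the conserved-quantity hypothesis $H(d'_m,v'_m)=-uv$. Here the point is mixed, $(d_{-m},v_{-m-1})$, and is not a point $(d_k,v_k)$ of the original orbit, so conservation of $H$ along $\Phi$ does not apply directly. I handle this by computing directly from \eqref{H}:
\begin{equation*}
	H(d_{n+1},v_n)-H(d_n,v_n)=(d_{n+1}-d_n)\bigl(d_{n+1}+d_n+f+v_n^2\bigr)=0,
\end{equation*}
where the vanishing follows from the first relation above. Taking $n=-m-1$ gives $H(d'_m,v'_m)=H(d_{-m},v_{-m-1})=H(d_{-m-1},v_{-m-1})=-uv$.

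With all hypotheses checked, Lemma \ref{Ylemma0} applies to the primed sequence. Translating its four conclusions back via $n=-m$ yields (i)--(iv) of Lemma \ref{Ylemma1}.
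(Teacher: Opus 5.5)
Your reduction is correct, and it takes a different route from the paper's.

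\textbf{The paper's route.} The paper proves Lemma \ref{Ylemma1} as the case $g=1$ of Lemma \ref{Ylemma1G}. Items (i), (iii), (iv) are said to follow ``in the same way'' as in Lemma \ref{Ylemma0G}, i.e.\ from the iteration \eqref{iteration} and the invariant $P_n^2+Q_nQ_{n-1}=Y^2$. Item (ii) is a direct comparison of $\tilde Y_n=(Y+P_n)/Q_{n-1}$ with $Y_{n-1}=(Y+P_{n-1})/Q_{n-1}$:
\[
\lfloor \tilde Y_n\rfloor=\lfloor Y_{n-1}\rfloor+\lfloor (P_n-P_{n-1})/Q_{n-1}\rfloor=a_{n-1}.
\]
This uses Lemma \ref{Ylemma0}(ii) at index $n-1$ and the fact that $P_n-P_{n-1}=2(d_n-d_{n-1})$ is constant while $Q_{n-1}$ is linear.

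\textbf{Your route.} You identify $\tilde Y_{-m}$ with the $m$-th function $Y'_m$ of a second $\Phi$-orbit $(d_{-m},v_{-m-1},u_{-m-1})$. The only nontrivial input is your half-step identity $H(d_{n+1},v_n)=H(d_n,v_n)$, which puts this orbit on the same level set $H=-uv$. After that, all four items transfer from Lemma \ref{Ylemma0} at once, with no new floor computation.

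\textbf{What each approach buys.} Your argument is structural. It turns into a proof the observation the paper only states in a remark: the maps $\phi_{Y_n}$ and $\phi_{\tilde Y_n}$ coincide, i.e.\ time reversal composed with the half-shift $(d_n,v_n)\mapsto(d_n,v_{n-1})$ is a symmetry of the system. The same reindexing also works in any genus: $(P'_m,Q'_m):=(P_{-m},Q_{-m-1})$ with $a'_m=a_{-m-1}$ again satisfies \eqref{iteration} and \eqref{conserved}, so Lemma \ref{Ylemma1G} likewise reduces to Lemma \ref{Ylemma0G}. The paper's route instead checks the properties directly from the explicit formulas and never needs to know that the underlying map is reversible in this way.
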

\begin{proof}
	The proof follows from a special case of Lemma \ref{Ylemma1G} when $g = 1$ in Appendix \ref{generalizepairs}.
\end{proof}
\par
Moreover, we observe the intimate relationship between two pairs of sequences $\{{a}_n\}_{n \in \mathbb{N}}$,  $\{{Y}_n\}_{n \in \mathbb{N}}$ and  $\{\Tilde{a}_n\}_{n \in \mathbb{N}}$, $\{\Tilde{Y}_n\}_{n \in \mathbb{N}}$, which are given in the following theorem.
\begin{theorem}	\label{pair}
	Suppose that  two pairs of sequences $\{Y_n=r_nY+s_n|r_n,s_n\in \mathbb{C}(X),r_n \ne 0 \}_{n\in \mathbb{Z}}$, $\{a_n\}_{n \in \mathbb{Z}}$ and $\{\Tilde{Y}_n=\Tilde{r}_nY+\Tilde{s}_n|\Tilde{r}_n,\Tilde{s}_n\in \mathbb{C}(X),\Tilde{r}_n \ne 0 \}_{n\in \mathbb{N}}$ and $\{\Tilde{a}_n\}_{n \in \mathbb{N}}$ are defined by \eqref{choice} and \eqref{choice2} respectively, then
	\begin{enumerate}[(i)]
		\item $\Tilde{a}_{n}=a_{n-1},\quad \forall\, n \in \mathbb{Z}$;
		\item  for any integer $n$, $(Y_n,\Tilde{Y}_n)$ is a quadratic orthogonal pair, that is, $\tilde Y_nY_n^*=-1$.
	\end{enumerate}
\end{theorem}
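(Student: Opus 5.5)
Part (i) is immediate from the definitions: by \eqref{choice2}, $\Tilde{a}_n=\frac{2}{u_{n-1}}(X+v_{n-1})$, and by \eqref{choice} with $n$ replaced by $n-1$, $a_{n-1}=\frac{2}{u_{n-1}}(X+v_{n-1})$. So the two agree for every integer $n$, and nothing further is needed.

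For part (ii) I would compute $\Tilde{Y}_nY_n^*$ directly. Write $Y_n^*=-r_nY+s_n$ and $\Tilde{Y}_n=\Tilde{r}_nY+\Tilde{s}_n$ with the explicit formulas from \eqref{choice} and \eqref{choice2}. Both numerators contain $X^2+f+2d_n$ (this is where the index shift in \eqref{choice2}, keeping $d_n$ but using $u_{n-1},v_{n-1}$, matters). Set $P_n:=X^2+f+2d_n$. Then
\begin{equation*}
\Tilde{Y}_nY_n^*=\frac{(Y+P_n)(P_n-Y)}{u_{n-1}u_n(X-v_{n-1})(X-v_n)}=\frac{P_n^2-Y^2}{u_{n-1}u_n(X-v_{n-1})(X-v_n)}.
\end{equation*}
Using \eqref{u}, $u_{n-1}u_n=-4d_n$, so the claim $\Tilde{Y}_nY_n^*=-1$ is equivalent to the polynomial identity
\begin{equation*}
Y^2-P_n^2=-4d_n(X-v_{n-1})(X-v_n).
\end{equation*}
After substituting $Y^2=(X^2+f)^2+4u(X-v)$, I would expand. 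The left side is $-4d_n(X^2+f)-4d_n^2+4u(X-v)$, so the quartic terms cancel and both sides are quadratic in $X$. Comparing coefficients, the $X^2$ terms agree ($-4d_n$ on each side). The remaining conditions are
\begin{equation*}
u=d_n(v_{n-1}+v_n),\qquad -4d_nf-4d_n^2-4uv=-4d_nv_{n-1}v_n.
\end{equation*}

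The main step is verifying these two identities from the map $\Phi$ and the constraint $H=-uv$. Since $(d_n,v_n)=\Phi(d_{n-1},v_{n-1})$, we have $d_n=-d_{n-1}-v_{n-1}^2-f$ and $v_n=-u/(d_{n-1}+v_{n-1}^2+f)-v_{n-1}=u/d_n-v_{n-1}$. This gives $d_n(v_n+v_{n-1})=u$ immediately, which is the $X^1$ coefficient. For the constant term I need $d_n^2+fd_n+uv=d_nv_{n-1}v_n$. I would substitute $v_{n-1}=u/d_n-v_n$, which turns the right side into $uv_n-d_nv_n^2$. The required identity then becomes $d_n^2+fd_n+d_nv_n^2-uv_n=-uv$, i.e. $H(d_n,v_n)=-uv$, which is exactly the hypothesis.

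The only obstacle is this constant-term check, and it is resolved by the conservation law together with the $v$-update of $\Phi$. I would also note that the argument uses \eqref{u} and $\Phi$ at index $n$ for all $n\in\mathbb{Z}$, which is legitimate because $d_n\neq 0$ is assumed throughout.
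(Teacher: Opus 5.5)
Your proof is correct and is essentially the paper's argument: (i) is read off from \eqref{choice} and \eqref{choice2}, and (ii) reduces to the genus-one case of the conserved identity $P_n^2+Q_nQ_{n-1}=Y^2$ used in the proof of Theorem~\ref{pairG}, namely $Y^2-(X^2+f+2d_n)^2=-4d_n(X-v_{n-1})(X-v_n)$. You make explicit what the paper calls immediate, checking this identity coefficientwise from the $v$-update of $\Phi$ and the standing assumption $H(d_n,v_n)=-uv$; that assumption is genuinely needed for (ii), even though the statement does not repeat it.
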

\begin{proof}
	The proof immediately follows  from \eqref{choice} and \eqref{choice2}.
\end{proof}
\begin{remark}

In general, there are analogues of Lemma \ref{Ylemma0}, Lemma \ref{Ylemma1} and Theorem \ref{pair} for general hyperelliptic curves. The general versions of Lemma \ref{Ylemma0}, Lemma \ref{Ylemma1} and Theorem \ref{pair}, along with their proofs, can be found in Appendix \ref{generalizepairs}.
\end{remark}

\subsection{Hankel determinant expressions} \label{sec:Hankelexpressions}

Now we clarify why the notion of a quadratic orthogonal pair is useful in generalizing the Hankel determinant solution of the Somos-4 recurrence to the bilateral case.
\par
Let $n$ be an arbitrary integer. It follows from Lemma 
\ref{Ylemma0}, \ref{Ylemma1} and Theorem \ref{pair} that
\begin{equation*}
	\begin{aligned}
		(Y_{n})^{-1}&=(\langle a_n,a_{n+1},\ldots \rangle  )^{-1} \\
		&=\langle 0,a_n,a_{n+1},\ldots \rangle  
	\end{aligned}
\end{equation*}
and 
\begin{equation*}
	\begin{aligned}
		-(\Tilde{Y}_{n})^{-1} &= -(\langle \Tilde{a}_{n},\Tilde{a}_{n-1},\ldots \rangle  )^{-1} \\
		&=(\langle -\Tilde{a}_{n},-\Tilde{a}_{n-1},\ldots \rangle  )^{-1} \\
		&=\langle 0,-\Tilde{a}_{n},-\Tilde{a}_{n-1},\ldots \rangle   \\
		&=\langle 0,-a_{n-1},-a_{n-2},\ldots \rangle  .
	\end{aligned}
\end{equation*}
Suppose  that $(Y_{n})^{-1}$ and $-(\Tilde{Y}_{n})^{-1}$ are expanded into Laurent series as follows
\begin{equation}
	\begin{aligned}
		(Y_{n})^{-1} &= s_0^{(n)} X^{-1} + s_1^{(n)} X^{-2} + \cdots,\\
		-(\Tilde{Y}_{n})^{-1} & = \Tilde{s}_0^{(n)}X^{-1} + \Tilde{s}_1^{(n)}X^{-2} + \cdots,
	\end{aligned}
\end{equation}
then we have the following equalities
\begin{equation}
	\begin{aligned}
		\langle 0,a_n,a_{n+1},\ldots \rangle   &= s_0^{(n)} X^{-1} + s_1^{(n)} X^{-2} + \cdots, \\
		\langle 0,-a_{n-1},-a_{n-2},\ldots \rangle  & = \Tilde{s}_0^{(n)}X^{-1} + \Tilde{s}_1^{(n)}X^{-2} + \cdots.
	\end{aligned}
\end{equation}
Noticing that $a_k = \frac{2}{u_k}(X+v_k)$, it follows from Theorem \ref{connection} that
\begin{equation*}
	\begin{aligned}
		- (-1)^{-1} \left(\frac{2}{u_n}\right)^{-1}  &= \Delta_{1}^{(n)},\\
		- \left(\frac{2}{u_{n+k-1}}\right)^{-1} \left(\frac{2}{u_{n+k}}\right)^{-1} &= \frac{\Delta_{k-1}^{(n)}\Delta_{k+1}^{(n)}}{(\Delta_{k}^{(n)})^2}, \quad \forall\, k \ge 1,
	\end{aligned}
\end{equation*}
and
\begin{equation*}
	\begin{aligned}
		- (-1)^{-1} \left(-\frac{2}{u_{n-1}}\right)^{-1}  &= \Tilde{\Delta}_{1}^{(n)},\\
		- \left(-\frac{2}{u_{n-k}}\right)^{-1} \left(-\frac{2}{u_{n-k-1}}\right)^{-1} &= \frac{\Tilde{\Delta}_{k-1}^{(n)}\Tilde{\Delta}_{k+1}^{(n)}}{(\Tilde{\Delta}_{k}^{(n)})^2}, \quad \forall\, k \ge 1,
	\end{aligned}
\end{equation*}
where 
\begin{equation*}
	\begin{aligned}
		\Delta_{k}^{(n)} := \det(s^{(n)}_{i+j-2})_{1\le i,j \le k},\quad\qquad \Tilde{\Delta}_{k}^{(n)} := \det(\Tilde{s}^{(n)}_{i+j-2})_{1\le i,j \le k}, \quad \forall\, k \ge 1,
	\end{aligned}
\end{equation*}
and we use the convention $\Delta_0^{(n)} = \Tilde{\Delta}_{0}^{(n)} = 1$. By virtue of \eqref{u}, the above equations become
\begin{equation*}
	\begin{aligned}
		\frac{u_n}{2}  = \Delta_{1}^{(n)}, \qquad d_{n+k} = \frac{\Delta_{k-1}^{(n)}\Delta_{k+1}^{(n)}}{(\Delta_{k}^{(n)})^2}, \quad \forall\, k \ge 1,
	\end{aligned}
\end{equation*}
and
\begin{equation*}
	\begin{aligned}
		-\frac{u_{n-1}}{2}  = \Tilde{\Delta}_{1}^{(n)} \qquad d_{n-k} = \frac{\Tilde{\Delta}_{k-1}^{(n)}\Tilde{\Delta}_{k+1}^{(n)}}{(\Tilde{\Delta}_{k}^{(n)})^2}, \quad \forall\, k \ge 1.
	\end{aligned}
\end{equation*}
If we introduce $\{\tau_{n+k}^{(n)}\}_{k \in \mathbb{Z}}$ by setting
\begin{equation}
	\tau_{n+k}^{(n)} :=
	\left\{
	\begin{aligned}
		&\Delta_{k}^{(n)}, \quad && k \ge 1, \\
		&1, \quad  &&k= 0,\\ 
		&\Tilde{\Delta}_{-k}^{(n)}, \quad && k \le -1,
	\end{aligned}        
	\right. 
	\label{tauseq}
\end{equation}
then $\{d_m\}_{m \in \mathbb{Z}}$ has the following consistent expressions
\begin{equation}
	d_{m} = \frac{\tau^{(n)}_{m-1}\tau^{(n)}_{m+1}}{(\tau^{(n)}_{m})^{2}},\quad \forall\, m \in \mathbb{Z}. \label{d}
\end{equation}
This solves the mismatch problem mentioned in \cite[Remark 4.6]{hone2021}  by Hone, as a result of which, a nicer expression is obtained for the initial problem to the bilateral  Somos-4 recurrence as stated in the following section.  It is noted that, for any integer $n$, the sequence $\{\tau_{m}^{(n)}\}_{m \in \mathbb{Z}}$ defined above by \eqref{tauseq} is uniquely determined by $Y_{n}$. For convenience, we call $\{\tau_{m}^{(n)}\}_{m \in \mathbb{Z}}$ the associated tau sequence with $Y_n$.
\begin{remark}
	It is observed by Hone that the continued fraction expansion of $Y_n$ is associated with a birational symplectic map $\phi_{Y_n}$ on the parameter space \cite[Theorem 3.1]{hone2021}. For a quadratic orthogonal pair $(Y_n,\tilde{Y}_{n})$, the continued fraction expansions of $Y_n$ and $\tilde{Y}_{n}$ actually give a pair of birational symplectic map $(\phi_{Y_n},\phi_{\tilde{Y}_n})$.  In the genus one case, the map $\phi_{Y_{n}}$ sends $(d_n,v_n)$ to $(d_{n+1},v_{n+1})$ while the map $\phi_{\tilde{Y}_{n}}$ sends $(d_n,v_{n-1})$ to $(d_{n-1},v_{n-2})$. By use of the definition of  $\{d_n\}_{n \in \mathbb{Z}}$ and $\{v_n\}_{n \in \mathbb{Z}}$, one can see that these two maps $(\phi_{Y_n},\phi_{\tilde{Y}_{n}})$ are identical. In higher genus cases, the map $\phi_{Y_{n}}$ will send $(\pi_n,\rho_n)$ to $(\pi_{n+1},\rho_{n+1})$ while the map $\phi_{\tilde{Y}_{n}}$ will send $(\pi_n,\rho_{n-1})$ to $(\pi_{n-1},\rho_{n-2})$, where $\pi_n := (\pi_n^{(0)},...,\pi_{n}^{(g-1)})$ and $\rho_n := (\rho_n^{(0)},...,\rho_{n}^{(g-1)})$ are defined in Appendix \ref{generalizepairs}. One can also check that the two maps $(\phi_{Y_n},\phi_{\tilde{Y}_{n}})$ remain identical in higher genus cases by noticing that the iteration for the sequence of polynomials $\{P_n,Q_n\}_{n \in \mathbb{Z}}$ in the positive direction is the same as the iteration for the sequence of polynomials $\{P_n,Q_{n-1}\}_{n \in \mathbb{Z}}$ in the negative direction. It is noted that, in \cite[Section 3]{hone2021}, the rational maps $(\phi, \hat{\phi})$ discussed there are nothing but $(\phi_{Y_n},\phi_{\tilde{Y}_{n}}^{-1})$, which implies that the maps $(\phi, \hat{\phi})$ are not only conjugated but are also inverses to each other, that is, $\phi \circ \hat{\phi} =  \hat{\phi} \circ \phi = \text{id}$.
	\end{remark}

\begin{remark}
	It should also be remarked that, in \cite[Theorem 4.1,4.4]{hone2021}, the two choices of generating functions $(G,G^{\dagger})$ defined there are nothing but $(Y_{1}^{-1},-(\tilde{Y}_0)^{-1})$ in our notations. Therefore the original bilateral sequence obtained by Hone exactly comes from two different rows of the family of tau sequences, that is,  $\{\ldots,\tau^{(0)}_{-2},\tau^{(0)}_{-1},\tau^{(0)}_{0},\tau^{(1)}_{1},\tau^{(1)}_{2},\tau^{(1)}_{3},\ldots\}$, which leads to a ``mismatch" problem. In some sense, the way Hone ``glued" these two parts together is to apply Theorem \ref{rela_n-n-1} to adjust the two parts of the tau sequences coming from two different rows into the same row.
\end{remark}

\section{Applications to bilateral Somos-4 and Somos-5} \label{sec:somos}
The results in the above section can be directly applied to the initial value problems for the bilateral Somos-4 and Somos-5.

\subsection{Connection formulae} \label{sec:connect}

\par
We first present some connection formulae which will be useful in solving the initial value problems for the bilateral Somos-4 and Somos-5.
\par
\begin{lemma}\label{lemmad}
	Suppose $H = -uv$, then the sequence $\{d_n\}_{n \in \mathbb{Z}}$ generated by \eqref{map} satisfies the  recursion
	\begin{equation*}
		d_{n-1}d_{n}^{2}d_{n+1} = \alpha d_{n} + \beta,\quad n\in \mathbb{Z} 
	\end{equation*}
	where $\alpha$ and $\beta$ are defined by 
	\begin{equation}
		\begin{aligned}
			\alpha := u^2,\qquad \beta := u^2(v^2+f).
		\end{aligned}
		\label{coeff}
	\end{equation}
\end{lemma}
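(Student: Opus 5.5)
The plan is to convert the map $\Phi$ and the conserved quantity into two simple relations between consecutive terms, then multiply and simplify.

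\textbf{Step 1: relations from the map.} Write $(d_{n+1},v_{n+1})=\Phi(d_n,v_n)$ componentwise. This gives
\begin{equation*}
d_n+d_{n+1}=-v_n^2-f,\qquad d_{n+1}=-(d_n+v_n^2+f),\qquad v_{n+1}+v_n=\frac{u}{d_{n+1}} .
\end{equation*}
These hold for all $n\in\mathbb{Z}$, because $\Phi$ is birational and $d_n\neq 0$ throughout the orbit. Shifting the last relation down by one gives $v_{n-1}+v_n=u/d_n$.

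\textbf{Step 2: rewrite the conserved quantity.} I will use that $H$ is conserved along orbits, which is stated in the paper following \cite{hone2021}. If needed, this can be confirmed directly from Step 1. The hypothesis $H=-uv$ at $(d_n,v_n)$ reads
\begin{equation*}
d_n(d_n+f+v_n^2)=u(v_n-v).
\end{equation*}
By Step 1, the bracket equals $-d_{n+1}$, so
\begin{equation*}
d_nd_{n+1}=u(v-v_n)\quad\text{for all } n.
\end{equation*}
Shifting the index gives $d_{n-1}d_n=u(v-v_{n-1})$.

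\textbf{Step 3: multiply and simplify.} Multiplying the two identities from Step 2 gives
\begin{equation*}
d_{n-1}d_n^2d_{n+1}=u^2\bigl(v^2-v(v_{n-1}+v_n)+v_{n-1}v_n\bigr).
\end{equation*}
It therefore suffices to show that $-v(v_{n-1}+v_n)+v_{n-1}v_n=d_n+f$. To do this, substitute $v_{n-1}=u/d_n-v_n$ from Step 1. The left side becomes
\begin{equation*}
\frac{u(v_n-v)}{d_n}-v_n^2 .
\end{equation*}
By the conserved-quantity identity of Step 2, $u(v_n-v)/d_n=d_n+f+v_n^2$. The left side therefore equals $d_n+f$, which gives the recursion with $\alpha=u^2$ and $\beta=u^2(v^2+f)$.

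\textbf{Main obstacle.} The only delicate point is justifying that $H=-uv$ holds at every $(d_n,v_n)$, and not just at $n=0$; this is exactly the invariance of $H$ under $\Phi$. If it needs checking, I would verify $H(\Phi(x,y))=H(x,y)$ by substituting $x'=-x-y^2-f$ and $y'=-u/(x+y^2+f)-y$ directly. The remaining steps are routine algebra with nonzero denominators $d_n$.
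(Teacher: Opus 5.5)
Your proof is correct, and its first half is the same as the paper's. Both rewrite $H=-uv$ at $(d_n,v_n)$ as $d_nd_{n+1}=u(v-v_n)$, then multiply two shifted copies to get $d_{n-1}d_n^2d_{n+1}=u^2(v-v_{n-1})(v-v_n)$.

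The routes differ in how the remaining identity $(v-v_{n-1})(v-v_n)=d_n+v^2+f$ is obtained. You derive it from the second component of $\Phi$, i.e.\ $v_{n-1}+v_n=u/d_n$, plus one more use of the conserved quantity. The paper instead invokes Lax representability of $Y_n$ (Lemma \ref{Ylemma0} with Theorem \ref{thmlax}). Inserting \eqref{choice} into \eqref{iffeq} and using \eqref{u} gives the polynomial identity $(X^2+f)^2+4u(X-v)=(X^2+f+2d_n)^2-4d_n(X-v_n)(X-v_{n-1})$, which the paper evaluates at $X=v$.

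Comparing coefficients, that identity carries exactly your two ingredients. Its $X$-coefficient says $d_n(v_n+v_{n-1})=u$, and, given that, its constant term is equivalent to $H(d_n,v_n)=-uv$. So the paper's step is your computation repackaged through the spectral curve.

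Your route is more elementary and self-contained. It needs only the definition of $\Phi$ and the invariance of $H$, and it does not pass through Lemma \ref{Ylemma0} and the appendix. The invariance does hold: with $x'=-x-y^2-f$ and $y'=u/x'-y$, one finds $H(x,y)=-xx'-uy=H(x',y')$. The paper's route is chosen to showcase its Lax framework. It also makes clear why $X=v$ is the natural evaluation point: it kills the term $4u(X-v)$ of the curve.
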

\begin{proof}
	This lemma  together with a proof can be found in \cite[Proposition 5.1]{hone2021}. Here we give a slightly different proof which is based our language of being Lax representable. 
	
	For  arbitrary integer $n$, by the definition of $H$, we first have 
	\begin{equation*}
		\begin{aligned}
			-uv = H &= d_n v_n^2-u v_n +d_n^2+ fd_n \\
			&=d_n(v_n^2+d_n+f)-uv_n \\
			&=-d_nd_{n+1}-uv_n.
		\end{aligned}
	\end{equation*}
from which, we have 
	\begin{equation*}
		\begin{aligned}
			d_{n-1}d_{n} = u(v-v_{n-1}), \qquad
			d_{n}d_{n+1} = u(v-v_{n}),
		\end{aligned}
	\end{equation*}
leading to
	\begin{equation}
		\begin{aligned}
			d_{n-1}d_{n}^2d_{n+1} = u^2(v-v_{n-1})(v-v_{n}) .&
		\end{aligned}
		\label{l1}
	\end{equation}
Besides, by Lemma \ref{Ylemma0} and Theorem \ref{thmlax}, since $Y_n$ is Lax representable, the relation \eqref{iffeq} holds. By inserting the expressions in \eqref{choice}, the relation \eqref{iffeq} becomes
	\begin{equation}
		\begin{aligned}
			\frac{1}{u_{n}(X-v_{n})} = \frac{1}{u_{n-1}(X-v_{n-1})} \left(\frac{Y^2}{(u_{n}(X-v_{n}))^2}-\left(\frac{X^2+f+2d_n}{u_{n}(X-v_{n})}\right)^{2}\right),
		\end{aligned}
	\end{equation}
from which  we obtain
	\begin{equation}
		(X^2+f)^2+4u(X-v) = (X^2+f+2d_n)^2+(-4d_n)(X-v_n)(X-v_{n-1}).
	\end{equation}
with the help of \eqref{u}.
	The above equation holds for all $X$ thus holds for a particular evaluation $X= v$, that is,
	\begin{equation}
		(v-v_n)(v-v_{n-1}) = d_n +v^2+f.
		\label{l2}
	\end{equation}
	The proof can be completed by combining \eqref{l1} and \eqref{l2}.
\end{proof}
The above lemma immediately leads to the following corollary.
\begin{coro} \label{corosomos4}
	For any integer $n$, $\{\tau_{m}^{(n)}\}_{m \in \mathbb{Z}}$ defined by \eqref{tauseq} is a bilateral Somos-4 sequence. In other words, we have
	\begin{equation}
		\tau_{m+2}^{(n)} \tau_{m-2}^{(n)} = \alpha\tau_{m+1}^{(n)}\tau_{m-1}^{(n)}+\beta(\tau_{m}^{(n)})^2, \quad \forall\, m,n \in \mathbb{Z},
	\end{equation}
	where $\alpha$ and $\beta$ are defined by \eqref{coeff}.
\end{coro}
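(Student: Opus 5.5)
Fix an integer $n$ and write $\tau_m=\tau^{(n)}_m$. The plan is to combine the consistent expression \eqref{d} with the recursion of Lemma \ref{lemmad}. By \eqref{d}, $d_m=\tau_{m-1}\tau_{m+1}/\tau_m^2$ holds for every $m\in\mathbb{Z}$. Substituting into $d_{m-1}d_m^2d_{m+1}$, the product telescopes:
\begin{equation*}
d_{m-1}d_m^2d_{m+1}=\frac{\tau_{m-2}\tau_m}{\tau_{m-1}^2}\cdot\frac{\tau_{m-1}^2\tau_{m+1}^2}{\tau_m^4}\cdot\frac{\tau_m\tau_{m+2}}{\tau_{m+1}^2}=\frac{\tau_{m-2}\tau_{m+2}}{\tau_m^2}.
\end{equation*}
On the other side, Lemma \ref{lemmad} gives
\begin{equation*}
\alpha d_m+\beta=\alpha\frac{\tau_{m-1}\tau_{m+1}}{\tau_m^2}+\beta.
\end{equation*}
Multiplying the equality $d_{m-1}d_m^2d_{m+1}=\alpha d_m+\beta$ through by $\tau_m^2$ yields exactly $\tau_{m+2}\tau_{m-2}=\alpha\tau_{m+1}\tau_{m-1}+\beta\tau_m^2$ for all $m$. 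Since $n$ was arbitrary, this holds for all $m,n\in\mathbb{Z}$.

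The only point needing care is that every $\tau_m$ is nonzero, so that \eqref{d} makes sense and the multiplication by $\tau_m^2$ is legitimate. I would argue this from the standing assumption that all $d_k\neq 0$, which is the well-definedness of the orbit under $\Phi$, together with $u_k\neq0$, which follows from \eqref{u} and $u_0\neq 0$. The anchor is $\tau_n=1$. From Theorem \ref{connection} we have $\Delta_1^{(n)}=u_n/2\neq0$ and $\tilde\Delta_1^{(n)}=-u_{n-1}/2\neq0$. The relation $\Delta_{k+1}^{(n)}=d_{n+k}(\Delta_k^{(n)})^2/\Delta_{k-1}^{(n)}$, and its tilde analogue, then propagate nonvanishing by induction in both directions.

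The main obstacle, such as it is, is just confirming that \eqref{d} is valid uniformly across $m=n-1,n,n+1$. These are the indices where the positive Hankel determinants and the negative Hankel determinants meet. Concretely, $d_{n}$ must equal $\tau_{n-1}\tau_{n+1}=\tilde\Delta_1^{(n)}\Delta_1^{(n)}=-u_nu_{n-1}/4$, which is exactly \eqref{u}. The cases $m=n\pm1$ are the $k=1$ instances of the two displayed formulas preceding \eqref{tauseq}. With this matching checked, the telescoping computation above needs no case distinction.
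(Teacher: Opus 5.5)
Your proof is correct and is the argument the paper intends: the paper only says the corollary follows immediately from Lemma \ref{lemmad}, meaning you substitute the consistent expression \eqref{d} into $d_{m-1}d_m^2d_{m+1}=\alpha d_m+\beta$ and clear the factor $\tau_m^2$, which is exactly your telescoping computation. Your remarks on the nonvanishing of the $\tau^{(n)}_m$ (via $d_k\neq0$ and $u_k\neq0$) and on the matching of \eqref{d} at $m=n-1,n,n+1$ through \eqref{u} make explicit what the paper leaves implicit.
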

This corollary suggests that the sequence of quadratic orthogonal pairs $\{(Y_n,\Tilde{Y}_n)\}_{n \in \mathbb{Z}}$ is identically associated with a sequence of Somos-4 sequences $\{(\ldots,\tau_{-1}^{(n)},\tau_{0}^{(n)},\tau_{1}^{(n)},\ldots)\}_{n \in \mathbb{Z}}$ with the same coefficients $\alpha$ and $\beta$. A further result on this family of tau sequences is given by the following theorem, which shows that all these tau sequences are essentially identical under the gauge transformations.
\begin{theorem}\label{rela_n-n-1}
	The family of tau sequences $\{\tau_{k}^{(n)}\}_{n,k \in \mathbb{Z}}$ defined by (\ref{tauseq}) satisfies the following relation
	\begin{equation}
		\tau_{k}^{(n)} =
		\left\{
		\begin{aligned}
			&(-1)^{k-n+2} \left(\frac{u_{n-1}}{2}\right)^{2n-2k-1}\tau_{k}^{(n-1)}, \quad  &&k\ne n,\\ 
			&1, \quad  &&k= n.\\ 
		\end{aligned}        
		\right. 
	\end{equation}
\end{theorem}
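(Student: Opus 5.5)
The plan is a gauge argument. By \eqref{d}, both rows $\{\tau^{(n)}_m\}_{m\in\mathbb{Z}}$ and $\{\tau^{(n-1)}_m\}_{m\in\mathbb{Z}}$ reproduce the \emph{same} sequence $\{d_m\}_{m\in\mathbb{Z}}$ through $d_m=\tau_{m-1}\tau_{m+1}/\tau_m^2$. Two tau sequences with the same $d$'s differ by a factor of the form $AB^m$, and that factor can then be pinned down from two values.

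First I need all $\tau^{(n)}_m$ to be nonzero. Since $Y_n^{-1}=\langle 0,a_n,a_{n+1},\ldots\rangle$ and $-\tilde Y_n^{-1}=\langle 0,-a_{n-1},-a_{n-2},\ldots\rangle$ are genuine infinite J-fractions with nonzero leading coefficients $\pm 2/u_k$ (Lemmas \ref{Ylemma0}, \ref{Ylemma1}, Theorem \ref{pair}), Theorem \ref{connection} applies. Its relation
\begin{equation*}
-\alpha_{k-1}^{-1}\alpha_k^{-1}=\frac{\Delta_k\Delta_{k-2}}{\Delta_{k-1}^2},
\end{equation*}
used inductively, gives $\Delta^{(n)}_k\neq 0$ and $\tilde\Delta^{(n)}_k\neq0$ for all $k$. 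Equivalently, this follows from $d_m\neq 0$.

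Next, set $\rho_m:=\tau^{(n)}_m/\tau^{(n-1)}_m$ for $m\in\mathbb{Z}$. Dividing the two instances of \eqref{d} gives
\begin{equation*}
\rho_{m-1}\rho_{m+1}=\rho_m^2\quad\text{for all } m,
\end{equation*}
so $\rho_{m+1}/\rho_m$ is a constant $B$ and $\rho_m=\rho_n B^{m-n}$.

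It remains to evaluate $\rho$ at $m=n$ and $m=n-1$ using \eqref{tauseq} together with the computations $\Delta^{(n-1)}_1=u_{n-1}/2$ and $\tilde\Delta^{(n)}_1=-u_{n-1}/2$ from Section \ref{sec:Hankelexpressions}:
\begin{equation*}
\tau^{(n)}_n=1,\quad \tau^{(n-1)}_n=\Delta^{(n-1)}_1=\tfrac{u_{n-1}}{2},\quad \tau^{(n)}_{n-1}=\tilde\Delta^{(n)}_1=-\tfrac{u_{n-1}}{2},\quad \tau^{(n-1)}_{n-1}=1.
\end{equation*}
Hence $\rho_n=(u_{n-1}/2)^{-1}$ and $\rho_{n-1}=-u_{n-1}/2$. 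This gives
\begin{equation*}
B=\frac{\rho_n}{\rho_{n-1}}=-\left(\frac{u_{n-1}}{2}\right)^{-2},
\end{equation*}
and therefore
\begin{equation*}
\rho_k=(-1)^{k-n}\left(\frac{u_{n-1}}{2}\right)^{2n-2k-1},
\end{equation*}
which is the stated relation, since $(-1)^{k-n+2}=(-1)^{k-n}$. For $k=n$ the statement simply records $\tau^{(n)}_n=1$, which is consistent with $\rho_n\tau^{(n-1)}_n=1$.

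The main obstacle is the first step: justifying that no $\tau$ vanishes, so that the division defining $\rho_m$ is legitimate and the three-term relation propagates over all of $\mathbb{Z}$. I would handle it through the infinite J-fraction structure, which is guaranteed by Lemmas \ref{Ylemma0} and \ref{Ylemma1} under the standing non-degeneracy assumption $d_m\neq0$. The rest is bookkeeping of signs and powers of $u_{n-1}/2$.
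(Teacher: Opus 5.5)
Your proof is correct and is essentially the paper's argument. Both use the fact that rows $n$ and $n-1$ produce the same $d_m=\tau_{m-1}\tau_{m+1}/\tau_m^2$, so they differ by a gauge factor. Both then fix that factor from the explicit values $\tau^{(n)}_{n-1}=-u_{n-1}/2$, $\tau^{(n)}_{n}=1$, $\tau^{(n-1)}_{n-1}=1$, $\tau^{(n-1)}_{n}=u_{n-1}/2$.

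The paper checks four indices $k=n-2,\dots,n+1$ and inducts on $k$. Your geometric ratio $\rho_m$ shows that two seeds already suffice. You also state explicitly the nonvanishing of the $\tau$'s, which the paper's induction uses only tacitly.
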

\begin{proof}
	Recall from the definition \eqref{tauseq} that we have in particular
	\begin{equation*}
		\tau_{n +k}^{(n)} =
		\left\{
		\begin{aligned}
			&\Delta_{1}^{(n)} = s_{0}^{(n)} =\frac{u_{n}}{2}, \quad && k = 1, \\
			&1, \quad  &&k= 0,\\ 
			&\Tilde{\Delta}_{1}^{(n)} = \Tilde{s}_{0}^{(n)}= -\frac{u_{n-1}}{2}, \quad && k = -1.
		\end{aligned}        
		\right. 
	\end{equation*}
	By employing the relations  (\ref{u})  and \eqref{d}, we then have, for every integer $n$,
	\begin{equation*}
		(\tau_{n-2}^{(n-1)},\tau_{n-1}^{(n-1)},\tau_{n}^{(n-1)},\tau_{n+1}^{(n-1)}) =\left(-\frac{u_{n-2}}{2},1,\frac{u_{n-1}}{2},-\frac{u_{n-1}^3u_{n}}{16}\right)
	\end{equation*}
	and
	\begin{equation*}
		(\tau_{n-2}^{(n)},\tau_{n-1}^{(n)},\tau_{n}^{(n)},\tau_{n+1}^{(n)}) = \left(-\frac{u_{n-2}u_{n-1}^3}{16},-\frac{u_{n-1}}{2},1,\frac{u_{n}}{2}\right),
	\end{equation*}
which obviously give rise to
	\begin{equation}
		\tau_{k}^{(n)} =(-1)^{k-n+2} \left(\frac{u_{n-1}}{2}\right)^{2n-2k-1}\tau_{k}^{(n-1)} \label{local}
	\end{equation}
	for $k = n-2,n-1,n,n+1$.
On the other hand, from the expression \eqref{d}, we have
    \begin{equation*}
        d_{k} = \frac{\tau_{k-1}^{(n)}\tau_{k+1}^{(n)}}{(\tau_{k}^{(n)})^2} = \frac{\tau_{k-1}^{(n-1)}\tau_{k+1}^{(n-1)}}{(\tau_{k}^{(n-1)})^2},\quad \forall\, k \in \mathbb{Z}.
    \end{equation*}
    Then it follows by induction on $k$ that the relation \eqref{local} holds automatically for all $k \in \mathbb{Z}$. 

\end{proof}

Based on the above theorem, we readily obtain the following corollary.
\begin{coro}\label{coro_tau} 
	For $n,k\in \mathbb{Z}$,  any  $\tau_{k}^{(n)}$ can be represented as a monomial in $\{u_m\}_{m \in \mathbb{Z}}$
	\begin{equation}
		\tau_{k}^{(n)} =
		\left\{
		\begin{aligned}
			&\prod_{m=k}^{n-1}(-1)^{k-m+1}\left(\frac{u_m}{2}\right)^{2m-2k+1}, \quad  &n > k,\\ 
			&1, \quad  &n= k,\\ 
			&\prod_{m=n}^{k-1}(-1)^{k-m+1}\left(\frac{u_m}{2}\right)^{2k-2m-1}, \quad  &n < k.\\
		\end{aligned}        
		\right. 
	\end{equation}
\end{coro}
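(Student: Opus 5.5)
The plan is to obtain the monomial formula by telescoping the gauge relation of Theorem \ref{rela_n-n-1}, with the diagonal normalization $\tau_k^{(k)}=1$ from \eqref{tauseq} as the anchor. Fix $k\in\mathbb{Z}$. The case $n=k$ is exactly this normalization. The other two cases are obtained by moving the upper index $n$ one step at a time away from $k$.

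\emph{Case $n>k$.} For every $n'$ with $k+1\le n'\le n$ we have $n'\ne k$. Theorem \ref{rela_n-n-1} therefore gives
\[
\tau_k^{(n')}=(-1)^{k-n'+2}\left(\frac{u_{n'-1}}{2}\right)^{2n'-2k-1}\tau_k^{(n'-1)} .
\]
Chaining these equalities for $n'=n,n-1,\ldots,k+1$ and using $\tau_k^{(k)}=1$, we get
\[
\tau_k^{(n)}=\prod_{n'=k+1}^{n}(-1)^{k-n'+2}\left(\frac{u_{n'-1}}{2}\right)^{2n'-2k-1}.
\]
Substituting $m=n'-1$ turns the sign into $(-1)^{k-m+1}$ and the exponent into $2m-2k+1$, with $m$ running from $k$ to $n-1$. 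This is the stated formula.

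\emph{Case $n<k$.} Here we run Theorem \ref{rela_n-n-1} in the reverse direction, which requires $u_m\neq 0$ for all $m$. This holds because $u_0\ne0$ and $d_m\neq0$, so \eqref{u} propagates nonvanishing in both directions. Inverting the relation gives, for $n'\ne k$,
\[
\tau_k^{(n'-1)}=(-1)^{k-n'}\left(\frac{u_{n'-1}}{2}\right)^{2k-2n'+1}\tau_k^{(n')} ,
\]
since the inverse of a sign is the same sign.

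The one subtlety is the first step, $n'=k$. There Theorem \ref{rela_n-n-1} only asserts $\tau_k^{(k)}=1$ and gives no relation to $\tau_k^{(k-1)}$. I would handle this step directly with the value computed in the proof of that theorem, namely $\tau_{k}^{(k-1)}=u_{k-1}/2$. This agrees with the $m=k-1$ factor of the claimed product, since that factor has sign $(-1)^2=1$ and exponent $1$.

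After this base step, apply the inverted relation for $n'=k-1,k-2,\ldots,n+1$, all of which satisfy $n'\neq k$. Telescoping and substituting $m=n'-1$ gives factors with sign $(-1)^{k-m-1}=(-1)^{k-m+1}$ and exponent $2k-2m-1$, for $m$ from $n$ to $k-1$. This is the third line of the corollary.

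The only real obstacle is this boundary step at $n'=k$ together with the index and sign bookkeeping in the reindexing. Everything else is a direct telescoping of Theorem \ref{rela_n-n-1}.
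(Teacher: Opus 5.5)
Your proposal is correct and is the argument the paper intends: the corollary is stated as an immediate consequence of Theorem \ref{rela_n-n-1}, obtained by telescoping the gauge relation from the anchor $\tau_k^{(k)}=1$, and your sign and exponent bookkeeping checks out in both directions. Your handling of the boundary step through $\tau_k^{(k-1)}=u_{k-1}/2$ is fine. In fact the proof of Theorem \ref{rela_n-n-1} also verifies \eqref{local} at $k=n$, so the unified relation holds there as well and no special treatment is actually needed.
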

\begin{remark}
It is noted that the explicit formula for the recursion of the sequence $\{d_{n}\}_{n \in \mathbb{Z}}$  in Lemma \ref{lemmad} and the bilinear recurrence relation for the associated tau sequence in Corollary \ref{corosomos4} are only valid for the case of genus 1. However,
Theorem \ref{rela_n-n-1} and Corollary \ref{coro_tau} remain valid for hyperelliptic curves of any genus, since the same relations \eqref{u} and \eqref{d} hold for the general setting for hyperelliptic curves of any genus.
\end{remark}

\subsection{Initial value problem for Somos-4}
The general initial value problem for the bilateral Somos-4 recurrence can be formulated as
\begin{equation}
	\begin{cases}
		&  \tau^*_{n+2}\tau^*_{n-2} =\alpha \tau^*_{n+1}\tau^*_{n-1} + \beta (\tau^*_{n})^2, \, n\in \mathbb{Z},\\
		&  \tau^*_{-1} =A, \quad  \tau^*_{0} =B, \quad \tau^*_{1} =C, \quad \tau^*_{2} =D,
	\end{cases}
	\label{initial40}
\end{equation}
where $\alpha$, $\beta$ and $\{A,B,C,D\}$ are indeterminants. Since the above recurrence is invariant under gauge transformations $\tau^*_{n} \mapsto ab^n\tau^*_{n}$ for all complex numbers $a$ and $b$, the problem \eqref{initial40} is equivalent to the problem

\begin{equation}
	\begin{cases}
		&  \tau_{n+2}\tau_{n-2} =\alpha \tau_{n+1}\tau_{n-1} + \beta (\tau_{n})^2, \, n\in \mathbb{Z},\\
		&  \tau_{-1} =x, \quad  \tau_{0} =1, \quad \tau_{1} =y, \quad \tau_{2} =z .
	\end{cases}
	\label{initial4}
\end{equation}
We are going to solve the problem \eqref{initial4}.

Based on the argument in Subsection \ref{conventions} and \ref{sec:connect}, in order to solve \eqref{initial4}, it suffices to make a choice of the parameters $\{d_0,v_0,u_0,u,v,f\}$ such that the associated tau sequence of the hyperelliptic function $Y_0$ is a solution to problem \eqref{initial4}. Recall that $Y_0$ is defined by
\begin{equation}
	\begin{aligned}
		Y_0 =r_0Y+s_0= \frac{\sqrt{(X^2+f)^2+4u(X-v)}+X^2+f+2d_0}{u_0(X-v_0)}.\\
	\end{aligned}
	\label{Y00}
\end{equation}
The following theorem shows that, for each triple of initial values $(x,y,z)$ leading to a unique solution to the problem \eqref{initial4}, there exist exactly two choices of the parameters $\{d_0,v_0,u_0,u,v,f\}$ such that the solution of problem \eqref{initial4} is given by the associated tau sequence of the hyperelliptic function $Y_0$.
\begin{theorem}
	Let $\{\tau_{m}^{(0)}\}_{m \in \mathbb{Z}}$ be the associated tau sequence defined by \eqref{tauseq} for the hyperelliptic function $Y_0$ \eqref{Y00}. Then $\{\tau_{m}^{(0)}\}_{m \in \mathbb{Z}}$ is a solution to the problem \eqref{initial4} on the bilateral  Somos-4 if and only if the corresponding parameters $\{d_0,v_0,u_0,u,v,f\}$ are given by
	\begin{equation}
		\begin{cases}
			&  d_0 = xy,\\
			&  v_0 = \frac{\beta y^2-x^2z^2+\alpha z + \alpha x y^3}{2\sqrt{\alpha}xyz},\\
			& u_0=2y,\\
			& u = \sqrt{\alpha},\\
			& v = \frac{\beta y^2+x^2z^2+\alpha z + \alpha x y^3}{2\sqrt{\alpha}xyz},\\
			& f = \frac{\beta}{\alpha} - \frac{(\beta y^2+x^2z^2+\alpha z + \alpha x y^3)^2}{4\alpha x^2y^2z^2},
		\end{cases}
		\label{c1}
	\end{equation}
	or
	\begin{equation}
		\begin{cases}
			&  d_0 = xy,\\
			&  v_0 = -\frac{\beta y^2-x^2z^2+\alpha z + \alpha x y^3}{2\sqrt{\alpha}xyz},\\
			& u_0=2y,\\
			& u = -\sqrt{\alpha},\\
			& v = -\frac{\beta y^2+x^2z^2+\alpha z + \alpha x y^3}{2\sqrt{\alpha}xyz},\\
			& f = \frac{\beta}{\alpha} - \frac{(\beta y^2+x^2z^2+\alpha z + \alpha x y^3)^2}{4\alpha x^2y^2z^2}.
		\end{cases}
		\label{c2}
	\end{equation}
\end{theorem}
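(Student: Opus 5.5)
We reduce the problem to matching four initial values and the two recurrence coefficients. By Corollary \ref{corosomos4}, the associated tau sequence $\{\tau^{(0)}_m\}_{m\in\mathbb{Z}}$ of $Y_0$ always satisfies the bilateral Somos-4 recurrence with coefficients $u^2$ and $u^2(v^2+f)$, and $\tau^{(0)}_0=1$ by definition \eqref{tauseq}. A bilateral Somos-4 sequence with nonvanishing terms is uniquely determined by four consecutive values and by $(\alpha,\beta)$; the coefficients themselves are determined by the sequence through two instances of the recurrence, generically. Hence $\{\tau^{(0)}_m\}$ solves \eqref{initial4} if and only if
\begin{equation*}
u^2=\alpha,\qquad u^2(v^2+f)=\beta,\qquad \tau^{(0)}_{-1}=x,\qquad \tau^{(0)}_1=y,\qquad \tau^{(0)}_2=z .
\end{equation*}
For the forward direction, note that matching the sequence forces these equalities, since $\alpha,\beta$ are recoverable from the sequence. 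The whole proof therefore amounts to solving this algebraic system for $\{d_0,v_0,u_0,u,v,f\}$, subject to the constraint $H=-uv$ at $(d_0,v_0)$.

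Next we express the tau values through the parameters, using the computation in the proof of Theorem \ref{rela_n-n-1}. This gives $\tau^{(0)}_1=u_0/2$, $\tau^{(0)}_{-1}=-u_{-1}/2$ and $\tau^{(0)}_2=-u_0^3u_1/16$. By \eqref{u}, $u_{-1}=-4d_0/u_0$ and $u_1=-4d_1/u_0$, so
\begin{equation*}
\tau^{(0)}_1=\tfrac{u_0}{2},\qquad \tau^{(0)}_{-1}=\tfrac{2d_0}{u_0},\qquad \tau^{(0)}_2=\tfrac{u_0^2d_1}{4}.
\end{equation*}
Setting these equal to $y$, $x$, $z$ yields $u_0=2y$, $d_0=xy$ and $d_1=z/y^2$. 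The coefficient equations give $u=\pm\sqrt{\alpha}$, which produces the two branches \eqref{c1} and \eqref{c2}, and $v^2+f=\beta/\alpha$.

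It remains to determine $v_0$, $v$ and $f$. Two relations from the proof of Lemma \ref{lemmad} do this. The relation $d_0d_1=u(v-v_0)$ gives $v-v_0=xz/(uy)$. Next, $d_1=-d_0-v_0^2-f$ comes from $\Phi$, and combining it with $f=\beta/\alpha-v^2$ gives
\begin{equation*}
v^2-v_0^2=d_0+d_1-\beta/\alpha .
\end{equation*}
Dividing this by $v-v_0$ yields $v+v_0$, and hence both $v$ and $v_0$. The result should be $v\pm v_0=\cdots$ over the common denominator $2uxyz$, with numerators $\beta y^2+\alpha z+\alpha xy^3\pm x^2z^2$ after substituting $u^2=\alpha$. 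This reproduces \eqref{c1}, and \eqref{c2} follows from $u\mapsto -u$, which flips the signs of $v$ and $v_0$.

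Finally, we must check that the constraint $H=-uv$ holds automatically. It is equivalent to $d_0d_1=u(v-v_0)$ once $d_1$ is defined via $\Phi$, and we imposed exactly that, so consistency is fine. Conversely, given parameters \eqref{c1} or \eqref{c2}, the same identities show that the first four terms and the coefficients match, and uniqueness of the Somos-4 evolution gives equality of the whole bilateral sequence. The main obstacle is this bookkeeping: confirming that the uniqueness argument is legitimate, which requires nonvanishing of the tau terms (the generic/non-degeneracy assumption), and that the formula for $\tau_2^{(0)}$ together with the derivation of $v\pm v_0$ produces exactly the stated numerators, signs included. I would carry out that last algebraic simplification carefully.
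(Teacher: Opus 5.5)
Your route is the paper's. You read off $u_0=2y$, $d_0=xy$, $d_1=z/y^2$ from the initial data, and take $u^2=\alpha$ and $v^2+f=\beta/\alpha$ from Lemma \ref{lemmad}. You then solve for $v$ and $v_0$ from two relations: $d_0d_1=u(v-v_0)$, which is $H=-uv$, and $d_1+d_0+v_0^2+f=0$, which comes from $\Phi$. The paper eliminates $v_0$ to get $v/u=\frac12\bigl(\frac{\beta}{\alpha d_0d_1}+\frac{d_0d_1}{\alpha}+\frac1{d_0}+\frac1{d_1}\bigr)$ and then sets $v_0=v-xz/(uy)$. Your difference-of-squares elimination is equivalent. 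Your treatment of the converse and of why $(\alpha,\beta)$ is determined by the sequence is, if anything, more explicit than the paper's.

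There is, however, a sign error in the one computation that actually produces the formulas. Substituting $f=\beta/\alpha-v^2$ into $d_1=-d_0-v_0^2-f$ gives $v^2-v_0^2=d_0+d_1+\beta/\alpha$, not $d_0+d_1-\beta/\alpha$.

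With your sign, dividing by $v-v_0=xz/(uy)$ gives $v+v_0=(\alpha xy^3+\alpha z-\beta y^2)/(uxyz)$, which contradicts \eqref{c1}. With the correct sign you get $v+v_0=(\beta y^2+\alpha z+\alpha xy^3)/(uxyz)$. Together with $v-v_0=x^2z^2/(uxyz)$, this gives exactly the stated $v$ and $v_0$. It is these, not $v\pm v_0$, that have numerators $\beta y^2+\alpha z+\alpha xy^3\pm x^2z^2$ over $2uxyz$. Then $f=\beta/\alpha-v^2$ gives the stated $f$.

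You also need this corrected identity in the converse direction, to check that $\Phi$ yields $d_1=z/y^2$, i.e.\ $\tau^{(0)}_2=z$.
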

\begin{proof}
{Suppose that the sequence $\{\tau_{m}^{(0)}\}_{m \in \mathbb{Z}}$ shares the same initial data as problem \eqref{initial4}, i.e.
\begin{equation*}
    \tau^{(0)}_{-1} = x, \tau^{(0)}_{0} = 1, \tau^{(0)}_{1} = y, \tau^{(0)}_{2}= z.
\end{equation*}}
Then the formulae for $d_0$ and $u_0$ can be readily obtained from the initial data, that is, 
\begin{equation*}
		d_0 = \frac{\tau^{(0)}_{-1}\tau^{(0)}_{1}}{(\tau^{(0)}_{0})^2} =xy, \quad
		d_1 = \frac{\tau^{(0)}_{0}\tau^{(0)}_{2}}{(\tau^{(0)}_{1})^2}= \frac{z}{y^2},
		\quad
		\frac{u_0}{2} = \tau^{(0)}_1= y.
\end{equation*}
It follows from Lemma \ref{lemmad} that
\begin{equation}
	u^2	 = \alpha,\quad
	u^2(v^2+f)	 = \beta, \label{uvfequation}
\end{equation}
from which we obtain
\begin{equation} 
\label{ffo}
		f = \frac{\beta}{u^2} - u^2 \left(\frac{v}{u}\right)^2 = \frac{\beta}{\alpha} - \alpha \left(\frac{v}{u}\right)^2.
\end{equation}
Recall that, for the birational map $\Phi$ defined by \eqref{map}, we have 
\begin{equation}
	\begin{aligned}\label{vn}
		H = -uv = d_n(d_n+ v_n^2 + f) -uv_n= -d_nd_{n+1}-uv_n
	\end{aligned}
\end{equation}
and
\begin{equation} \label{iterationpro}
	d_{n+1} +d_{n} +v_{n}^2 +f = 0.
\end{equation}
By eliminating $v_n$ from \eqref{vn} and \eqref{iterationpro}  and employing the formula \eqref{uvfequation} we get
\begin{equation} 
	\begin{aligned} \label{vdu}
		\frac{v}{u} = \frac{1}{2}\left(\frac{\beta}{\alpha}\frac{1}{d_0d_1}+\frac{d_0d_1}{\alpha}+\frac{1}{d_0}+\frac{1}{d_1}\right) 
		 = \frac{\beta y^2 +x^2z^2+\alpha z +\alpha xy^3}{2\alpha xyz}.
	\end{aligned}
\end{equation}
The formula for the coefficient $f$ with respect to the initial data $\{x,y{,z},\alpha,\beta \}$ can be obtained by substituting \eqref{vdu} into \eqref{ffo}. Furthermore, it is noted that $v_0$ can be obtained from \eqref{vn}
\begin{equation*}
	v_0 = v - \frac{xz}{uy}.
\end{equation*}
Based on the above relations, it is not hard to see that all the parameters are determined once the sign of $u=\pm\sqrt\alpha$ is fixed. This implies that the both cases will solve the initial problem and therefore the conclusion follows.
\end{proof}

\begin{remark}
	It is noted that the two choices of parameters give us exactly two choices of hyperelliptic functions. If we use $Y_{0}$ and $Y_{0}’$ to denote them, from the relations between the two choices of parameters, one can easily deduce that $$Y_{0}(X) + Y_{0}'(-X) = 0.$$ This will lead to $$(Y_{0})^{-1}(X)+ (Y_{0}')^{-1}(-X) = 0$$ and $$(-\Tilde{Y}_{0})^{-1}(X)+ (-\Tilde{Y}'_0)^{-1}(-X) = 0,$$ which means that, if the generating function of one choice is $s_0X^{-1}+s_1X^{-2}+s_{2}X^{-3}+\cdots$, then the generating function of the other choice will be $s_0X^{-1}-s_1X^{-2}+s_{2}X^{-3}-\cdots$. It is remarkable that the two sequences $\{s_0,s_1,s_2,s_3,\cdots\}$ and $\{s_0,-s_1,s_2,-s_3,\cdots\}$ will lead to the same Hankel determinant sequence, namely, $\det(s_{i+j-2})_{1\le i,j \le n} = \det((-1)^{i+j}s_{i+j-2})_{1\le i,j \le n}$ for any $n \in \mathbb{N}^*$. This fact can also be seen directly from the property of determinants, since we have in general $\det(a_{i,j}) = \det((-1)^{i+j}a_{i,j})$. 
	\end{remark}

The above theorem implies that the solution to  the bilateral  Somos-4 recurrence can be expressed by Hankel determinants. To be precise, suppose that we have
\begin{equation*}
	\begin{aligned}
		&G:= (Y_0)^{-1} = \frac{\sqrt{(X^2+f)^2+4u(X-v)} -X^2-f-2d_0}{u_{-1}(X-v_{-1})}=   s_0X^{-1}+s_{1}X^{-2}+\cdots,\\
		&G^{\dagger}:= (-\Tilde{Y}_0)^{-1} =  \frac{-\sqrt{(X^2+f)^2+4u(X-v)} +X^2+f+2d_0}{u_{0}(X-v_{0})}=s^{\dagger}_0X^{-1}+s^{\dagger}_{1}X^{-2}+\cdots,
	\end{aligned}
\end{equation*}
with $u_{-1}$ and $v_{-1}$ given by
\begin{equation}
	\begin{aligned}
		u_{-1} = -2x,\qquad v_{-1} = \frac{u}{d_0} -v_0 = \frac{u(-\beta y^2 +x^2z^2+\alpha z - \alpha x y^3)}{2\alpha xyz}.
	\end{aligned}
	\label{c-1}
\end{equation}
Furthermore, it is not hard to see that $G$ and $G^{\dagger}$ satisfy the following algebraic relations
\begin{equation*}
	\begin{aligned}
		u_{-1}(X-v_{-1}) G^2 + 2 (X^2+f+2d_0) G &= u_0(X-v_0), \\
		u_0(X-v_0)(G^{\dagger})^2 - 2 (X^2+f+2d_0) G^{\dagger} &= u_{-1}(X-v_{-1}),
	\end{aligned}
\end{equation*}
from which, one can uniquely determine $\{s_n\}_{n \in \mathbb{N}}$ and $\{s^{\dagger}_n\}_{n \in \mathbb{N}}$ by recursion. In fact, we obtain
\begin{equation}
	\begin{aligned}
		s_0 &= \frac{u_0}{2},\\
		s_1 &= -\frac{1}{2}u_0v_0,\\
		s_2 &=-(f+2d_0)s_0- \frac{1}{2}u_{-1}\left(\sum_{i+j= 0}s_is_j\right), \\
		s_k &= -(f+2d_0)s_{k-2}  -\frac{1}{2}u_{-1}\left(\sum_{i+j= k-2}s_is_j\right)+\frac{1}{2}u_{-1}v_{-1}\left(\sum_{i+j= k-3}s_is_j\right),\quad k\ge 3,
	\end{aligned}
	\label{rec_s1}
\end{equation}
and
\begin{equation}
	\begin{aligned}
		s^{\dagger}_0 &= -\frac{u_{-1}}{2}, \\
		s^{\dagger}_1 &= \frac{1}{2}u_{-1}v_{-1},\\
		s^{\dagger}_2 &=-(f+2d_0)s^{\dagger}_0+ \frac{1}{2}u_{0}\left(\sum_{i+j= 0}s^{\dagger}_is^{\dagger}_j\right),\\
		s^{\dagger}_k &= -(f+2d_0)s^{\dagger}_{k-2}  +\frac{1}{2}u_{0}\left(\sum_{i+j= k-2}s^{\dagger}_is^{\dagger}_j\right)-\frac{1}{2}u_{0}v_{0}\left(\sum_{i+j= k-3}s^{\dagger}_is^{\dagger}_j\right),\quad k\ge 3.
	\end{aligned}
	\label{rec_s2}
\end{equation}

Eventually, we have the following theorem.
\begin{theorem}
The solution $\{\tau_n\}_{n \in \mathbb{Z}}$ to the problem \eqref{initial4} on the  bilateral  Somos-4 recurrence can be given by the Hankel determinants with $\{s_k,s^{\dagger}_k\}_{k\in \mathbb{N}}$ in \eqref{rec_s1} and \eqref{rec_s2} as elements, that is,
\begin{equation*}
	\tau_{n} =
	\left\{
	\begin{aligned}
		&\det(s_{i+j-2})_{1 \le i,j\le  n}, \quad && n \ge 1, \\
		&1, \quad  &&n= 0,\\ 
		&\det(s^{\dagger}_{i+j-2})_{1 \le i,j\le  -n}, \quad && n \le -1. \\
	\end{aligned}    
	\right.
\end{equation*}
\end{theorem}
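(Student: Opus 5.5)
The plan is to identify the Hankel determinants in the statement with the associated tau sequence $\{\tau^{(0)}_m\}_{m\in\mathbb{Z}}$ of $Y_0$ from \eqref{tauseq}, and then apply the preceding theorem together with uniqueness of solutions to \eqref{initial4}.

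First, fix the parameters $\{d_0,v_0,u_0,u,v,f\}$ by \eqref{c1}; the choice \eqref{c2} works equally well, by the remark on $Y_0(X)+Y_0'(-X)=0$ and invariance of Hankel determinants under $s_k\mapsto(-1)^ks_k$. Propagate these parameters to $\{(d_n,v_n,u_n)\}_{n\in\mathbb{Z}}$ via \eqref{map} and \eqref{u}. We need $H=-uv$ for Lemma \ref{Ylemma0} and Lemma \ref{Ylemma1} to apply. This holds because $v$ was constructed from \eqref{vn}, which is exactly $H=-uv$ at $n=0$, and $H$ is conserved by $\Phi$.

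Second, identify the generating functions. By definition, $G=(Y_0)^{-1}$ and $G^\dagger=(-\tilde Y_0)^{-1}$ are precisely the series whose coefficients are $s^{(0)}_k$ and $\tilde s^{(0)}_k$ in Subsection \ref{sec:Hankelexpressions}. Hence $\Delta^{(0)}_k=\det(s_{i+j-2})$ and $\tilde\Delta^{(0)}_k=\det(s^\dagger_{i+j-2})$, and by \eqref{tauseq} the sequence displayed in the statement is exactly $\tau^{(0)}_n$.

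Third, check that the coefficients are correct. Rewrite $G$ using $1/(r_0Y+s_0)=(s_0-r_0Y)\cdot\frac{1}{r_0(s_0^2-r_0^2Y^2)}$. The Lax condition \eqref{iffeq} at index $-1$, namely $r_0=r_{-1}(r_0^2Y^2-s_0^2)$, turns the denominator into $u_{-1}(X-v_{-1})$. This gives the closed form of $G$ and, after clearing denominators, the quadratic relation $u_{-1}(X-v_{-1})G^2+2(X^2+f+2d_0)G=u_0(X-v_0)$. For $G^\dagger$, use $\tilde Y_0=-1/Y_0^*$ from Theorem \ref{pair}: then $G^\dagger=Y_0^*$ conjugated appropriately, i.e.\ $G^\dagger=(Y_0^*)^{\,}$, which simplifies to $-Y_0^*\cdot(-1)\ldots$; more directly, $(-\tilde Y_0)^{-1}=Y_0^*$ up to the stated form with denominator $u_0(X-v_0)$, and the second quadratic relation follows the same way. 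Comparing coefficients of powers of $X^{-1}$ in these relations yields \eqref{rec_s1} and \eqref{rec_s2}. The leading terms are $s_0=u_0/2$ and $s^\dagger_0=-u_{-1}/2$, consistent with $\Delta_1^{(0)}=u_0/2$ and $\tilde\Delta^{(0)}_1=-u_{-1}/2$. The values $u_{-1}=-2x$ and $v_{-1}=u/d_0-v_0$ come from \eqref{u} with $d_0=xy$, $u_0=2y$, and from $d_{-1}d_0=u(v-v_{-1})$ combined with \eqref{vn}. This coefficient bookkeeping is the main, though routine, obstacle; the only care needed is signs and the index shift in \eqref{choice2}.

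Finally, by Corollary \ref{corosomos4}, $\tau^{(0)}$ satisfies Somos-4 with $\alpha=u^2$ and $\beta=u^2(v^2+f)$, which equal the prescribed $\alpha,\beta$ by \eqref{uvfequation}. By the preceding theorem, its initial values are $(x,1,y,z)$. Since \eqref{initial4} determines $\tau_n$ uniquely in both directions (generically, when the relevant $\tau$'s are nonzero), $\tau_n=\tau^{(0)}_n$ for all $n$, which is the claim.
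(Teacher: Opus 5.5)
Your proposal is correct and follows essentially the paper's own route: identify the displayed Hankel determinants with the associated tau sequence $\{\tau^{(0)}_m\}_{m\in\mathbb{Z}}$ of $Y_0$ via $G=(Y_0)^{-1}$ and $G^{\dagger}=(-\tilde Y_0)^{-1}=Y_0^*$. You then derive \eqref{rec_s1}--\eqref{rec_s2} from the two quadratic relations, which come from $P_0^2+Q_0Q_{-1}=Y^2$, and invoke Corollary \ref{corosomos4} together with the parameter choice of the preceding theorem. Two slips are cosmetic and should be fixed: the identity should read $1/(r_0Y+s_0)=(s_0-r_0Y)/(s_0^2-r_0^2Y^2)$, without the extra factor $r_0$ in the denominator, and the garbled $G^{\dagger}$ paragraph can simply state $G^{\dagger}=Y_0^*$, which is immediate from $\tilde Y_0Y_0^*=-1$.
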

\begin{remark}
It is noted that the Laurent phenomenon can follow from the Hankel determinant solution. On one hand,  we have from \eqref{c1},\eqref{c2} and \eqref{c-1}
\begin{equation*}
	\{d_0,u_0,v_0,u_{-1},v_{-1},f\} \subseteq \mathbb{Z}\left[{\frac{1}{2},}\frac{1}{\sqrt{\alpha}},\alpha,\beta,x^{\pm 1},y^{\pm 1},{z^{\pm 1}}\right].
\end{equation*}
Since $\mathbb{Z}\left[{\frac{1}{2},}\frac{1}{\sqrt{\alpha}},\alpha,\beta,x^{\pm 1},y^{\pm 1} {,z^{\pm 1}}\right]$ is closed under addition and multiplication, we conclude that
\begin{equation}
	\tau_n \in  \mathbb{Z}\left[{\frac{1}{2},}\frac{1}{\sqrt{\alpha}},\alpha,\beta,x^{\pm 1},y^{\pm 1}{,z^{\pm 1}}\right],\quad \forall\, n \in \mathbb{Z}.
	\label{lauren1}
\end{equation}
On the other hand, since $\{\tau_n\}_{n\in \mathbb{Z}}$ satisfies the recursion \eqref{initial4}, any $\tau_n$ is a rational function in $\{\alpha,\beta,x,y{,z}\}$, that is 
\begin{equation}
	\tau_n \in  \mathbb{Q}(\alpha,\beta,x,y{,z}),\quad \forall\, n \in \mathbb{Z}.
	\label{laurent2}
\end{equation}
Combining \eqref{lauren1} and \eqref{laurent2}, we obtain
\begin{equation*}
	\tau_n \in  \mathbb{Z}\left[{\frac{1}{2},}\frac{1}{\sqrt{\alpha}},\alpha,\beta,x^{\pm 1},y^{\pm 1}{,z^{\pm 1}}\right] \cap \mathbb{Q}(\alpha,\beta,x,y{,z}) \subseteq \mathbb{Z}\left[{\frac{1}{2},}\alpha^{\pm 1},\beta,x^{\pm 1},y^{\pm 1}{,z^{\pm 1}}\right],\quad \forall\, n \in \mathbb{Z}.
\end{equation*}
In addition, observe that the recurrence \eqref{initial4} is also valid for $\alpha = 0$, which means that  
\begin{equation*}
	\tau_n \in  \mathbb{Z}\left[{\frac{1}{2},}\alpha,\beta,x^{\pm 1},y^{\pm 1}{,z^{\pm 1}}\right],\quad \forall\, n \in \mathbb{Z}.
\end{equation*}
We claim that the half integer $\frac{1}{2}$ can be removed using an argument by contradiction. Assume that $n_0$ is the first
positive integer such that
$$
\tau_{n_0} \in  \mathbb{Z}\left[\frac{1}{2},\alpha,\beta,x^{\pm 1},y^{\pm 1}{,z^{\pm 1}}\right]
$$
but
$$
\tau_{n_0} \notin  \mathbb{Z}\left[\alpha,\beta,x^{\pm 1},y^{\pm 1}{,z^{\pm 1}}\right].
$$
By considering index shift in the Somos-4 recurrence and applying the contradiction hypothesis, it is not hard to deduce that
$$
\tau_{n_0} \in  \mathbb{Z}\left[\alpha,\beta,y^{\pm 1}{,z^{\pm 1}},\left(\frac{\alpha z+\beta y^2}{x}\right)^{\pm 1}\right].
$$
A further analysis will lead to 
$$
\tau_{n_0} \in  \mathbb{Z}\left[\alpha,\beta,x^{\pm 1},y^{\pm 1}{,z^{\pm 1}}\right],
$$
which contradicts to the choice of $n_0$. Therefore we conclude that such positive $n_0$ does not exist. A similar argument also implies that there does not exist such negative $n_0$, thus leading to
$$	
\tau_n \in  \mathbb{Z}\left[\alpha,\beta,x^{\pm 1},y^{\pm 1}{,z^{\pm 1}}\right],\quad \forall\, n \in \mathbb{Z}.
$$
Finally, 
based on the gauge invariance, one can conclude that the bilateral Somos-4 \eqref{initial40} exhibits the Laurent phenomenon.
\end{remark}

\begin{remark}
It is noted that the Hankel determinant solution to half of the Somos-4 with $\tau_{-1}=\tau_0=1,\tau_{1}=x,\tau_2=y$ has been derived in  \cite{chang2015hankel}, while here our focus is the Hankel determinant solution to the bilateral Somos-4 with $\tau_{-1}=x, \tau_0=1,\tau_{1}=y,\tau_2=z$, for which the reduction formula to the half case is formally different from that in  \cite{chang2015hankel}. 
\end{remark}

\subsection{Initial value problem for Somos-5}
Now we consider the general initial value problem for the bilateral Somos-5 recurrence
\begin{equation}
	\begin{cases}
		&  \tau^*_{n+5}\tau^*_{n} =\alpha \tau^*_{n+4}\tau^*_{n+1} + \beta \tau^*_{n+3}\tau^*_{n+2},\qquad n \in \mathbb{Z},\\
		&  \tau^*_{0} =A, \quad  \tau^*_{1} =B, \quad \tau^*_{2} =C, \quad \tau^*_{3} =D,\quad  \tau^*_{4} =E,
	\end{cases}
	\label{initial50}
\end{equation}
where $\alpha$, $\beta$ and $\{A,B,C,D,E\}$ are indeterminants. Since the above recursion equation is invariant under gauge transformations $\tau^*_{n} \mapsto ab^n\tau^*_{n}$ for all complex numbers $a$ and $b$, the problem \eqref{initial50} can be equivalently transformed into
\begin{equation}
	\begin{cases}
		&  \tau_{n+5}\tau_{n} =\alpha \tau_{n+4}\tau_{n+1} + \beta \tau_{n+3}\tau_{n+2},\\
		&  \tau_{0} = 1, \quad  \tau_{1} = 1, \quad \tau_{2} =x, \quad \tau_{3} =y,\quad \tau_{4} =z.
	\end{cases}
	\label{initial5}
\end{equation}

In order to solve this problem, we recall a connection between Somos-4 and Somos-5, which was originally shown in  \cite[Proposition 2.8]{hone2007sigma} and then interpreted as a B\"acklund transformation in \cite[Section 3.2]{chang2015hankel}. We claim that this relationship  applies to not only the half case, but also the bilateral case and it enables us to solve the above problem for Somos-5.

\begin{lemma}
	Suppose that $\{\tau_n\}_{n \in \mathbb{Z}}$ satisfies 
	\begin{equation*}
		\tau_{n+5}\tau_{n} =\alpha \tau_{n+4}\tau_{n+1} + \beta \tau_{n+3}\tau_{n+2},\quad \forall\, n \in \mathbb{Z}.
	\end{equation*}
	If we define
	\begin{equation*}
		\Delta^{(1)}_{n} := \tau_{2n},\qquad \Delta^{(2)}_{n} := \tau_{2n+1},\quad \forall\, n \in \mathbb{Z}, 
	\end{equation*}
	then $\{\Delta^{(1)}_n\}_{n \in \mathbb{Z}}$ and $\{\Delta^{(2)}_n\}_{n \in \mathbb{Z}}$ satisfy the same Somos-4 recurrence, that is, 
	\begin{equation*}
		\Delta^{(1)}_{n+4}\Delta^{(1)}_{n}=
		\Tilde{\alpha}
		\Delta^{(1)}_{n+3}\Delta^{(1)}_{n+1}+\Tilde{\beta}\left(\Delta^{(1)}_{n+2}\right)^2,\quad \forall\, n \in \mathbb{Z} 
	\end{equation*}
	and 
	\begin{equation*}
		\Delta^{(2)}_{n+4}\Delta^{(2)}_{n}=
		\Tilde{\alpha}
		\Delta^{(2)}_{n+3}\Delta^{(2)}_{n+1}+\Tilde{\beta}\left(\Delta^{(2)}_{n+2}\right)^2,\quad \forall\, n \in \mathbb{Z},
	\end{equation*}
	where $\Tilde{\alpha}$ and $\Tilde{\beta}$ are defined by 
	\begin{equation*}
		\begin{aligned}
			&\Tilde{\alpha} = \beta^2,\qquad \Tilde{\beta} = \alpha(2\beta^2+\alpha\beta J+\alpha^3),
		\end{aligned}
	\end{equation*}
	and $J$ is a conserved quantity computed by 
	\begin{equation*}
		\begin{aligned}
			&J = h_{n-1}+h_{n} +\alpha\left(\frac{1}{h_{n-1}}+\frac{1}{h_{n}}\right)+\frac{\beta}{h_{n-1}h_{n}},\qquad
			h_n =\frac{\tau_{n-1}\tau_{n+2}}{\tau_{n}\tau_{n+1}} .
		\end{aligned}
	\end{equation*}
\end{lemma}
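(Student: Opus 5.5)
This lemma is purely algebraic, so I would prove it directly from the Somos-5 recurrence in terms of the ratio variables $h_n=\tau_{n-1}\tau_{n+2}/(\tau_n\tau_{n+1})$. There is no need to pass through the hyperelliptic machinery. I would assume throughout that the $\tau_n$ are nonzero, which is implicit in the definition of $h_n$.

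\textbf{Step 1: reduce to a map in the ratios.} Dividing the Somos-5 relation (with index shifted so it is centred) by $\tau_{n+3}\tau_{n+2}\tau_{n+1}\tau_n$ turns it into the second-order map
\begin{equation*}
h_{n+1}h_{n-1}=\frac{\alpha h_n+\beta}{h_n^2}\cdot h_n=\frac{\alpha}{h_n}\cdot\frac{1}{1}+\frac{\beta}{h_n}\cdots,
\end{equation*}
so I would first pin down the exact form. With $\tau_{n+3}\tau_{n-2}=\alpha\tau_{n+2}\tau_{n-1}+\beta\tau_{n+1}\tau_n$, one gets $h_{n-1}h_nh_{n+1}=\alpha h_n+\beta$. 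This is a QRT-type map.

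\textbf{Step 2: verify that $J$ is conserved.} Multiply $J_n:=h_{n-1}+h_n+\alpha(h_{n-1}^{-1}+h_n^{-1})+\beta/(h_{n-1}h_n)$ by $h_{n-1}h_n$. Checking $J_n=J_{n+1}$ then amounts to
\begin{equation*}
(h_{n-1}-h_{n+1})\bigl(h_n^2+\alpha\bigr)=\frac{(h_{n-1}-h_{n+1})}{h_{n-1}h_{n+1}}\bigl(\alpha h_n+\beta\bigr)\cdot(\ldots),
\end{equation*}
which factors once one uses $h_{n-1}h_{n+1}h_n=\alpha h_n+\beta$. This is routine.

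\textbf{Step 3: express the two-step ratios.} For $\Delta^{(1)}_n=\tau_{2n}$, the Somos-4 relation divided by $(\Delta^{(1)}_{n+2})^2$ reads $D_{m-1}D_m^2D_{m+1}=\tilde\alpha D_m+\tilde\beta$ (the form of Lemma \ref{lemmad}). Here
\begin{equation*}
D_m=\frac{\tau_{2m-2}\tau_{2m+2}}{\tau_{2m}^2}=h_{2m-1}h_{2m}^{\,\,}\cdot\frac{\tau_{2m-2}\tau_{2m+2}\tau_{2m}\tau_{2m+1}\ldots}{\ldots},
\end{equation*}
and more cleanly $D_m=h_{2m-1}h_{2m}\cdot\frac{\tau_{2m-2}\tau_{2m+2}\,\tau_{2m-1}\tau_{2m+1}}{\tau_{2m-1}\tau_{2m+2}\,\tau_{2m-2}\tau_{2m+1}}$. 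Since $h_{2m-1}h_{2m}\cdot h_{2m}h_{2m+1}$ telescopes in the same way, I would instead use the identity $\tau_{k-2}\tau_{k+2}/\tau_k^2=h_{k-1}h_k^{2}h_{k+1}/(h_k\cdots)$ and derive the exact product formula. The expected outcome is $D_m=h_{k-1}h_k^2h_{k+1}/(h_{k}\,\cdot\,)$ with $k=2m$, which reduces via Step 1 to $D_m=h_{k}(\alpha h_k+\beta)\cdot h_k^{-1}$-type expressions in just $h_{k-1},h_k$. By direct computation from the definitions, $\tau_{k-2}\tau_{k+2}/\tau_k^2=h_{k-1}h_kh_{k+1}\cdot\frac{1}{h_k}\cdot$(correction). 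I will compute it carefully, and the target is a rational function $D(h_{k-1},h_k)$.

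\textbf{Step 4: the main obstacle.} The hard part is showing that $D_{m-1}D_m^2D_{m+1}-\tilde\alpha D_m$ equals the constant $\tilde\beta=\alpha(2\beta^2+\alpha\beta J+\alpha^3)$. To do this I would write everything in the single pair $(p,q)=(h_{k-1},h_k)$, applying the map of Step 1 forward twice and backward twice, and then simplify. The difference should come out as a symmetric rational expression in $p,q$, and I would recognise it as $\alpha\beta$ times the combination $pq\,J(p,q)$ divided by $pq$, plus constants. If the direct simplification proves unwieldy, the fallback is to treat it as a rational-function identity: clear denominators and compare coefficients, or verify it on enough numerical points, since both sides are of bounded degree. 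The odd subsequence $\Delta^{(2)}$ is handled by the identical computation with $k$ odd, because $J$ and the map do not depend on parity, which gives the same $\tilde\alpha,\tilde\beta$.
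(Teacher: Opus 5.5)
Your route is correct, but it is not the paper's. The paper gives no argument of its own for this lemma. It cites \cite[Proposition 2.8]{hone2007sigma} and the B\"acklund-transformation interpretation in \cite{chang2015hankel} for the one-sided case, and simply asserts that the relation persists for $n\in\mathbb{Z}$. Your direct computation in the QRT variables $h_n$ is self-contained. Every identity it uses is local in $n$, so it proves the bilateral statement outright rather than by assertion. What it does not supply is the conceptual context (sigma-function solution, B\"acklund transformation) that the cited works provide.

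Your write-up of Steps 2--4 is muddled and should be tightened as follows.
\begin{itemize}
\item \textbf{Step 2.} Your displayed identity is not right as written. The clean factorization is $J_n-J_{n+1}=(h_{n-1}-h_{n+1})\bigl(1-\frac{\alpha h_n+\beta}{h_{n-1}h_nh_{n+1}}\bigr)$, which vanishes by Step 1.
\item \textbf{Step 3.} The fraction in your ``more cleanly'' formula is identically $1$, so $D_m=\tau_{2m-2}\tau_{2m+2}/\tau_{2m}^2=h_{2m-1}h_{2m}$ exactly. Likewise $\tau_{2m-1}\tau_{2m+3}/\tau_{2m+1}^2=h_{2m}h_{2m+1}$. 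The later guesses involving ``correction'' factors should be discarded.
\item \textbf{Step 4.} With $k=2m$, $p=h_{k-1}$ and $q=h_k$, group
\[
D_{m-1}D_m^2D_{m+1}=(h_{k-3}h_{k-2}h_{k-1})(h_{k-1}h_k)(h_kh_{k+1}h_{k+2})=pq\,(\alpha h_{k-2}+\beta)(\alpha h_{k+1}+\beta).
\]
Then one backward and one forward step suffice, namely $h_{k-2}=(\alpha p+\beta)/(pq)$ and $h_{k+1}=(\alpha q+\beta)/(pq)$. This gives
\[
D_{m-1}D_m^2D_{m+1}=\frac{(\alpha^2p+\alpha\beta+\beta pq)(\alpha^2q+\alpha\beta+\beta pq)}{pq}=\beta^2pq+\alpha^4+2\alpha\beta^2+\alpha^2\beta J,
\]
where $J$ is evaluated at $(h_{k-1},h_k)$. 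This is exactly $\tilde\alpha D_m+\tilde\beta$ with $\tilde\alpha=\beta^2$ and $\tilde\beta=\alpha(2\beta^2+\alpha\beta J+\alpha^3)$.
\end{itemize}
No numerical fallback is needed. The odd subsequence is the identical computation with $k$ odd, and it gives the same $\tilde\beta$ because $J$ is a global constant by Step 2.
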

By employing the above lemma, the problem \eqref{initial5} can be reformulated as the solution of the following two problems:
\begin{equation}
	\begin{cases}
		&  \Delta^{(1)}_{n+2}\Delta^{(1)}_{n-2} =\Tilde{\alpha} \Delta^{(1)}_{n+1}\Delta^{(1)}_{n-1} + \Tilde{\beta} (\Delta^{(1)}_{n})^2,\\
		&  \Delta^{(1)}_{-1} =\frac{\alpha \beta x^2+\alpha^2 xy +\beta z }{yz},\\ 
		& \Delta^{(1)}_{0} =1,\\ 
		& \Delta^{(1)}_{1} =x,\\
		& \Delta^{(1)}_{2} =z,
	\end{cases}
	\label{delta1}
\end{equation}
and
\begin{equation}
	\begin{cases}
		&  \Delta^{(2)}_{n+2}\Delta^{(2)}_{n-2} =\Tilde{\alpha}\Delta^{(2)}_{n+1}\Delta^{(2)}_{n-1} + \Tilde{\beta}(\Delta^{(2)}_{n})^2,\\
		&  \Delta^{(2)}_{-1} =\frac{\beta x +\alpha y}{z},\\ 
		&  \Delta^{(2)}_{0} =1,\\
		&\Delta^{(2)}_{1} =y,\\
		&\Delta^{(2)}_{2} =\alpha z+ \beta xy,
	\end{cases}
	\label{delta2}
\end{equation}
where $\Tilde{\alpha}$ and $\Tilde{\beta}$ are given by
\begin{equation}
	\begin{aligned}
		\Tilde{\alpha} = \beta^2, \qquad
		\Tilde{\beta} = 2\alpha\beta^2+\alpha^2\beta \left(\frac{y^2z+z^2+\beta x^3y+\alpha x^2 z+\alpha x^2y^2}{xyz}\right)+\alpha^4 .
	\end{aligned}
\end{equation}
These two problems are nothing but two specific cases of the problem \eqref{initial40} which has been solved in the previous subsection. As a result, one can derive the solutions to \eqref{delta1} and  \eqref{delta2}, and consequently solve the problem \eqref{initial50}.
\par
First, upon comparing the problems \eqref{delta1} and \eqref{delta2} with the problem \eqref{initial40} together with the equivalent formulation \eqref{initial4}, we can set
\begin{equation*}
	\begin{aligned}
		x^{(1)} &= \frac{\alpha \beta x^2+\alpha^2 xy +\beta z }{yz},\quad& y^{(1)} &=x,\quad &z^{(1)} &=z,\\
		x^{(2)} &=\frac{\beta x+\alpha y}{z}, \quad&y^{(2)} &= y,\quad& z^{(2)} &=\alpha z + \beta xy,
	\end{aligned}
\end{equation*}
and introduce the parameters
\begin{equation*}
	\begin{cases}
		&  d_0^{(r)} = x^{(r)}y^{(r)},\\
		&  v_0^{(r)} = \frac{\Tilde{\beta} (y^{(r)})^2-(x^{(r)})^2(z^{(r)})^2+\Tilde{\alpha} z^{(r)} + \Tilde{\alpha} x^{(r)} (y^{(r)})^3}{2\sqrt{\Tilde{\alpha}}x^{(r)}y^{(r)}z^{(r)}},\\
		& u_0^{(r)}=2y^{(r)},\\
		& u^{(r)} = \sqrt{\Tilde{\alpha}},\\
		& v^{(r)} = \frac{\Tilde{\beta} (y^{(r)})^2+(x^{(r)})^2(z^{(r)})^2+\Tilde{\alpha} z^{(r)} + \Tilde{\alpha} x^{(r)} (y^{(r)})^3}{2\sqrt{\Tilde{\alpha}}x^{(r)}y^{(r)}z^{(r)}},\\
		& f^{(r)} = \frac{\Tilde{\beta}}{\Tilde{\alpha}} - \frac{(\Tilde{\beta} (y^{(r)})^2+(x^{(r)})^2(z^{(r)})^2+\Tilde{\alpha} z^{(r)} + \Tilde{\alpha} x^{(r)} (y^{(r)})^3)^2}{4\Tilde{\alpha} (x^{(r)})^2(y^{(r)})^2(z^{(r)})^2},\\
		&u_{-1}^{(r)} = -2x^{(r)},\\
		&v_{-1}^{(r)} = \frac{u^{(r)}}{d_0^{(r)}}-v_0^{(r)},
	\end{cases}
\end{equation*}
for $r = 1,2$. Then we define two pairs of hyperelliptic functions with Laurent series expansions as follows
\begin{equation*}
	\begin{aligned}
		&G_{r}:=\frac{\sqrt{(X^2+f^{(r)})^2+4u^{(r)}(X-v^{(r)})} -X^2-f^{(r)}-2d_0^{(r)}}{u_{-1}^{(r)}(X-v_{-1}^{(r)})}=   s_{0,r}X^{-1}+s_{1,r}X^{-2}+\cdots,\\
		&G_r^{\dagger}:=\frac{-\sqrt{(X^2+f^{(r)})^2+4u^{(r)}(X-v^{(r)})} +X^2+f^{(r)}+
			2d_0^{(r)}}{u_{0}^{(r)}(X-v_{0}^{(r)})}=s^{\dagger}_{0,r}X^{-1}+s^{\dagger}_{1,r}X^{-2}+\cdots.
	\end{aligned}
\end{equation*}
Based on the intrinsic algebraic relations, we can show that $\{s_{n,r},s_{n,r}^{\dagger}\}_{n \in \mathbb{N},r = 1,2}$ satisfy the following recursion relations
\begin{equation}
	\begin{aligned}
		s_{0,r} &= \frac{u_0^{(r)}}{2},\\
		s_{1,r} &= -\frac{1}{2}u_0^{(r)}v_0^{(r)},\\
		s_{2,r} &=-(f^{(r)}+2d_0^{(r)})s_{0,r}- \frac{1}{2}u_{-1}^{(r)}\left(\sum_{i+j= 0}s_{i,r}s_{j,r}\right), \\
		s_{k,r} &= -(f^{(r)}+2d_0^{(r)})s_{k-2,r}  -\frac{1}{2}u_{-1}^{(r)}\left(\sum_{i+j= k-2}s_{i,r}s_{j,r}\right)+\frac{1}{2}u_{-1}^{(r)}v_{-1}^{(r)}\left(\sum_{i+j= k-3}s_{i,r}s_{j,r}\right),\quad k\ge 3,
	\end{aligned}
	\label{rec-nodagger}
\end{equation}
\begin{equation}
	\begin{aligned}
		s^{\dagger}_{0,r} &= -\frac{u_{-1}^{(r)}}{2},\\
		s^{\dagger}_{1,r} &= \frac{1}{2}u_{-1}^{(r)}v_{-1}^{(r)},\\
		s^{\dagger}_{2,r} &=-(f^{(r)}+2d_0^{(r)})s^{\dagger}_{0,r}+ \frac{1}{2}u_{0}^{(r)}\left(\sum_{i+j= 0}s^{\dagger}_{i,r}s^{\dagger}_{j,r}\right),\\
		s^{\dagger}_{k,r} &= -(f^{(r)}+2d_0^{(r)})s^{\dagger}_{k-2,r}  +\frac{1}{2}u_{0}^{(r)}\left(\sum_{i+j= k-2}s^{\dagger}_{i,r}s^{\dagger}_{j,r}\right)-\frac{1}{2}u_{0}^{(r)}v_{0}^{(r)}\left(\sum_{i+j= k-3}s^{\dagger}_{i,r}s^{\dagger}_{j,r}\right),\quad k\ge 3.
	\end{aligned}
	\label{rec-dagger}
\end{equation}
Finally, the solution to the problem \eqref{initial5} on the bilateral Somos-5 can be summarized in the following theorem.
\begin{theorem}
The solution $\{\tau_n\}_{n \in \mathbb{Z}}$ to the problem \eqref{initial5} on the  bilateral  Somos-5 recurrence can be given by the Hankel determinants with $\{s_{n,r},s_{n,r}^{\dagger}\}_{n \in \mathbb{N},r = 1,2}$ in \eqref{rec-nodagger}-\eqref{rec-dagger} as elements, that is,
\begin{equation} \label{post_somos5}
	\tau_{2n}=\Delta^{(1)}_{n} =
	\left\{
	\begin{aligned}
		&\det(s_{i+j-2,1})_{1 \le i,j\le  n}, \quad && n \ge 1, \\
		&1, \quad  &&n= 0,\\ 
		&\det(s^{\dagger}_{i+j-2,1})_{1 \le i,j\le  -n}, \quad && n \le -1, \\
	\end{aligned}     
	\right.
\end{equation}
\begin{equation}
	\tau_{2n+1}=\Delta^{(2)}_{n} =
	\left\{
	\begin{aligned}
		&\det(s_{i+j-2,2})_{1 \le i,j\le  n}, \quad && n \ge 1, \\
		&1, \quad  &&n= 0,\\ 
		&\det(s^{\dagger}_{i+j-2,2})_{1 \le i,j\le  -n}, \quad && n \le -1. \\
	\end{aligned}   
	\right.
\end{equation}
\end{theorem}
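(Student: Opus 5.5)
The plan is to reduce everything to the already-solved bilateral Somos-4 problem \eqref{initial4}, using the Somos-4/Somos-5 lemma stated just above.

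First, let $\{\tau_n\}_{n\in\mathbb{Z}}$ be the unique solution of \eqref{initial5}, assuming genericity so that all terms are nonzero and the recurrence can be run in both directions. By the lemma, the even and odd subsequences $\Delta^{(1)}_n=\tau_{2n}$ and $\Delta^{(2)}_n=\tau_{2n+1}$ both satisfy the Somos-4 recurrence with coefficients $\Tilde{\alpha}=\beta^2$ and $\Tilde{\beta}=\alpha(2\beta^2+\alpha\beta J+\alpha^3)$. Since $J$ is conserved, I will evaluate it at a convenient index, namely from $h_1,h_2$ computed from $\tau_0,\dots,\tau_4$. This should reproduce the stated expression for $\Tilde{\beta}$ in terms of $x,y,z$.

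Next, I will verify the initial data of \eqref{delta1} and \eqref{delta2}:
\begin{itemize}
\item $\Delta^{(1)}_0=\tau_0=1$, $\Delta^{(1)}_1=\tau_2=x$, $\Delta^{(1)}_2=\tau_4=z$, and $\Delta^{(2)}_0=\tau_1=1$;
\item the remaining values should be $\Delta^{(2)}_1=\tau_3=y$, $\Delta^{(2)}_2=\tau_5=\alpha z+\beta xy$ (from the recurrence with $n=0$), $\Delta^{(2)}_{-1}=\tau_{-1}=(\beta x+\alpha y)/z$ (from the recurrence with $n=-1$, solving for $\tau_{-1}$), and $\Delta^{(1)}_{-1}=\tau_{-2}$ (from $n=-2$, solved using the value of $\tau_{-1}$).
\end{itemize}
These are routine substitutions, and they give exactly the data listed in \eqref{delta1}--\eqref{delta2}.

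Finally, each of \eqref{delta1} and \eqref{delta2} is an instance of \eqref{initial4}, with $(x^{(r)},y^{(r)},z^{(r)})$ and $(\Tilde\alpha,\Tilde\beta)$ in place of $(x,y,z)$ and $(\alpha,\beta)$. The preceding Somos-4 theorem applies with the parameter choice \eqref{c1} (using $\sqrt{\Tilde\alpha}=\beta$) and $u_{-1},v_{-1}$ as in \eqref{c-1}. It expresses the unique bilateral solution as Hankel determinants of the coefficients of $G_r=(Y_0^{(r)})^{-1}$ for positive indices and $G_r^\dagger=(-\Tilde Y_0^{(r)})^{-1}$ for negative indices.

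The coefficient recursions \eqref{rec-nodagger} and \eqref{rec-dagger} come from the quadratic relations satisfied by $G_r$ and $G_r^\dagger$, exactly as \eqref{rec_s1}--\eqref{rec_s2} did. Uniqueness of the bilateral Somos-4 solution, given four consecutive values and nonvanishing terms, then identifies these Hankel determinants with $\tau_{2n}$ and $\tau_{2n+1}$.

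The main obstacle I expect is bookkeeping rather than conceptual: confirming the back-computed initial values $\tau_{-1},\tau_{-2}$ and the evaluation of $J$. One also has to make sure the Somos-4 theorem's nondegeneracy hypotheses hold for the derived data, namely $d_0^{(r)}u_0^{(r)}\ne0$ and all $d_n\ne0$. I will handle this by assuming generic $\alpha,\beta,x,y,z$, so the statement holds as an identity of rational functions.
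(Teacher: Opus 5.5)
Your proposal is correct and matches the paper's argument. The paper splits $\tau$ into $\Delta^{(1)}_n=\tau_{2n}$ and $\Delta^{(2)}_n=\tau_{2n+1}$ via the Somos-5/Somos-4 lemma, reads off the data in \eqref{delta1}--\eqref{delta2}, and applies the bilateral Somos-4 Hankel-determinant theorem with parameters \eqref{c1} and \eqref{c-1} to each subsequence; your explicit appeal to uniqueness under generic nonvanishing is the identification step the paper leaves tacit, and your back-computed values $\tau_5=\alpha z+\beta xy$, $\tau_{-1}=(\beta x+\alpha y)/z$, $\tau_{-2}$ and $J$ from $h_1=y/x$, $h_2=z/(xy)$ all check out.
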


\begin{remark}
	It seems that the Laurent property for the bilateral Somos-5 recurrence cannot be seen directly from the expressions given above. However, based on the Laurent property for the Somos-4 recurrence, one can at least conclude that 
	\begin{equation*}
		\tau_n \in \mathbb{Z}[\alpha,\beta,\tau_{-2}^{\pm1},\tau_{-1}^{\pm1},\tau_{0}^{\pm1},\tau_{1}^{\pm1},\tau_{2}^{\pm1},\tau_{3}^{\pm1},\tau_{4}^{\pm1},\tau_{5}^{\pm1}],
	\end{equation*}
    and simultaneously
    \begin{equation*}
    	\tau_n \in \mathbb{Z}[\alpha,\beta,\tau_{-5}^{\pm1},\tau_{-4}^{\pm1},\tau_{-3}^{\pm1},\tau_{-2}^{\pm1},\tau_{-1}^{\pm1},\tau_{0}^{\pm1},\tau_{1}^{\pm1},\tau_{2}^{\pm1}].
    \end{equation*}
	Observe that, for the bilateral Somos-5 recurrence, $\{\tau_3,\tau_4,\tau_5\}$ and $\{\tau_{-3},\tau_{-4},\tau_{-5}\}$  can be expressed as Laurent polynomials in $\{\tau_{-2},\tau_{-1},$ $\tau_{0},\tau_{1},\tau_{2}\}$ and the numerators in the rational expressions   are pairwise coprime. This implies
	\begin{equation*}
		\begin{aligned}
			\tau_n \in &\mathbb{Z}[\alpha,\beta,\tau_{-5}^{\pm1},\tau_{-4}^{\pm1},\tau_{-3}^{\pm1},\tau_{-2}^{\pm1},\tau_{-1}^{\pm1},\tau_{0}^{\pm1},\tau_{1}^{\pm1},\tau_{2}^{\pm1}] \\
			&\quad \cap \mathbb{Z}[\alpha,\beta,\tau_{-2}^{\pm1},\tau_{-1}^{\pm1},\tau_{0}^{\pm1},\tau_{1}^{\pm1},\tau_{2}^{\pm1},\tau_{3}^{\pm1},\tau_{4}^{\pm1},\tau_{5}^{\pm1}] \\
			=&\mathbb{Z}[\alpha,\beta,\tau_{-2}^{\pm1},\tau_{-1}^{\pm1},\tau_{0}^{\pm1},\tau_{1}^{\pm1},\tau_{2}^{\pm1}],
		\end{aligned}
	\end{equation*}
	which  reveals the Laurent property for the bilateral Somos-5 recurrence.
\end{remark}

\begin{remark}  It is noted that what we are considering is the Hankel determinant solution to the bilateral Somos-5 with $\tau_{0}=1, \tau_{1}=1, \tau_{2}=x,\tau_{3}=y,\tau_{4}=z$.  Although Hankel determinant formulae to half of the Somos-5  with $\tau_{-2}=\tau_{-1}=1,\tau_{0}=x,\tau_1=y,\tau_2=z$ have also been studied in \cite{chang2015hankel}, the formula there are different from that in \eqref{post_somos5}.  Besides, we also point out that, very recently in \cite{hone2023family}, Hone, Roberts and Vanhaecke  obtained simpler Hankel determinant formulae for the bilateral Somos-5 sequence with initials and coefficients 1s in a nice way.
\end{remark}

\subsection{A symmetric part for the Somos' original conjecture for Somos-4} \label{dualsomos}
In the year 2000, Michael Somos examined the following curve
\begin{equation}
	\mathcal{C}_0 :y-y^2 = z- z^3.
	\label{somoscurve}
\end{equation}
He conjectured that if one expands $\frac{y}{z}-1$ into a power series of $z$ when $z \to 0$, that is,
\begin{equation*}
	\begin{aligned}
		\frac{y}{z}-1 &= \frac{\frac{1}{2}-\sqrt{z^3-z+\frac{1}{4}}}{z}-1 \\
		&= 1z^1+1z^2+3z^3+8z^4+23z^5+68z^6+207z^7+644z^8+2040z^9+\cdots\\
				&= s_0z^1+s_1z^2+s_2z^3 +\cdots,
	\end{aligned}
\end{equation*}
where 
        $s_k = 2s_{k-1}+\sum_{i+j=k-2}s_is_j,k\ge 2$ with         $s_0 = 
        s_1 = 1$,
then the corresponding sequence of Hankel determinants $\{\Delta_{n}\}_{n \in \mathbb{N}}$ defined by 
\begin{equation*}
	\Delta_{n} := \det(s_{i+j-2})_{1 \le i.j \le n}
\end{equation*}
satisfies half of the Somos-4 recurrence
\begin{equation}
	\Delta_{n+4}\Delta_{n}=\Delta_{n+3}\Delta_{n+1}+(\Delta_{n+2})^2,\quad \forall\, n \in \mathbb{N}.
	\label{somoe}
\end{equation}
This conjecture can be proved using various techniques in \cite{xin2009proof} and \cite{chang2012conjecture}. 
\par

{It is noted that our scheme works for this particular curve. Indeed, if we take the birational transformation 
\begin{equation*}
    \begin{aligned}
        y &= \frac{1+(X+1)^{-2}Y}{2},\qquad z = (X+1)^{-1},
    \end{aligned}
\end{equation*}
the curve \eqref{somoscurve} becomes
\begin{equation*}
    Y^{2} =  (X^2 -3)^2 - 4(X+2),
\end{equation*}
which is exactly in the form of the spectral curve \eqref{curve} when the parameters are taken to be $f= -3,u=-1,v= 2$. And the generating function $\frac{y}{z} - 1= \frac{1}{2(X+1)} Y + \frac{X^2-1}{2(X+1)}$ 
corresponds to the choice of $Y_0$ when the parameters take the values $u_0 = 2, v_0 = -1, d_0 = 1$. Also see \cite[Theorem 5.3, Remark 5.4]{hone2021}.
The action of the Galois group $Gal(\mathcal{F}/\mathbb{C}(X))$ is independent of coordinates, therefore we are free to apply our results to the original coordinates $(y,z)$.}
\par
By using our methods based on quadratic orthogonal pairs, we can find another curve whose Taylor series expansion will give the negative part of the same Somos-4 recurrence.
{Indeed, let $\frac{\Tilde{y}}{z} - 1 $ be the generating function for the corresponding negative sequence, by relation \eqref{symmetric}, we have
\begin{equation*}
    (1-\frac{\Tilde{y}}{z})(\frac{y}{z}- 1)^* = -1.
\end{equation*}
Substituting $\frac{y}{z}- 1= \frac{\frac{1}{2}-\sqrt{z^3-z+\frac{1}{4}}}{z}-1$ into it and using the definition for the symbol $*$, we immediately find the algebraic relation for $\Tilde{y}$ as:}
\begin{equation*}
	\Tilde{\mathcal{C}}_0 :(1-z)\Tilde{y}^2+(2z^2-1)\Tilde{y} = z(z^2-1),
\end{equation*}
{which is the exact equation for the dual curve.} If one expands $\frac{\Tilde{y}}{z}-1$ into a power series of $z$ when $z \to 0$, one has
\begin{equation*}
	\begin{aligned}
		\frac{\Tilde{y}}{z}-1 &= \frac{\sqrt{z^3-z+\frac{1}{4}}+z-\frac{1}{2}}{z(z-1)}\\
		&= 1z^1+2z^2+5z^3+13z^4+36z^5+104z^6+311z^7+955z^8+2995z^9+\cdots\\
				&= \Tilde{s}_0z^1+\Tilde{s}_1z^2+\Tilde{s}_2z^3 +\cdots,
	\end{aligned}
\end{equation*}
where $
        \Tilde{s}_{k} =3\Tilde{s}_{k-1}-2\Tilde{s}_{k-3} + \sum_{i+j = k-3}(\Tilde{s}_{i}-\Tilde{s}_{i+1}) (\Tilde{s}_{j}-\Tilde{s}_{j+1}),k\ge 3$ 
   with $\Tilde{s}_0= 1, \Tilde{s}_1 = \Tilde{s}_1 = 2 $.
     Define
\begin{equation*}
	\tau_{n} =
	\left\{
	\begin{aligned}
		&\det(s_{i+j-2})_{1 \le i,j\le  n}, \quad && n \ge 1, \\
		&1, \quad  &&n= 0,\\ 
		&\det(\Tilde{s}_{i+j-2})_{1 \le i,j\le  -n}, \quad && n \le -1, \\
	\end{aligned}       
	\right.
\end{equation*}
then $\{\tau_n\}_{n \in \mathbb{Z}}$ is a bilateral Somos-4 sequence satisfying 
\begin{equation*}
	\tau_{n+4}\tau_{n}=\tau_{n+3}\tau_{n+1}+(\tau_{n+2})^2,\quad \forall\, n \in \mathbb{Z},
\end{equation*}
with $\tau_{-2}=\tau_{-1}=\tau_0=\tau_1=1$.

\section{Acknowledgement}
We thank Professor A.N.W. Hone for his valuable communications.
X.K. Chang's research was supported by the National Natural Science Foundation of China (Grant Nos. 12288201, 12222119) and the Youth Innovation Promotion Association CAS.

\appendix

\section{Quadratic orthogonal pairs for general hyperelliptic curves} \label{generalizepairs}
In this appendix let's discuss quadratic orthogonal pairs for general hyperelliptic curves. Consider a hyperelliptic curve of genus $g$ defined by 
\begin{equation}
	\mathcal{C}:Y^2 = A(X)^2 + 4R(X),
\end{equation}
where $g$ is an arbitrary positive integer and $A(X)$,$R(X)$ are polynomials defined by
\begin{equation*}
	\begin{aligned}
		A(X) &= X^{g+1} +\sum_{j=0}^{g-1}k^{(j)}X^j,\qquad R(X) &= u\left(X^{g}+ \sum_{j=0}^{g-1}l^{(j)}X^j\right).
	\end{aligned}
\end{equation*}
\par
In \cite{hone2021}, Hone constructed two sequences of polynomials $\{P_n\}_{n \in \mathbb{Z}}$ and $\{Q_n\}_{n \in \mathbb{Z}}$ satisfying the following iteration
\begin{equation}
	\begin{aligned}
		P_{n+1} = a_{n}Q_{n} - P_{n},\qquad Q_{n+1} = Q_{n-1} +2a_{n}P_{n} - a_{n}^2Q_{n}, \qquad \forall\, n \in \mathbb{Z},
	\end{aligned} \label{iteration}
\end{equation}
where $\{P_n\}_{n \in \mathbb{Z}}$ and $\{Q_n\}_{n \in \mathbb{Z}}$ are expressed by
\begin{equation*}
	\begin{aligned}
		P_{n}(X) = A(X)+ \sum_{j=0}^{g-1}\pi_{n}^{(j)}X^j,\qquad Q_{n}(X) = u_{n}\left(X^g+\sum_{j=0}^{g-1}\rho_{n}^{(j)}X^j\right),
	\end{aligned}
\end{equation*}
and $\{a_n\}_{n \in \mathbb{Z}}$ is a sequence of linear functions uniquely determined by the coefficients of $\{Q_{n}\}_{n \in \mathbb{Z}}$ via
\begin{equation*}
	a_{n}(X) = \frac{2}{u_n}(X-\rho_{n}^{(g-1)}).
\end{equation*}
By use of the iteration \eqref{iteration}, it is straightforward to compute
\begin{equation*}
	\begin{aligned}
		P_{n+1}^2 + Q_{n+1}Q_{n}  = (a_{n}Q_{n} - P_{n})^2+(Q_{n-1} +2a_{n}P_{n} - a_{n}^2Q_{n})Q_{n}  = P_{n}^2+Q_{n}Q_{n-1},
	\end{aligned}
\end{equation*}
which shows that $P_{n}^2+Q_{n}Q_{n-1}$ is a conserved quantity for the iteration \eqref{iteration}. Therefore if we choose initial values $\{P_0,Q_0,Q_{-1}\}$ such that 
\begin{equation*}
	P_{0}^2+Q_{0}Q_{-1} = Y^2,
\end{equation*}
then we will have
\begin{equation}
	P_{n}^2+Q_{n}Q_{n-1} = Y^2 \label{conserved}
\end{equation}
for any integer $n$.
\par
The above two sequences of polynomials $\{P_n\}_{n \in \mathbb{Z}}$ and $\{Q_n\}_{n \in \mathbb{Z}}$ give rise to a choice of sequences of hyperelliptic functions. To be precise, let us choose one sequence $\{Y_n=r_n Y+s_n|r_n,s_n\in \mathbb{C}(X),r_n \ne 0 \}_{n\in \mathbb{Z}}$ and $\{a_n\}_{n \in \mathbb{Z}}$  by setting
\begin{equation}
	\begin{aligned}
		&r_n := \frac{1}{Q_{n}},\qquad s_n :=\frac{P_n}{Q_{n}},\qquad a_n := a_n.
	\end{aligned}
	\label{choiceg1}
\end{equation}
and the other sequence $\{\Tilde{Y}_n=\Tilde{r}_nY+\Tilde{s}_n|\Tilde{r}_n,\Tilde{s}_n\in \mathbb{C}(X),\Tilde{r}_n \ne 0 \}_{n\in \mathbb{N}}$ and $\{\Tilde{a}_n\}_{n \in \mathbb{N}}$ by setting
\begin{equation}
	\begin{aligned}
		&\Tilde{r}_{n} := \frac{1}{Q_{n-1}},\qquad \Tilde{s}_{n} :=\frac{P_n}{Q_{n-1}},\qquad \Tilde{a}_{n} :=a_{n-1}.
	\end{aligned}
	\label{choiceg2}
\end{equation}
As a result, we can obtain generalizations of  Lemma \ref{Ylemma0}, Lemma \ref{Ylemma1} and Theorem \ref{pair}.
\begin{lemma}	\label{Ylemma0G}
	Given $\{P_0,Q_0,Q_{-1}\}$ satisfying $P_{0}^2+Q_{0}Q_{-1} = Y^2$, the sequences $\{Y_n=r_nY+s_n|r_n,s_n\in \mathbb{C}(X),r_n \ne 0 \}_{n\in \mathbb{Z}}$ and $\{a_n\}_{n \in \mathbb{Z}}$ defined according to \eqref{choiceg1} admit the following properties:
		\begin{enumerate}[(i)]
		\item $Y_n= a_n+\frac{1}{Y_{n+1}},\quad \forall\, n \in \mathbb{Z}$;
		\item $a_n = \left \lfloor Y_n \right \rfloor,\quad \forall\, n \in \mathbb{Z}$;
		\item $Y_{n}=\langle a_n,a_{n+1},\ldots \rangle, \quad \forall\, n \in \mathbb{Z} $;
		\item $Y_n$ is J-expressible and Lax representable for any integer $n$.
	\end{enumerate}
\end{lemma}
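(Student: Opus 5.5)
We prove the four items in order, using only the iteration \eqref{iteration} and the conserved quantity \eqref{conserved}.

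\emph{Item (i).} Write $Y_n=(Y+P_n)/Q_n$. Then
\[
Y_n-a_n=\frac{Y+P_n-a_nQ_n}{Q_n}=\frac{Y-P_{n+1}}{Q_n},
\]
using $P_{n+1}=a_nQ_n-P_n$. Since \eqref{conserved} at index $n+1$ gives $Y^2-P_{n+1}^2=Q_{n+1}Q_n$, we get
\[
\frac{Y-P_{n+1}}{Q_n}=\frac{Q_{n+1}}{Y+P_{n+1}}=\frac{1}{Y_{n+1}}.
\]
This proves (i) for all $n\in\mathbb{Z}$. It also shows that $Q_n\neq0$ is forced, given $u_n\neq0$.

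\emph{Item (ii).} We expand at $\infty$. Since $Y^2=A^2+4R$ with $\deg A=g+1$ and $\deg R=g$, we have $Y=A+O(X^{-1})$; indeed $Y-A=4R/(Y+A)$ has degree $-1$. Hence $Y+P_n=2A+\sum_j\pi_n^{(j)}X^j+O(X^{-1})$. This has degree $g+1$ with leading coefficient $2$, and its $X^g$-coefficient vanishes because $A$ has no $X^g$ term. Dividing by $Q_n=u_n(X^g+\rho_n^{(g-1)}X^{g-1}+\cdots)$ gives
\[
Y_n=\frac{2}{u_n}\bigl(X+c\bigr)+O(X^{-1}),
\]
where the constant comes from the expansion of $1/(1+\rho_n^{(g-1)}X^{-1}+\cdots)$ and equals $c=-\rho_n^{(g-1)}$. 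This matches $a_n=\frac{2}{u_n}(X-\rho_n^{(g-1)})$. It remains to see that no polynomial remainder is left over. By (i), $Y_n-a_n=1/Y_{n+1}$, and $Y_{n+1}$ has degree $1$ by the same leading-term computation at index $n+1$. Hence $1/Y_{n+1}=O(X^{-1})$ and $\lfloor Y_n\rfloor=a_n$. This is the cleanest route, and it avoids tracking lower coefficients.

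\emph{Item (iii).} This follows from (i) and (ii): applying the continued fraction algorithm of Section \ref{A} to $Y_n$ produces exactly $a_n,a_{n+1},\ldots$ and the complete quotients $Y_{n+1},Y_{n+2},\ldots$. Each $Y_n$ is irrational, since $r_n=1/Q_n\neq0$ and $Y\notin\mathbb{C}(X)$, so the expansion is infinite.

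\emph{Item (iv).} J-expressibility is immediate, because every $a_k$ is linear with nonzero leading coefficient $2/u_k$. For Lax representability we check \eqref{iffeq}. With $r_{n+1}=1/Q_{n+1}$ and $s_{n+1}=P_{n+1}/Q_{n+1}$,
\[
r_n\bigl(r_{n+1}^2Y^2-s_{n+1}^2\bigr)=\frac{Y^2-P_{n+1}^2}{Q_nQ_{n+1}^2}=\frac{1}{Q_{n+1}}=r_{n+1},
\]
again by \eqref{conserved}. Alternatively, the conjugate relation $Y_n^*=a_n+1/Y_{n+1}^*$ follows from the computation in (i) with $Y$ replaced by $-Y$, and Lemma \ref{iff} then applies.

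The only delicate point is the degree bookkeeping in (ii). Two things must be confirmed there: that $Y+P_n$ has vanishing $X^g$-coefficient, and that $Q_n$ really has degree $g$ with $u_n\neq0$ for all $n$. Both hold as long as the iteration is well defined, which is the standing nondegeneracy assumption.
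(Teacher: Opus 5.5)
Your proof is correct and follows essentially the paper's route: (i) and the Lax condition in (iv) come from the iteration together with the conserved quantity $P_{n+1}^2+Q_{n+1}Q_n=Y^2$, and (iii) follows from (i) and (ii). For (ii), the paper bounds $Y_n-a_n=(Y-P_{n+1})/Q_n=4R/\bigl((Y+A)Q_n\bigr)-\sum_{j}\pi_{n+1}^{(j)}X^j/Q_n$ directly, whereas you read off the polynomial part of $(Y+P_n)/Q_n$ from its leading terms, which is the same degree count at infinity.
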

\begin{proof}
	The proof for (i) follows immediately from \eqref{iteration} and \eqref{conserved}. To prove (ii), it suffices to show
	\begin{equation*}
		Y_{n}- a_{n} \in \mathcal{O}(X^{-1}),
	\end{equation*}
which follows from
	\begin{equation*}
		\begin{aligned}
			Y_{n}- a_{n}  &= \frac{Y+P_{n}-a_{n}Q_{n}}{Q_{n}} \\
			& = \frac{Y-P_{n+1}}{Q_{n}} \\
			& = \frac{\sqrt{A^2+4R}-A}{u_{n}\left(X^g+\sum_{j=0}^{g-1}\rho_{n}^{(j)}X^j\right)}-\frac{\sum_{j=0}^{g-1}\pi_{n+1}^{(j)}X^j}
			{u_{n}\left(X^g+\sum_{j=0}^{g-1}\rho_{n}^{(j)}X^j\right)} \\
			& = \frac{4R}{\left(\sqrt{A^2+4R}+A\right)\left(u_{n}\left(X^g+\sum_{j=0}^{g-1}\rho_{n}^{(j)}X^j\right)\right)}-\frac{\sum_{j=0}^{g-1}\pi_{n+1}^{(j)}X^j}
			{u_{n}\left(X^g+\sum_{j=0}^{g-1}\rho_{n}^{(j)}X^j\right)} \\
			&\in \mathcal{O}(X^{-1}).
		\end{aligned}
	\end{equation*}
	Combining (i) and (ii), we immediately get (iii). Furthermore, it is not hard to arrive at (iv) by use of (iii) and \eqref{conserved}.
\end{proof}

\begin{lemma}	\label{Ylemma1G}
	Given $\{P_0,Q_0,Q_{-1}\}$ satisfying $P_{0}^2+Q_{0}Q_{-1} = Y^2$, the sequences $\{\Tilde{Y}_n=\Tilde{r}_nY+\Tilde{s}_n|\Tilde{r}_n,\Tilde{s}_n\in \mathbb{C}(X),\Tilde{r}_n \ne 0 \}_{n\in \mathbb{N}}$ and $\{\Tilde{a}_n\}_{n \in \mathbb{N}}$ defined according to \eqref{choiceg2} admit the following properties:
		\begin{enumerate}[(i)]
		\item $\Tilde{Y}_{n}= \Tilde{a}_{n}+\frac{1}{\Tilde{Y}_{n-1}},\quad \forall\, n \in \mathbb{Z}$;
		\item $\Tilde{a}_{n} = \left \lfloor \Tilde{Y}_{n} \right \rfloor,\quad \forall\, n \in \mathbb{Z}$;
		\item $\Tilde{Y}_{n}= \langle \Tilde{a}_{n},\Tilde{a}_{n-1},\ldots \rangle, \quad \forall\, n \in \mathbb{Z}  $;
		\item $\Tilde{Y}_{n}$ is J-expressible and Lax representable for any integer $n$.
	\end{enumerate}
\end{lemma}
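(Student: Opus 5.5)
We will follow the proof of Lemma \ref{Ylemma0G}, running the iteration \eqref{iteration} backwards. With \eqref{choiceg2} we have $\Tilde{Y}_n=(Y+P_n)/Q_{n-1}$ and $\Tilde{a}_n=a_{n-1}$.

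For (i), we compute $\Tilde{Y}_n-a_{n-1}=(Y+P_n-a_{n-1}Q_{n-1})/Q_{n-1}$. The first relation in \eqref{iteration}, shifted to $P_n=a_{n-1}Q_{n-1}-P_{n-1}$, gives $P_n-a_{n-1}Q_{n-1}=-P_{n-1}$. Hence $\Tilde{Y}_n-\Tilde{a}_n=(Y-P_{n-1})/Q_{n-1}$. By the conserved quantity \eqref{conserved} at index $n-1$, $(Y-P_{n-1})(Y+P_{n-1})=Y^2-P_{n-1}^2=Q_{n-1}Q_{n-2}$. Therefore
\begin{equation*}
\frac{Y-P_{n-1}}{Q_{n-1}}=\frac{Q_{n-2}}{Y+P_{n-1}}=\frac{1}{\Tilde{Y}_{n-1}},
\end{equation*}
since $\Tilde{Y}_{n-1}=(Y+P_{n-1})/Q_{n-2}$. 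This is (i), valid for all $n\in\mathbb{Z}$ because \eqref{iteration} and \eqref{conserved} hold on all of $\mathbb{Z}$.

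For (ii), it suffices to show $(Y-P_{n-1})/Q_{n-1}\in\mathcal{O}(X^{-1})$. This is the same estimate as in the proof of Lemma \ref{Ylemma0G}, with $P_{n+1},Q_n$ replaced by $P_{n-1},Q_{n-1}$. Write $Y-P_{n-1}=(Y-A)-\sum_j\pi_{n-1}^{(j)}X^j$ with $Y-A=4R/(Y+A)$. Since $Y+A\sim 2X^{g+1}$ and $\deg R=g$, we get $Y-A=\mathcal{O}(X^{-1})$. The correction term has degree at most $g-1$. Dividing by $Q_{n-1}$, which has degree $g$ and leading coefficient $u_{n-1}\neq 0$, gives $\mathcal{O}(X^{-1})$. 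Since $a_{n-1}$ is a polynomial, $\lfloor\Tilde{Y}_n\rfloor=a_{n-1}=\Tilde{a}_n$.

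Statement (iii) follows by iterating (i) and (ii), exactly as in Section \ref{A}: the expansion is built by repeatedly taking polynomial parts and inverting the remainders. The recursion runs in the decreasing index direction, and it never terminates because $\Tilde{r}_n=1/Q_{n-1}\neq 0$ and $Y$ is irrational.

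For (iv), J-expressibility is immediate from (iii), since each $\Tilde{a}_k=a_{k-1}$ is linear with leading coefficient $2/u_{k-1}\neq 0$. For Lax representability, condition \eqref{iffeq} for the backward recursion reads $\Tilde{r}_{n-1}=\Tilde{r}_n(\Tilde{r}_{n-1}^2Y^2-\Tilde{s}_{n-1}^2)$. This is the one point needing care: the direction of the index shift must match the recursion (i), in which $\Tilde{Y}_{n-1}$ plays the role of ``$Y_{n+1}$''. Substituting \eqref{choiceg2}, the right side becomes $\frac{1}{Q_{n-1}}\cdot\frac{Y^2-P_{n-1}^2}{Q_{n-2}^2}=\frac{Q_{n-1}Q_{n-2}}{Q_{n-1}Q_{n-2}^2}=\frac{1}{Q_{n-2}}$, using \eqref{conserved}. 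This equals $\Tilde{r}_{n-1}$, as required.

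Alternatively, one can observe from (i) and \eqref{conserved} that the conjugate relation $\Tilde{Y}_n^*=\Tilde{a}_n+1/\Tilde{Y}_{n-1}^*$ also holds, and then invoke Lemma \ref{iff} directly.
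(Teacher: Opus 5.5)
Your proof is correct and follows the paper's approach: (i), (iii) and (iv) are exactly the computations the paper delegates to ``the same way as the previous lemma.'' Your check that \eqref{iffeq} must be read with $\Tilde{Y}_{n-1}$ in the role of ``$Y_{n+1}$'' is also right. The only difference is in (ii): the paper writes $\Tilde{Y}_n = Y_{n-1} + (P_n-P_{n-1})/Q_{n-1}$ and invokes Lemma \ref{Ylemma0G}(ii), while you redo the $\mathcal{O}(X^{-1})$ estimate for $(Y-P_{n-1})/Q_{n-1}$ directly; both rest on the same degree count.
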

\begin{proof}
	The proof of (i), (iii) and (iv) can be achieved in the same way as those in the previous lemma. To prove (ii), we observe that
\begin{equation*}
		\begin{aligned}
			\left \lfloor \Tilde{Y}_{n} \right \rfloor & = \left \lfloor \frac{Y+P_{n}}{Q_{n-1}} \right \rfloor \\
			& = \left \lfloor \frac{Y+P_{n-1}}{Q_{n-1}} \right \rfloor + \left \lfloor \frac{P_{n}-P_{n-1}}{Q_{n-1}} \right \rfloor \\
			& = \left \lfloor Y_{n-1} \right \rfloor + \left \lfloor \frac{\sum_{j=0}^{g-1}\pi_{n}^{(j)}X^j-\sum_{j=0}^{g-1}\pi_{n-1}^{(j)}X^j}{Q_{n-1}} \right \rfloor \\
			& = a_{n-1}.
		\end{aligned}
	\end{equation*}
\end{proof}

\begin{theorem}	\label{pairG}
	Given $\{P_0,Q_0,Q_{-1}\}$ satisfying $P_{0}^2+Q_{0}Q_{-1} = Y^2$, the two pairs of sequences $\{Y_n=r_nY+s_n|r_n,s_n\in \mathbb{C}(X),r_n \ne 0 \}_{n\in \mathbb{Z}}$, $\{a_n\}_{n \in \mathbb{Z}}$ and $\{\Tilde{Y}_n=\Tilde{r}_nY+\Tilde{s}_n|\Tilde{r}_n,\Tilde{s}_n\in \mathbb{C}(X),\Tilde{r}_n \ne 0 \}_{n\in \mathbb{N}}$ and $\{\Tilde{a}_n\}_{n \in \mathbb{N}}$ defined by \eqref{choiceg1} and \eqref{choiceg2} satisfy
	\begin{enumerate}[(i)]
		\item $\Tilde{a}_{n}=a_{n-1},\quad \forall\, n \in \mathbb{Z}$;
		\item  for any integer $n$, $(Y_n,\Tilde{Y}_n)$ is a quadratic orthogonal pair, that is, $\tilde Y_nY_n^*=-1$.
	\end{enumerate}
\end{theorem}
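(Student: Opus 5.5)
Both items follow from the definitions \eqref{choiceg1}, \eqref{choiceg2} together with the conserved quantity \eqref{conserved}.

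Part (i) holds by construction, since \eqref{choiceg2} sets $\Tilde{a}_n:=a_{n-1}$. What carries content is that this $\Tilde{a}_n$ is the correct partial quotient, i.e.\ $\Tilde{a}_n=\left\lfloor \Tilde{Y}_n\right\rfloor$. That is already Lemma \ref{Ylemma1G}(ii), so I will cite it. I will also record that $Y_{n-1}$ and $\Tilde{Y}_n$ have the same denominator $Q_{n-1}$. Their numerators $Y+P_{n-1}$ and $Y+P_n$ differ by a polynomial of degree at most $g-1$, because $P_n$ and $P_{n-1}$ share the leading part $A(X)$. This difference is strictly smaller in degree than $Q_{n-1}$, which has degree $g$, so the floors coincide.

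For part (ii), I will compute $\Tilde{Y}_n Y_n^*$ directly. Applying the Galois conjugation $Y\mapsto -Y$ to $Y_n$ gives $Y_n^*=(-Y+P_n)/Q_n$. Then
\begin{equation*}
\Tilde{Y}_nY_n^*=\frac{(Y+P_n)(P_n-Y)}{Q_{n-1}Q_n}=\frac{P_n^2-Y^2}{Q_nQ_{n-1}}.
\end{equation*}
By \eqref{conserved}, $P_n^2-Y^2=-Q_nQ_{n-1}$, so the product equals $-1$ for every $n\in\mathbb{Z}$. This uses the fact that \eqref{conserved} propagates to all integers $n$, forward and backward. That follows from the invariance of $P_n^2+Q_nQ_{n-1}$ under \eqref{iteration} and from the invertibility of the iteration, which is well defined whenever $u_n\neq 0$.

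The only point needing care is nonvanishing. We need $Q_n, Q_{n-1}\neq 0$ so the quotients make sense, and $\Tilde{r}_n\neq0$, which is automatic since $\Tilde{r}_n=1/Q_{n-1}$. Both follow from the standing assumption $u_n\neq 0$, which is the non-degeneracy condition already built into the construction. I do not expect any real obstacle: once \eqref{conserved} is invoked, the argument is a one-line algebraic identity.
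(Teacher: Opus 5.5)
Your proposal is correct and follows the paper's own argument. Part (i) is immediate from the definition $\Tilde{a}_n:=a_{n-1}$ in \eqref{choiceg2}, and part (ii) is the identity $\Tilde{Y}_nY_n^*=(P_n^2-Y^2)/(Q_nQ_{n-1})=-1$ coming from \eqref{conserved}. Your additional remarks are accurate but go beyond what the theorem needs: the floor computation is Lemma \ref{Ylemma1G}(ii), and the propagation of \eqref{conserved} to all $n\in\mathbb{Z}$ is already part of the paper's setup.
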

\begin{proof}
	The conclusion immediately follows  from the explicit construction \eqref{choiceg1} and \eqref{choiceg2} and the conserved quantity \eqref{conserved}.
\end{proof}

{With Lemma \ref{Ylemma0G}, Lemma \ref{Ylemma1G} and Theorem \ref{pairG}, one can go through Section \ref{sec:Hankelexpressions} and Section \ref{sec:connect} in the case of any genus $g$ and obtain the generalized version of all results of Section \ref{sec:Hankelexpressions} and Theorem \ref{rela_n-n-1} and Corollary \ref{coro_tau} without much effort. However, Lemma \ref{lemmad} and Corollary \ref{corosomos4} will be replaced by a much more complicated version that is still unknown in general. Even so, the notion of quadratic orthogonal pairs is still valid in resolving the mismatch problem. As an example to show how this works in higher genus, we reconsider the curve of genus 2 examined in \cite[Example 4.3 and 4.5]{hone2021} by Hone:
\begin{equation*}
        Y^2 = (X^3-5X-1)^2-4(X^2+2X+3).
\end{equation*}
 The generating function for the positive half sequence he picked on this curve is denoted by $G$ whose explicit formula and series expansion at $X \to \infty$ and $Y \sim  X^3$ are given by
\begin{equation*}
    \begin{aligned}
        G&=\frac{X+\frac{1}{2}+\frac{1}{4}(X^3-5X-1-Y)}{X^2+\frac{1}{2}X-\frac{3}{2}} \\
        &= X^{-1} + 0X^{-2}+ 2X^{-3} + 0X^{-4}+7X^{-5} + 2X^{-6} + \cdots, \\
        &= s_0 X^{-1} + s_1 X^{-2} + s_2 X^{-3} + \cdots,
    \end{aligned}
\end{equation*}
and the moments $\{s_{j}\}_{j \in \mathbb{N}}$ are recursively produced by
\begin{equation*}
    s_j = s_{j-2} - s_{j-3} + 2 \sum_{i=0}^{j-2} s_i s_{j-2-i} + \sum_{i=0}^{j-3} s_i s_{j-3-i} - 3 \sum_{i=0}^{j-4} s_i s_{j-4-i}, \quad j \geq 3,
\end{equation*}
with initial values \(s_0 = 1\), \(s_1 = 0\), \(s_2 = 2\). The corresponding sequence of Hankel determinants determined by $G$ now reads:
\begin{equation}  \label{somos8positive}
    1,1,2,6,31,319,5810,147719,8526736,\ldots.
\end{equation}
It was shown in \cite[Theorem 5.5, Example 5.6]{hone2021} with the use of computer algebra that this sequence is the positive half of a particular Somos-8 sequence:
\begin{equation*}
    7\tau_{n+8}\tau_n + 137\tau_{n+7}\tau_{n+1} + 2504\tau_{n+6}\tau_{n+2} - 43424\tau_{n+5}\tau_{n+3} - 26959\tau_{n+4}^2 = 0.
\end{equation*}
To obtain the negative half sequence, he picked another function denoted by $G^{\dagger}$ whose explicit formula and series expansion at $X \to \infty$ and $Y \sim  X^3$ is given by
\begin{equation*}
    \begin{aligned}
        G^{\dagger} &= -\frac{1}{4}\left( \frac{-Y+X^3-\frac{5}{2}X+\frac{1}{2}}{X^2+\frac{1}{2}X -\frac{3}{2}} \right) \\
        &= -\frac{5}{8}X^{-1} - \frac{1}{16}X^{-2} - \frac{45}{32}X^{-3} - \frac{25}{64}X^{-4} - \frac{757}{128}X^{-5} - \frac{801}{256}X^{-6} - \cdots, \\
        &= s_0^{\dagger} X^{-1} + s_1^{\dagger} X^{-2} + s_2^{\dagger} X^{-3} + \cdots,
    \end{aligned}
\end{equation*}
as in \cite[Example 4.5]{hone2021}. Here we use $-Y$ to replace the $Y$ there for consistency. The recursion for the moments $\{s^{\dagger}_{j}\}_{j \in \mathbb{N}}$ is:
\begin{equation*}
    s_j^\dagger = \frac{5}{2} s_{j-2}^\dagger - \frac{1}{2} s_{j-3}^\dagger - 2 \sum_{i=0}^{j-2} s_i^\dagger s_{j-2-i}^\dagger - \sum_{i=0}^{j-3} s_i^\dagger s_{j-3-i}^\dagger + 3 \sum_{i=0}^{j-4} s_i^\dagger s_{j-4-i}^\dagger, \quad j \geq 3,
\end{equation*}
with initial values \( s_0^\dagger = -5/8, s_1^\dagger = -1/16, s_2^\dagger = 45/32 \). The corresponding sequence of Hankel determinants determined by $G^{\dagger}$ reads:
\begin{equation}  \label{somos8negative}
    1,-\frac{5}{2^3}, \frac{7}{2^3}, -\frac{303}{2^7}, \frac{4091}{2^9}, -\frac{63805}{2^{10}}, \frac{3496637}{2^{12}}, \ldots.
\end{equation}
Clearly, one cannot simply put the sequences \eqref{somos8positive} and \eqref{somos8negative} together to derive a bilateral Somos-8 sequence, therefore a gauge transformation of the form \cite[ (4.31)]{hone2021} is needed. However, based on the notion of quadratic orthogonal pairs that $(-\Tilde{G})^{-1} (G^{-1})^*=-1$, it is rather easy to find the precise generating function $\Tilde{G}$ whose sequence of Hankel determinants gives exactly the corresponding negative sequence. More precisely, we have
\begin{equation*}
    \begin{aligned}
        \Tilde{G} = \frac{1}{G^*} &= \frac{X^2+\frac{1}{2}X-\frac{3}{2}}{X+\frac{1}{2}+\frac{1}{4}(X^3-5X-1+Y)} \\
        &= 2 X^{-1}+ 1 X^{-2} + 3X^{-3} + 3X^{-4} + 11 X^{-5}+ 14X^{-6} + 54X^{-7} + 81X^{-8} + 300 X^{-9}+ \cdots,\\
        &= \Tilde{s}_{0} X^{-1} + \Tilde{s}_{1} X^{-2} + \Tilde{s}_{2} X^{-3} +\cdots,
    \end{aligned}
\end{equation*}
where the moments $\{\Tilde{s}_{j}\}_{j \in \mathbb{N}}$ are recursively produced by
\begin{equation*}
    \Tilde{s}_j = \Tilde{s}_{j-2} - \Tilde{s}_{j-3} + \sum_{i=0}^{j-2} \Tilde{s}_i \Tilde{s}_{j-2-i}  -\sum_{i=0}^{j-4} \Tilde{s}_i \Tilde{s}_{j-4-i}, \quad j \geq 3,
\end{equation*}
with initial values $ \Tilde{s}_0 = 2, \Tilde{s}_1 = 1, \Tilde{s}_2 = 3 $, and the corresponding sequence of Hankel determinants determined by $\Tilde{G}$ reads:
\begin{equation}  \label{somos8negativeour}
1,2,5,28,303,4091,127610,6993274,562196701,89628659609,26028173673848, \ldots.
\end{equation}
Combining the sequences \eqref{somos8positive} and \eqref{somos8negativeour} will exactly yield the doubly infinite sequence presented in \cite[(4.32)]{hone2021}, that is,
\begin{equation} \label{somos8bilateral}
    \begin{aligned}     &\ldots,26028173673848,89628659609,562196701,6993274,127610,4091,\\&\qquad\qquad\qquad\qquad\qquad\qquad\qquad\,303,28,5,2,1, 1,2,6,31,319,5810,147719,8526736,\ldots.
    \end{aligned}
\end{equation}
It is also noted that, using our recursion for the moments $\{s_{j}\}_{j \in \mathbb{N}}$ and $\{\Tilde{s}_{j}\}_{j \in \mathbb{N}}$, one can immediately see that the doubly infinite sequence \eqref{somos8bilateral} consists entirely of integers. This is not obvious when using the recursion for $\{s^{\dagger}_{j}\}_{j \in \mathbb{N}}$ and applying the gauge transformation.
}
\small
\bibliographystyle{abbrv}
\bibliographystyle{plain}

\begin{thebibliography}{10}

\bibitem{adams1980}
{
W. Adams and M. Razar. Multiples of points on elliptic curves and continued fractions. \newblock {\em Proc. London
Math. Soc.}, 41:481--498, 1980.
}


\bibitem{barry2010generalized}
P.~Barry.
\newblock Generalized {C}atalan numbers, {H}ankel transforms and {S}omos-4
  sequences.
\newblock {\em J. Integer Seq.}, 13(7), 2010.

\bibitem{carroll2004cube}
G.~Carroll and D.~Speyer.
\newblock The cube recurrence.
\newblock {\em Electron. J. Combin.}, 11:R73, 2004.

\bibitem{chang2012conjecture}
X.~Chang and X.~Hu.
\newblock A conjecture based on {S}omos-4 sequence and its extension.
\newblock {\em Linear Algebra Appl.}, 436(11):4285--4295, 2012.

\bibitem{chang2015hankel}
X.~Chang, X.~Hu, and G.~Xin.
\newblock Hankel determinant solutions to several discrete integrable systems
  and the {L}aurent property.
\newblock {\em SIAM J. Discrete Math.}, 29(1):667--682, 2015.

\bibitem{eager2012colored}
R.~Eager and S.~Franco.
\newblock {Colored BPS pyramid partition functions, quivers and cluster
  transformations}.
\newblock {\em J. High Energy Phys.}, 2012(9):1--44, 2012.

\bibitem{everest2003recurrence}
G.~Everest, A.~van der~Poorten, I.~Shparlinski, and T.~Ward.
\newblock {\em Recurrence sequences}, volume 104.
\newblock Mathematical Surveys and Monographs, 104. American Mathematical
  Society, Providence, R.I., 2003.

\bibitem{everest2005introduction}
G.~Everest and T.~Ward.
\newblock {\em An introduction to number theory}.
\newblock Springer, Berlin, 2005.

\bibitem{fedorov2016sigma}
Y.~N. Fedorov and A.~Hone.
\newblock {Sigma-function solution to the general Somos-6 recurrence via
  hyperelliptic Prym varieties}.
\newblock {\em J. Integrable Systems}, 1(1):xyw012, 2016.

\bibitem{fomin2002cluster}
S.~Fomin and A.~Zelevinsky.
\newblock Cluster algebras {I}: Foundations.
\newblock {\em J. Amer. Math. Soc.}, 15(2):497--529, 2002.

\bibitem{fomin2002laurent}
S.~Fomin and A.~Zelevinsky.
\newblock The {L}aurent {P}henomenon.
\newblock {\em Adv. Appl. Math.}, 28(2):119--144, 2002.

\bibitem{Gale1991strange}
D.~Gale.
\newblock The strange and surprising saga of the {S}omos sequences.
\newblock {\em Math. Intelligencer}, 13(1):40--42, 1991.

\bibitem{han2025}
G. Han and E. Pedon. Hankel continued fractions and Hankel determinants for $ q $-deformed metallic numbers. arXiv:2502.05993, 2025.

\bibitem{hone2005elliptic}
A.~Hone.
\newblock Elliptic curves and quadratic recurrence sequences.
\newblock {\em Bull. London Math. Soc.}, 37(02):161--171, 2005.

\bibitem{hone2007sigma}
A.~Hone.
\newblock Sigma function solution of the initial value problem for {S}omos 5
  sequences.
\newblock {\em Trans. Amer. Math. Soc.}, 359(10):5019--5035, 2007.

\bibitem{hone2010analytic}
A.~Hone.
\newblock Analytic solutions and integrability for bilinear recurrences of
  order six.
\newblock {\em Appl. Anal.}, 89(4):473--492, 2010.

\bibitem{hone2021}
A.~Hone.
\newblock {Continued fractions and Hankel determinants from hyperelliptic
  curves}.
\newblock {\em Commun. Pure Appl. Math.}, 74(11):2310--2347, 2021.

\bibitem{hone2022heron}
A.~Hone.
\newblock {Heron triangles with two rational medians and Somos-5 sequences}.
\newblock {\em European J. Math.}, 8(4):1424--1486, 2022.

\bibitem{hone2023casting}
A.~Hone.
\newblock Casting light on shadow \uppercase{s}omos sequences.
\newblock {\em Glasgow Math. J.}, 65(S1):S87--S101, 2023.

\bibitem{hone2018some}
A.~Hone, T.~Kouloukas, and G.~Quispel.
\newblock {Some integrable maps and their Hirota bilinear forms}.
\newblock {\em J. Phys. A: Math. Theor.}, 51(4):044004, 2018.

\bibitem{hone2017reductions}
A.~Hone, T.~Kouloukas, and C.~Ward.
\newblock {On reductions of the Hirota-Miwa equation}.
\newblock {\em SIGMA. Symmetry, Integrability and Geom. Methods Appl.}, 13:057,
  2017.

\bibitem{hone2023family}
A.~Hone, J.~Roberts, and P.~Vanhaecke.
\newblock {A family of integrable maps associated with the Volterra lattice}.
\newblock {\em Nonlinearity}, 37(9):095028, 2024.

\bibitem{krattenthaler2005advanced}
C.~Krattenthaler.
\newblock Advanced determinant calculus: a complement.
\newblock {\em Linear Algebra Appl.}, 411:68--166, 2005.

\bibitem{krichever1998elliptic}
I. Krichever, P. Wiegmann and A. Zabrodin.
 Elliptic solutions to difference non-linear equations and related many-body problems.
{\em Comm. Math. Phys.}, 193:373--396, 1998.

\bibitem{malouf1992integer}
J.~Malouf.
\newblock An integer sequence from a rational recursion.
\newblock {\em Discrete Math.}, 110(1-3):257--261, 1992.

\bibitem{ovsienko2025continued}
V. Ovsienko and E. Pedon. Continued fractions for $q$-deformed real numbers, $\{-1, 0, 1\}$-Hankel determinants, and Somos--Gale--Robinson sequences. Adv.  Appl. Math., 162:102788, 2025.

\bibitem{quispel1988integrable}
G.~Quispel, J.~Roberts, and C.~Thompson.
\newblock Integrable mappings and soliton equations.
\newblock {\em Phys. Lett. A}, 126(7):419--421, 1988.

\bibitem{robinson1992periodicity}
R.~Robinson.
\newblock Periodicity of {S}omos sequences.
\newblock {\em Proc. Amer. Math. Soc}, 116(3):613--619, 1992.

\bibitem{somos1989problem}
M.~Somos.
\newblock Problem 1470.
\newblock {\em Crux Mathematicorum}, 15:208, 1989.

\bibitem{speyer2007perfect}
D.~Speyer.
\newblock Perfect matchings and the octahedron recurrence.
\newblock {\em J. Algebraic Combin.}, 25(3):309--348, 2007.

\bibitem{swart2003elliptic}
C.~Swart.
\newblock {\em Elliptic curves and related sequences}.
\newblock PhD thesis, University of London, 2003.

\bibitem{van3}
A.J. van der~Poorten.
\newblock Hyperelliptic curves, continued fractions, and {S}omos sequences.
\newblock In {\em Dynamics $\&$ stochastics}, pages 212--224. IMS Lecture Notes
  Monogr. Ser., 48. Inst. Math. Statist., Beachwood, OH, 2006.

\bibitem{van2}
A.J. van der~Poorten.
\newblock Elliptic curves and continued fractions.
\newblock {\em J. Integer Seq.}, 8(2):Article 05.2.5, 19 pp., 2005.

\bibitem{van1}
A.J. van der~Poorten.
\newblock Determined sequences, continued fractions, and hyperelliptic curves.
\newblock In {\em International Algorithmic Number Theory Symposium}, pages
  393--405. Springer, 2006.
  
  \bibitem{van2006recurrence}
   A.J. van der Poorten and C.S. Swart. Recurrence relations for elliptic sequences: every Somos 4 is a Somos-$k$, Bull. Lond. Math. Soc. 38:546--554, 2006.
  
  \bibitem{wang2024sufficient}
  Y. Wang and Z. Zhang. Sufficient condition for $(\alpha, \beta)$ Somos 4 Hankel determinants. \newblock {\em Discrete Math.}, 347: 113937, 2024.

\bibitem{wall1}
H.~Wall.
\newblock Note on the expansion of a power series into a continued fraction.
\newblock {\em Bull. Amer. Math. Soc.}, 51(11):97--105, 1945.

\bibitem{wall2}
H.~Wall.
\newblock {\em Analytic theory of continued fractions}.
\newblock Van Nostrand, New York, 1948.

\bibitem{ward1948memoir}
M.~Ward.
\newblock Memoir on elliptic divisibility sequences.
\newblock {\em Amer. J. Math.}, 70(1):31--74, 1948.

\bibitem{xin2009proof}
G.~Xin.
\newblock Proof of the {S}omos-$4$ {H}ankel determinants conjecture.
\newblock {\em Adv. Appl. Math.}, 42(2):152--156, 2009.

\bibitem{zannier2019hyper}
{
U. Zannier. Hyperelliptic continued fractions and generalized Jacobians. \newblock {\em Amer. J. Math.}, 141(1):1--40, 2019.
}
\end{thebibliography}

\end{document}